\documentclass[final]{siamltex}

\usepackage{amsmath,amssymb} 
\usepackage{mathrsfs}
\usepackage{graphicx}
\begin{document}

\title{Iterative rounding approximation algorithms for degree-bounded
node-connectivity network design\thanks{A preliminary version of this paper appeared in proceedings of the 53rd Annual IEEE Symposium on Foundations of Computer Science (FOCS 2012).}}

\author{Takuro Fukunaga\thanks{National Institute of Informatics,
Japan. JST, ERATO, Kawarabayashi Large Graph Project} \and Zeev Nutov\thanks{Open University of Israel} \and R.~Ravi\thanks{Carnegie Mellon University}}

\maketitle

\newcommand{\Qset}{\mathbb{Q}}
\newcommand{\Rset}{\mathbb{R}}
\newcommand{\Zset}{\mathbb{Z}}
\newcommand{\Vset}{\mathcal{V}}
\newcommand{\Fset}{\mathcal{F}}
\newcommand{\Lset}{\mathcal{L}}
\newcommand{\Mset}{\mathcal{M}}
\newcommand{\Eset}{\mathcal{E}}
\renewcommand{\span}{{\rm span}}
\newcommand{\comp}[1]{\overline{#1}}

\newcommand{\SN}{{\sf Survivable Network}}
\newcommand{\DB}{{\sf Degree-Bounded}}

\newcommand{\EC}{{\sf  Edge-Connectivity}}
\newcommand{\ElC}{{\sf Element-Connectivity}}
\newcommand{\NC}{{\sf  Node-Connectivity}}

\newcommand{\kCS}{{\sf  $k$-Connected Subgraph}}
\newcommand{\kOS}{{\sf  $k$-Out-connected Subgraph}}
\newcommand{\SkCS}{{\sf Subset $k$-Connected Subgraph}}

\newcommand{\fCS}{{\sf $f$-Connected Subgraph}}

\begin{abstract}
We consider the problem of finding a minimum edge cost subgraph of a graph satisfying both
given node-connectivity requirements and degree upper bounds on nodes.  We present an
iterative rounding algorithm of the biset LP relaxation for this problem.
For directed graphs and $k$-out-connectivity requirements from a root, 
our algorithm computes a solution that is a 2-approximation on the cost, and 
the degree of each node $v$ in the solution is at most $2b(v) + O(k)$
 where $b(v)$ is the degree upper bound on $v$.
For undirected graphs and element-connectivity requirements with maximum connectivity requirement $k$,
our algorithm computes a solution
that is a $4$-approximation on the cost, and 
the degree of each node $v$ in the solution is at most $4b(v)+O(k)$.
These ratios improve the previous $O(\log k)$-approximation on the cost and
$O(2^k b(v))$ approximation on the degrees.
Our algorithms can be used to improve approximation ratios
for other node-connectivity problems such as
undirected $k$-out-connectivity, directed and undirected $k$-connectivity,
and undirected rooted $k$-connectivity and subset $k$-connectivity.
\end{abstract}

\section{Introduction} \label{s:intro}

\subsection{Problem definition} \label{ss:problems}

We consider the problem of finding a minimum edge cost subgraph that satisfies
both degree-bounds on nodes and certain connectivity requirements between nodes.
More formally, the problem is defined as follows.

\vspace*{0.2cm}

\begin{center}
\fbox{
\begin{minipage}{0.960\textwidth}
\noindent
{\bf \em Degree-Bounded Survivable Network} \\
A directed/undirected graph $G=(V,E)$ with edge costs $c\colon E \rightarrow \Rset_+$,
connectivity requirements $r\colon V \times V \rightarrow \Zset_+$,
and degree-bounds $b\colon B \rightarrow \Zset_+$ on a subset $B$ of $V$ are given.
The goal is to find a minimum cost edge set $F \subseteq E$ such that in the 
subgraph $(V,F)$ of $G$, the $uv$-connectivity is at least $r(u,v)$ for any $(u,v) \in V \times V$, 
and the out-degree/degree of each $v \in B$ is at most $b(v)$.
\end{minipage}
}
\end{center}

\vspace*{0.2cm}

In the case of digraphs, our algorithms easily extend to the case
when we also have {\em in-degree} bounds $b^-\colon B^-\rightarrow \Zset_+$ on $B^- \subseteq V$,
and require that the in-degree of each $v \in B^-$ is at most $b^-(v)$.

A node $v \in V$ is called {\em terminal} if there exists
$u \in V \setminus \{v\}$ such that $r(u,v) > 0$ or $r(v,u) > 0$.
We let $T$ denote the set of terminals.
We denote $\max_{u,v \in V}r(u,v)$ by $k$ and $|V|$ by $n$ throughout the paper.

If  $G$ is an undirected graph, $B=V$, $b(v)=2$ for all $v \in B$,
and a solution is required to be a connected spanning subgraph, then we get
the {\sf Hamiltonian Path} problem, and hence it is NP-hard even to find
a feasible solution. Therefore we consider bicriteria
approximations by relaxing the degree-bounds.
We say that an algorithm for {\DB} {\SN} is
$(\alpha, \beta(b(v)))$-approximation, or that it has ratio $(\alpha, \beta(b(v)))$,
for $\alpha \in \Rset_+$ and a function $\beta\colon  \Zset_+\rightarrow \Rset_+$,
if it always outputs a solution such that its cost is at most $\alpha$ times the optimal value,
and the degree  (the out-degree, in the case of directed graphs)
of each $v \in B$ is at most $\beta(b(v))$,
for any instance which admits a feasible solution.

Notice that {\DB} {\SN} includes the problem of
finding a minimum cost subgraph of required connectivity minimizing the maximum
degree.  This can be done by letting $B=V$, and defining $b(v)$ as the
uniform bound on the optimal value for all $v \in B$, which can be
computed by binary search.
For LP based algorithms, such as ours,
when a lower bound is obtained by solving an LP relaxation, 
these algorithms also establish a ``relaxed'' integrality gap for
the LP relaxation (relaxed since the solutions violate the exact degree requirements).

In this paper, we are interested in node-connec\-tivity and element-connectivity requirements.
The node-connectivity $\kappa(u,v)$ is the maximum number of $(u,v)$-paths
that are pair-wise internally (node) disjoint.
The definition of element-connectivity assumes that a terminal set
$T$ is given. The element-connectivity $\lambda^T(u,v)$
between two terminals $u,v \in T$ is the maximum number of $(u,v)$-paths
that are pair-wise disjoint in edges and in non-terminal nodes.
The two main problems we consider are as follows.

\vspace*{0.2cm}

\begin{center}
\fbox{
\begin{minipage}{0.960\textwidth}
\noindent
{\bf \em Degree-Bounded $k$-Out-connected Subgraph} \\
This is a particular case of {\DB} {\SN} when it is required that $(V,F)$ 
is $k$-outconnected from a given root $s$,
namely, when $\kappa(s,v) \geq k$ for each $v \in V\setminus \{s\}$.
\end{minipage}
}
\end{center}

\vspace*{0.2cm}

\begin{center}
\fbox{
\begin{minipage}{0.960\textwidth}
\noindent
{\bf \em Degree-Bounded Element-Connectivity Survivable Network} \\
This is a particular case of {\DB} {\SN} when
the input graph is {\em undirected}, and it is required that
$\lambda^T(u,v) \geq r(u,v)$ for each $u,v\in T$, where $T \subseteq V$ is a given terminal set.
\end{minipage}
}
\end{center}

\vspace*{0.2cm}

Our main results for {\DB} {\kOS} is for directed graphs. 
We also present similar results for undirected graphs,
but these are derived from the ones for the directed case.
Another fundamental problem that we consider for both directed and undirected graphs is as follows.

\vspace*{0.2cm}

\begin{center}
\fbox{
\begin{minipage}{0.960\textwidth}
\noindent
{\bf \em Degree-Bounded $k$-Connected Subgraph} \\
This is a particular case of {\DB} {\SN} where
it is required that $(V,F)$ is $k$-connected, namely, that $\kappa(u,v) \geq k$ for each $u,v\in V$.
\end{minipage}
}
\end{center}

\vspace*{0.2cm}

Other special cases of {\DB} {\SN} on undirected graphs are defined
according to the connectivity requirements, as follows.

\begin{list}{\textbullet}{\leftmargin=1.5em}
 \item
       {\bf {\em Degree-Bounded Node-Connectivity Survivable Network}}
       requires $\kappa(u,v) \geq r(u,v)$ for each $u,v\in V$.
 \item {\bf {\em Degree-Bounded Rooted Survivable Network}}
       is a special case of {\em Degree-Bounded Node-Connectivity Survivable Network}
       where a {\em root} node $s$ is specified,
       and $r(u,v)=0$ holds if $u \neq s$ and $v \neq s$.
 \item {\bf {\em Degree-Bounded Subset $k$-Connected Subgraph}}
       is a special case of {\em Degree-Bounded Node-Connectivity Survivable Network}
       where $r(u,v)=k$ if $\{u,v\} \subseteq T$, and $r(u,v)=0$ otherwise,
       for a given terminal set $T\subseteq V$.
\end{list}

\subsection{Previous work} \label{ss:prev-work}

\subsubsection{{\EC} {\SN}}

{\SN} without degree-bounds is a typical combinatorial optimization
problem, that was studied extensively; c.f. \cite{Frank2011} for a survey on exact algorithms, and 
\cite{Kortsarz2007} for a survey on approximation algorithms for
various {\SN} problems and their classification  w.r.t. costs and requirements.
One of the most important methods for these problems is iterative rounding, that was invented in the
context of a $2$-approximation algorithm by Jain~\cite{Jain2001}.  He
showed that every basic optimal solution to an LP relaxation for the
undirected {\EC} {\SN} always has a variable of value at least $1/2$;
see \cite{Nagarajan2010} for a simplified proof.
The $2$-approximation algorithm is obtained by repeatedly rounding up such
variables and iterating the procedure until the rounded subgraph is feasible.

{\DB} {\SN}, even with edge-connectivity requirements, was regarded as a difficult problem
for a long time because of the above-mentioned hardness on feasibility.
A breakthrough was given by Lau, Naor, Salavatipour, and
Singh~\cite{Lau2009} and Singh and Lau~\cite{Singh2007}.  They gave a
$(2,2b(v)+3)$-approxima\-ti\-on for the {\sf Degree-Bounded Edge-Connectivity \SN} problem, and a
$(1,b(v)+1)$-approximation algorithm for the {\sf Degree-Bounded Spanning Tree} problem.
The former result was improved (for large $b(v)$) to a $(2,b(v)+6k+3)$-approximation
by Lau and Singh~\cite{Lau2008} afterwards, which was followed by
further improvements
by Louis and Vishnoi~\cite{LouisV2010} and 
Lau and Zhou~\cite{LauZ2014}.

After their work, many efficient algorithms have been
proposed for various types of {\DB} {\EC} {\SN} problems,
such as directed {\DB} {\kOS} problems~\cite{Bansal2009},
matroid base and submodular flow problems~\cite{Kiraly2008},
and matroid intersection and optimization over lattice polyhedra~\cite{Bansal2010}.
All of them are based on ite\-rative rounding.
For applying iterative rounding to a problem with degree-bounds,
we need to show that
every basic optimal solution to an LP relaxation has a high
fractional variable or the subgraph induced by its support has a low
degree node on which a degree-bound is given.  Once this property is
proven, a bicriteria approximation algorithm can be obtained by
repeatedly rounding up a high fractional variable or dropping the
degree-bound on a low degree node.  See \cite{Lau2011} for a
survey on the iterative rounding method.

\subsubsection{{\ElC} and {\NC} {\SN} without degree-bounds}

Fleischer, Jain, and Williamson~\cite{Fleischer2006} showed that iterative rounding
achieves ratio $2$ for {\ElC} {\SN}, and also for {\NC} {\SN} with $k=2$.
Aazami, Cheriyan, and Laekhanukit~\cite{Aazami2010} presented an
instance of undirected {\kCS} (without degree-bounds) for
which the basic optimal solution to the standard LP relaxation has all
variables of value $O\left(\frac{1}{\sqrt{k}}\right)$.  
Their instance belongs to a special case called the {\em augmentation version}, in which the given
graph has a $(k-1)$-connected subgraph of cost zero.
On the other hand, several works showed that this augmentation version can be decomposed into
a small number $p$ of problems similar to {\kOS} each.
The bound on $p$ was subsequently improved \cite{Cheriyan2003,Fakcharoenphol2008,Kortsarz2005},
culminating in the currently best known bound $O\left(\log \frac{n}{n-k}\right)$ \cite{Nutov2009a},
that applies for both directed and undirected graphs.
When one applies this method for the general version, an additional factor
of $O(\log k)$ is invoked, giving the approximation ratio
$O\left(\log k \log \frac{n}{n-k}\right)$ \cite{Nutov2009a}.
Cheriyan and V\'egh \cite{CVegh2012} showed that, for undirected graphs with $n \geq k^3(k-1)+k$,
this $O(\log k)$ factor can be saved: After solving
only two {\kOS} instances,
iterative rounding gives a $2$-approximation by the work of \cite{Fleischer2006,Jackson2005}.
This gives ratio $6$ for undirected graphs with $n \geq k^3(k-1)+k$.

The decomposition approach was also used for other {\SN} problems.
In \cite{Nutov2009b} it is shown that the augmentation version of
{\sf Rooted} {\NC} {\SN} can be decomposed into $p=O(k)$ instances of
a problem that is similar to {\ElC} {\SN},
while in \cite{Chuzhoy2009} it is shown that
{\NC} {\SN} can be decomposed into $p=O(k^3 \log n)$ instances of {\ElC} {\SN}.
In \cite{Laekhanukit2011,Nutov2011}, it is shown that the augmentation version of
{\SkCS} with $k\leq (1-\epsilon)T$ and $0<\epsilon <1$
can be decomposed into $\frac{1}{\epsilon} O(\log k)$ instances of {\sf Rooted} {\SN}.

Summarizing, many {\SN} problems can be decomposed into {\kOS} and
{\ElC} {\SN} problems.
Thus algorithms for various {\SN} problems 
can be derived from those
for {\kOS} and
{\ElC} {\SN} problems.

\subsubsection{{\DB} {\ElC} and {\NC} {\SN}}

Despite the success of iterative rounding,
{\DB} {\SN} with node- and element-connectivity requirements
still remain difficult to address with this method.
Both \cite{Lau2009} and \cite{Lau2008} mention that their algorithms for edge-connectivity 
extend to element-connectivity, but they assumed that degree-bounds were given on terminals only.
In \cite{Lau2009} it is also shown that undirected {\DB} {\SkCS} with $k=\Omega(n)$ admits no
$2^{\log^{1-\epsilon}n}b(v)$ degree approximation
unless ${\rm NP} \subseteq {\rm DTIME}(n^{{\rm polylog}(n)})$.
For the {\DB} {\kCS} problem without costs,
Feder, Motwani, and Zhu~\cite{Feder2006} presented
an $O(k\log n \cdot b(v))$-approximation algorithm, which runs in $n^{O(k)}$ time.
Khandekar, Kortsarz, and Nutov \cite{Khandekar2011}
proposed a $(4,6b(v)+6)$-approximation algorithm for {\sf Degree-Bounded $2$-Connected Subgraph},
using iterative rounding.
Nutov~\cite{Nutov2012} extended the idea of \cite{Khandekar2011} to obtain ratio 
$(O(\log k), O(2^k) \cdot b(v))$ for {\DB} {\kOS} and {\DB} {\ElC} {\SN}.

As in the problems without degree-bounds,
many {\SN} problems with degree-bounds are decomposed into {\DB} {\kOS} and
{\ElC} {\SN} problems.
However, as indicated in \cite{Nutov2012}, compared to edge-connectivity problems,
there is a substantial difficulty in proving that
iterative rounding achieves a good result for these problems.
We resolve this difficulty by introducing several novel ideas.
Moreover, we believe that it is worthwhile investigating the iterative rounding approach for
node-connectivity requirements.
One reason is that
iterative rounding seems to be a promising approach for {\DB} {\SN} problems,
as we demonstrate in this paper.
A second reason is that it may give new insights for
improving the approximability of {\NC} {\SN} problems
(without degree-bounds)
with rooted requirements, subset $k$-connectivity requirements, and general requirements.

\subsection{Our results} \label{ss:results}

\subsubsection{New analysis of iterative rounding}

We show that iterative roun\-ding works well for {\DB} {\kOS} and 
{\DB} {\ElC} {\SN} problems.
Our main results for these two problems are summarized in the following two theorems.

\begin{theorem} \label{t:kOS}
{\DB} {\sf Directed} {\kOS} admits approximation ratio
$\left(\alpha,\alpha b(v)+\left\lceil\frac{2(k-1)}{\alpha-1}\right\rceil+1\right)$ for any integer $\alpha \geq 2$.
\end{theorem}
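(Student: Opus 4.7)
The plan is to apply iterative rounding to the biset LP relaxation of {\DB} {\kOS}, which has variables $x_e \in [0,1]$ per arc, the node-connectivity cuts $x(\delta^-(\hat{S})) \geq k - |\Gamma(\hat{S})|$ over every biset $\hat{S} = (S, S^+)$ with $s \notin S^+$, $S \neq \emptyset$, and $|\Gamma(\hat{S})| < k$, and out-degree bounds $x(\delta^+(v)) \leq b(v)$ for $v \in B$. Set $\tau := \lceil 2(k-1)/(\alpha-1)\rceil + 1$. At each iteration I solve the current LP for a basic optimum $x$, delete arcs with $x_e = 0$, and then either (a) commit some arc with $x_e \geq 1/\alpha$ to the solution and pass to the residual instance (decrementing cut requirements and residual degree bounds accordingly), or (b) remove the degree bound of some $v \in B$ whose current support out-degree is at most $\alpha b(v) + \tau$. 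The algorithm halts when all connectivity constraints are met.

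Given that at least one of (a), (b) is always available, the claimed $(\alpha, \alpha b(v)+\tau)$-ratio follows by the standard induction. Every committed arc has $x_e \geq 1/\alpha$, so the total cost is at most $\alpha$ times the initial LP value. For the degree at $v \in B$: when $v$'s constraint is dropped, let $R_v$ count previously committed arcs out of $v$ and $b'(v) := b(v) - R_v$ be the residual bound; the support out-degree at $v$ is $\leq \alpha b'(v) + \tau$. Because zero-valued arcs are deleted each round, later supports are subsets of this one, so the final out-degree of $v$ is at most $R_v + \alpha b'(v) + \tau = \alpha b(v) - (\alpha - 1) R_v + \tau \leq \alpha b(v) + \tau$.

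The heart of the proof is the structural lemma that at least one of (a), (b) is always available at a basic optimum $x$. I would argue by contradiction, assuming every $x_e < 1/\alpha$ and every $v \in B'$ with a tight degree constraint has support out-degree strictly greater than $\alpha b(v) + \tau$. A standard biset-uncrossing argument produces a strongly laminar family $\mathcal{L}$ of tight bisets whose characteristic vectors span the same row space as all tight biset inequalities, yielding $|E_x| = |\mathcal{L}| + |B'|$. I would then deploy a token-counting scheme in the spirit of Jain and Singh--Lau: assign each arc of $E_x$ a budget of tokens and redistribute them so that each biset of $\mathcal{L}$ absorbs a fixed share, using that $k - |\Gamma(\hat{S})|$ is integer and the support in-arcs of $\hat{S}$ beyond those of its children account for that integer, while each $v \in B'$ absorbs $\alpha b(v) + \tau + 1$ tokens. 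Summing demands then exceeds the total supply, a contradiction.

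The main obstacle lies in producing the surplus $\tau + 1 \approx 2(k-1)/(\alpha-1) + 2$ tokens at each tight degree vertex. The arcs directly incident to $v$ contribute only $\alpha x(\delta^+(v)) = \alpha b(v)$ tokens, so the surplus must be harvested from bisets of $\mathcal{L}$ whose boundary $\Gamma$ contains $v$. Each such biset offers only a small fractional slack through its boundary, while $v$ may lie on a long chain of boundaries in $\mathcal{L}$; the factor $2/(\alpha-1)$ reflects precisely the trade-off between these boundary contributions and the edge tokens they must share with neighboring bisets in the laminar structure. Carrying out this accounting --- distinguishing bisets where $v$ lies in $S$, in $\Gamma$, or in $V \setminus S^+$, and bounding how many laminar ancestors of $v$'s incident bisets can pay boundary tokens without double-counting --- is the delicate technical step that yields the exact constant $\lceil 2(k-1)/(\alpha-1)\rceil + 1$.
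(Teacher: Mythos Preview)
Your overall framework is the paper's: iterative rounding on the biset LP, with the core being a token argument at an extreme point showing that either some arc has $x_e\geq 1/\alpha$ or some degree-constrained node has small support out-degree. Two points in your sketch, however, are genuinely off.

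First, for directed $k$-out-connectivity the relevant biset function is \emph{intersecting supermodular}, and uncrossing yields only a \emph{laminar} family $\mathcal{L}$, not a strongly laminar one; strong laminarity is what you get in the undirected skew-supermodular setting. This matters because the sharing-count bound differs: for merely laminar $\mathcal{L}$ one has $\sum_{v\in C}\max\{\Delta_{\mathcal{L}}(v),1\}\leq 2\gamma(|\mathcal{E}|-1)+|C|$ (here $\mathcal{E}$ is the set of leaves and $\gamma\leq k-1$), and that factor $2$ is exactly the $2$ in $\lceil 2(k-1)/(\alpha-1)\rceil$.

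Second, and more importantly, your token accounting is inverted. With the split putting $1-\alpha x(e)$ tail-tokens at the tail and $\alpha x(e)$ head-tokens at the head of each arc, a node $v\in C$ holds $\sum_{e\in\delta^+_E(v)}(1-\alpha x(e))=|\delta^+_E(v)|-\alpha b(v)\geq \tau+1$ tail-tokens; the surplus sits \emph{at} $v$, not on bisets whose boundary contains $v$. Each non-leaf biset $\hat S$ independently collects the positive integer $\alpha x(E^+_S)+|E^-_S|-\alpha x(E^-_S)$, and each leaf collects $\alpha f(\hat S)\geq\alpha$. The paper then case-splits. If $(\alpha-1)|\mathcal{E}|\geq |C|$, the $\alpha|\mathcal{E}|$ tokens at leaves already cover $\mathcal{E}\cup C$, done. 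If $|C|>(\alpha-1)|\mathcal{E}|$, one \emph{spends} the surplus at each $v\in C$ to pay one token to every biset that owns or shares $v$ (at most $\Delta_{\mathcal{L}}(v)+1$ of them); the feasibility of this payment is precisely the inequality $|\mathcal{E}|+\sum_{v\in C}(\Delta_{\mathcal{L}}(v)+2)\leq \alpha|\mathcal{E}|+|C|(\tau+1)$, which follows from the laminar sharing bound above together with $|C|>(\alpha-1)|\mathcal{E}|$. Your sketch reverses the direction of token flow (harvesting from bisets toward $v$), omits the case split on $|\mathcal{E}|$ versus $|C|$, and never invokes the $\Delta_{\mathcal{L}}$ count---and those three ingredients are exactly what pin down the constant $\lceil 2(k-1)/(\alpha-1)\rceil+1$. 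A smaller side issue: your residual degree bound should be $b(v)-R_v/\alpha$ rather than $b(v)-R_v$, or the residual LP can become infeasible.
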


\begin{theorem} \label{t:ElC}
{\DB} {\ElC} {\SN} admits the following approximation ratios.
\begin{itemize}
\item[\em (i)]
$\left(\alpha,\alpha b(v)+\left\lceil 4\frac{k+1}{\alpha-2}\right\rceil +4\right)$ for any integer $\alpha \geq 4$.
\item[\em (ii)]
$(\infty,2b(v)+1.5k^2 +4.5k+9)$. 
\end{itemize}
\end{theorem}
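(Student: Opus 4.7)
The plan is to use iterative rounding on a biset LP relaxation of {\DB} {\ElC} {\SN}. The LP carries a variable $x_e$ per edge, a covering constraint $x(\delta(X,X^+)) \ge r(X,X^+)$ for every biset $(X,X^+)$ derived from the element-connectivity requirements (encoded by a skew-supermodular biset function as in the usual reduction), and a packing constraint $x(\delta(v)) \le b(v)$ for every $v \in B$. Starting from the empty integral solution, each iteration resolves the residual LP, discards edges with $x_e=0$, and then performs one of two operations: (r1) round up a fractional edge with $x_e \ge 1/\alpha$ and include it in the solution, or (r2) delete the degree constraint at a vertex $v \in B$ whose current fractional-support degree is at most $\theta := \lceil 4(k+1)/(\alpha-2)\rceil + 4$.

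The main technical contribution is the structural lemma: in every basic feasible solution of the residual LP, either (r1) or (r2) is applicable. The plan is to prove it by first uncrossing the tight biset constraints into a laminar family $\Lset$ using skew-supermodularity of the element-connectivity biset function, which gives $|\mathrm{supp}(x)| = |\Lset| + |B'|$, with $B' \subseteq B$ indexing the tight degree constraints. I then run a token argument with $\alpha$ tokens per support edge: each biset in $\Lset$ must be given $2$ tokens and each $v \in B'$ must be given $\alpha$ tokens, leaving a surplus of $\alpha-2$ tokens per biset that is charged to bisets of required connectivity up to $k$; the constants $4(k+1)$ and $+4$ come from the worst-case exchange between a putative violating vertex $v\in B$ and the $k+1$ nested element-connectivity cuts meeting it. A failure of both (r1) and (r2) yields a contradiction in the token budget. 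This lemma is the main obstacle, because bisets (unlike sets) cross into three regions and the tokens must be routed around co-terminal nodes carrying up to $k$ units of required connectivity simultaneously; no existing uncrossing-plus-token scheme from the edge-connectivity world directly applies.

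Granted the structural lemma, part (i) follows by standard iterative-rounding accounting. The cost of the computed solution is at most $\alpha$ times the initial LP optimum because every rounded edge has $x_e \ge 1/\alpha$. For the degree of $v\in B$: each edge rounded while $v$'s degree constraint was still in the LP contributed at least $1/\alpha$ to the fractional degree at $v$ at the moment of its rounding, and since the residual LP always maintained $x(\delta(v)) \le b(v)$, at most $\alpha b(v)$ such edges can be rounded over the course of the algorithm; when the constraint at $v$ is finally dropped, at most $\theta$ additional incident support edges remain and can be rounded afterwards. Adding these two contributions yields $\alpha b(v) + \lceil 4(k+1)/(\alpha-2)\rceil +4$.

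For part (ii) the cost requirement is dropped, which allows the rounding threshold to be pushed down to $\alpha = 2$: (r1) now rounds edges with $x_e \ge 1/2$. The corresponding structural lemma in this regime gives support-degree at most $1.5k^2 + 4.5k+9$ at some $v\in B$ when (r1) fails, with the $k^2$ blow-up arising because the biset token demand is no longer amortized by the slack $\alpha-2$ and must instead be absorbed in full at the chosen vertex. Combining with the $2b(v)$ contribution of the edges rounded while $v$'s constraint was still active gives the bound in (ii). The cost is declared $\infty$ because, without the slack $\alpha-2$ in tokens, the argument cannot pay for the biset constraints against the LP value.
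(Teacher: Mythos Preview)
Your high-level plan---iterative rounding on the biset LP, uncrossing tight constraints into a (strongly) laminar family $\Lset$ plus a set $C\subseteq B$ of tight degree vertices with $|E|=|\Lset|+|C|$, then a token argument---is exactly the paper's approach. But several of the pieces you sketch would not work as written.

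\textbf{The structural lemma for (i).} Your single token scheme (``$\alpha$ tokens per edge, 2 per biset, $\alpha$ per vertex, surplus $\alpha-2$ charged to $k+1$ nested cuts'') does not go through. The paper instead does a two-case analysis on the ratio $|\mathcal{E}|/|C|$, where $\mathcal{E}$ is the set of leaves of $\Lset$. When $|C|\le(\alpha/2-1)|\mathcal{E}|$ a \emph{fractional} token scheme (putting $\theta x(e)$ at each endpoint and $1-2\theta x(e)$ on the edge, $\theta=\lfloor\alpha/2\rfloor$) forces an edge of value $\ge 1/\alpha$. When $|C|>(\alpha/2-1)|\mathcal{E}|$ an \emph{integral} token scheme is used, and the constant $4(\gamma+2)/(\alpha-2)$ comes not from ``$k+1$ nested cuts'' but from the bound $\sum_{v\in C}\Delta_\Lset(v)\le \gamma|\mathcal{E}|+|C|$ on how many bisets \emph{share} (have $v$ in their boundary minimally) each $v\in C$. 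Your description has neither the case split nor the sharing parameter $\Delta_\Lset$, and without them the budget does not close.

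\textbf{Degree accounting.} The sentence ``the residual LP always maintained $x(\delta(v))\le b(v)$, so at most $\alpha b(v)$ such edges can be rounded'' is false if $b(v)$ is the original bound: after a round the LP can refill $x(\delta(v))$ back up to $b(v)$ and you could round far more than $\alpha b(v)$ edges. The paper updates the constraint to the residual bound $b(v)-|\delta_J(v)|/\alpha$ after each rounding; feasibility of this residual constraint is what yields $|\delta_J(v)|\le \alpha b(v)$ at the moment the constraint is dropped.

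\textbf{Part (ii).} You are missing a key algorithmic step: since cost is irrelevant, the paper also moves into $J$, for free, any edge with \emph{no} endpoint in $B$. This guarantees that in the structural lemma every remaining edge is incident to $C$, which is essential---the token scheme for (ii) routes end-tokens through $C$ and breaks without this. The $O(k^2)$ then arises from a quite different argument than you describe: a case split on $|\mathcal{E}|\lessgtr(\gamma+4)|C|$, with separate inductive token distributions over ``black'' (containing a $C$-node) and ``white'' subtrees of $\Lset$.
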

Note that in Theorem~\ref{t:ElC}, the degree approximation in part (ii) may be better than the one 
in part (i) if $b(v)>k^2$.
The ratios in Theorems \ref{t:kOS} and \ref{t:ElC} improve the ratio $(O(\log k),O(2^k b(v)))$ of \cite{Nutov2012}.
In a preliminary version~\cite{Fukunaga2012} of this paper,
we gave a result similar to Theorem~\ref{t:kOS} with $\alpha=2$, but 
the additive term in the degree bounds is slightly improved in
Theorem~\ref{t:kOS}.
For {\DB} {\ElC} {\SN},
\cite{Fukunaga2012} also gave approximation ratios
$(4k-1,(4k-1)b(v)+O(k))$ and $(\infty,6b(v)+O(k^2))$.
Thus, part (i) of Theorem~\ref{t:ElC} improves the former ratio even if
$\alpha=4k-1$,
and part (ii) of Theorem~\ref{t:ElC} improves the coefficient of $b(v)$ in
the latter ratio.

All results in Theorems~\ref{t:kOS} and \ref{t:ElC} also bound the
integrality gaps of an LP relaxations in the same ratios.
Lau and Singh~\cite{Lau2008}
gave an example that indicates that the integrality gap of an LP
relaxation for the {\DB} {\EC} {\SN} is at least $(2, b(v)+\Omega(k))$.
In this example, all nodes are terminals, and hence the
element-connectivity is equivalent to the edge-connectivity. Thus the
integrality gap given by part (i) of Theorem~\ref{t:ElC} is tight up to a constant factor.

In \cite{Nutov2012} it is shown that ratio $(\alpha,\beta(b(v)))$ for
{\DB} {\sf Directed} {\kOS} implies ratio $(2\alpha,\beta(b(v))+k)$
for the undirected case. Thus Theorem~\ref{t:kOS} implies for the undirected case
the ratio $(2\alpha,\alpha b(v)+O(k))$ for any integer $\alpha \geq 2$.
In particular, for $\alpha=2$ the ratio is $(4,2b(v)+O(k))$.

In addition,
using known decompositions,
Theorems \ref{t:kOS} and \ref{t:ElC}
present the following results for several undirected variants of {\SN} problems.

\begin{theorem} \label{t:reductions}
{\SN} problems on undirected graphs admit the following approximation ratios for any integer $\alpha \geq 1$.
\begin{itemize}
\item[\rm (i)]
$O(k^3\log |T|) \cdot (\alpha, \alpha b(v)+k/\alpha)$ for {\DB} {\NC} {\SN}.
\item[\rm (ii)]
$O(k\log k) \cdot (\alpha, \alpha b(v)+k/\alpha)$ for {\DB} {\sf Rooted} {\SN}.
\item[\rm (iii)]
$\frac{1}{\epsilon}O(k \log^2 k) \cdot (\alpha, \alpha b(v)+k/\alpha)$
for {\DB} {\SkCS} with $k \leq (1-\epsilon)|T|$ and $0<\epsilon <1$.
\end{itemize}
\end{theorem}

\subsubsection{Improving reduction of {\DB} {\kCS} problem}
Next, we consider the {\DB} {\kCS} problem.
In \cite{Nutov2012}, it is shown that if {\DB} {\kOS} admits
approximation ratio $(\alpha,\beta(b(v)))$, then {\DB} {\kCS} admits
approximation ratio $(\alpha+O(k),\beta(b(v))+O(k^2))$.
We improve this reduction as follows.

\begin{theorem} \label{t:new-reduction}
If {\DB} {\kOS} admits approximation ratio $(\alpha,\beta(b(v)))$,
then {\DB} {\kCS} admits approximation ratio $(\mu \alpha+O(k),\beta(b(v))+O(k\sqrt{k}))$,
where $\mu=1$ for undirected graphs and $\mu=2$ for digraphs.
Consequently, {\DB} {\kCS} admits approximation ratio
$(2\mu +O(k), 2b(v)+O(k\sqrt{k}))$.
\end{theorem}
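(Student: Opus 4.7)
The plan is to reduce {\DB} {\kCS} to a small number of {\DB} {\kOS} calls on the full instance, followed by an inexpensive augmentation that completes the subgraph to $k$-connectivity. The factor $\mu$ reflects the directed/undirected asymmetry: for digraphs we invoke {\DB} {\kOS} twice, once for out-connectivity from a root $s$ and once for in-connectivity to $s$ (equivalently, on the reversed graph), while for undirected graphs one application of {\DB} {\kOS} rooted at $s$ plays the analogous role, exploiting that an undirected graph that is $k$-out-connected from $s$ is already ``halfway'' to being $k$-connected.

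First, I would solve the biset LP relaxation of {\DB} {\kCS} and use a suitable projection of it as the input to the $(\alpha,\beta(b(v)))$-approximation for {\DB} {\kOS}. Running this subroutine $\mu$ times yields an intermediate subgraph $F_0$ of cost at most $\mu\alpha \cdot \mathrm{OPT}$ and node-degrees at most $\beta(b(v))$. The residual problem is then to augment $F_0$ to a $k$-connected subgraph, adding at most $O(k)\cdot\mathrm{OPT}$ to the cost and $O(k\sqrt{k})$ to the degree of every node.

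Next, I would carry out the augmentation iteratively, raising the connectivity one unit at a time as in \cite{Nutov2012}. The naive scheme performs $k-1$ augmentation iterations, each potentially costing $O(k)$ in extra degree per vertex and yielding the $O(k^2)$ overhead of \cite{Nutov2012}. To improve this to $O(k\sqrt{k})$, I would partition the iterations into $O(\sqrt{k})$ batches of $O(\sqrt{k})$ consecutive levels each, and argue that the total extra degree charged within one batch is $O(k)$ rather than the naive $O(k\sqrt{k})$. This is achieved by uncrossing the minimal deficient bisets arising at the $O(\sqrt{k})$ levels of a batch into a single, possibly larger biset family and applying one iterative rounding round on this combined family, whose degree contribution per vertex is controlled by the size $O(k)$ of the biggest deficient biset rather than by the number of levels.

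The main obstacle is proving the $O(k)$ per-batch degree bound. This should follow from a structural lemma on node-connectivity biset families, analogous to those underlying Theorems~\ref{t:kOS} and~\ref{t:ElC}: that the minimal deficient bisets at $O(\sqrt{k})$ adjacent connectivity levels can be uncrossed simultaneously, so that a single iterative-rounding step provides a constant-factor cost approximation together with an additive $O(k)$ degree bound. Granting this lemma, summing the costs and degrees across the $O(\sqrt{k})$ batches and combining with the base phase gives the stated $(\mu\alpha+O(k),\beta(b(v))+O(k\sqrt{k}))$ ratio; plugging in $\alpha=2$ and $\beta(b(v))=2b(v)+O(k)$ from Theorem~\ref{t:kOS} produces the corollary $(2\mu+O(k),\,2b(v)+O(k\sqrt{k}))$.
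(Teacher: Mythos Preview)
Your proposal diverges substantially from the paper's proof, and the divergence is at the crucial step. The paper does \emph{not} raise the connectivity level by level, nor does it batch iterations or uncross biset families across several levels. Instead it reuses the framework of \cite{Nutov2012} (Lemma~\ref{l:reduction-core} here): after the {\DB} {\kOS} call(s), one has a graph $J$ that is $k$-outconnected from an external root $s$ with neighbor set $R$, $|R|=k$; by Mader's Critical Cycle Theorem there is a forest $F$ on $R$ with $|F|\le k-1$ (or $2k-1$ in the directed case) such that $J\cup F$ is $k$-connected, and one then adds, for each $ut\in F$, a cheapest set of $k$ internally disjoint $ut$-paths. The existing analysis gives degree overhead $O(|F|)+kd/2$ where $d=\max_v|\delta_F(v)|$. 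The new ingredient that yields $O(k\sqrt{k})$ instead of $O(k^2)$ is a purely combinatorial exchange argument (Lemmas~\ref{l:transform-undirected} and~\ref{l:transform-directed}): using Mader's theorem again, one shows that any critical edge $ut$ of a high-degree node $u$ can be replaced by $vt$ for some $v$ of degree at most $d-2$, unless $d=O(\sqrt{k})$. Iterating these swaps converts $F$ into an $F'$ with $\max_v|\delta_{F'}(v)|=O(\sqrt{k})$, whence $kd/2=O(k\sqrt{k})$.

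Your proposed route has a genuine gap: the ``structural lemma'' you need---that the minimal deficient bisets arising at $O(\sqrt{k})$ adjacent connectivity levels can be simultaneously uncrossed into a single family on which one round of iterative rounding gives an additive $O(k)$ degree bound---is not established anywhere, and there is no reason to expect it to hold. Deficient bisets for $\ell$-connectivity and for $(\ell{+}1)$-connectivity live in different residual problems and do not combine into an intersecting-supermodular or skew-supermodular family in any known way; in fact, the hardness results and the bad iterative-rounding example of \cite{Aazami2010} suggest that node-connectivity bisets across levels behave poorly. Since you explicitly write ``granting this lemma,'' the entire $O(k\sqrt{k})$ improvement in your proposal rests on an unproven and likely false assertion. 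The paper sidesteps this obstacle entirely by working with the concrete forest $F$ rather than with biset families.
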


\subsubsection{Improving the constant approximation algorithm for {\kCS}}

Cheriyan and V\'egh \cite{CVegh2012} showed that {\kCS}
admits a 6-approximation algorithm if the given graph is undirected and
$n\geq k^3(k-1)+k$.
Their algorithm uses an algorithm for {\sf Undirected} {\kOS} as a subroutine.

It is not difficult to see that their results can be extended to the degree-bounded setting.
Our $(4,2b(v)+O(k))$-approximation algorithm for
{\DB} {\sf Undirected} {\kOS} presents a
$(12,8b(v)+O(k))$-approximation algorithm 
for {\DB} {\sf Undirected} {\kCS} under the same assumption as \cite{CVegh2012}.
We will explain this in Section~\ref{s:CV}.

In addition, we prove that
the assumption $n \geq k^3(k-1)+k$ in \cite{CVegh2012} can be weakened;
the bound $k^3(k-1)+k=O(k^4)$ is 
improved to $k(k-1)(k-1.5)+k=O(k^3)$
if we have no degree-bounds,
and to $2k(k-1)(k-0.5)+k$ in the degree-bounded setting.
Thus we obtain the following results.

\begin{theorem} \label{t:kCS}
For undirected graphs with $n \geq k(k-1)(k-1.5)+k$,
{\kCS} admits a $6$-approximation algorithm, and if $n \geq 2k(k-1)(k-0.5)+k$ then
{\DB} {\kCS} admits approximation ratio $(12,8b(v)+O(k))$.
\end{theorem}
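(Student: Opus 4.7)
The plan is to refine the decomposition framework of Cheriyan and V\'egh~\cite{CVegh2012}. I would choose two vertices $s, t \in V$ and split the {\DB} {\kCS} problem into three subproblems: (i) make the solution $k$-out-connected from $s$, (ii) make it $k$-out-connected from $t$, and (iii) an auxiliary {\ElC} {\SN} instance capturing the residual $k$-connectivity deficiencies. Since any $k$-connected subgraph is simultaneously feasible for (i), (ii), and (iii), the LP value of the original {\DB} {\kCS} instance is a common lower bound for all three. Solving the three subproblems separately and taking the union yields a solution whose cost is bounded by the sum of the three cost ratios and whose degree at each vertex is bounded by the sum of the three per-subproblem degree approximations.

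For the $6$-approximation claim, subproblems (i) and (ii) admit ratio $2$ by the undirected {\kOS} algorithm implicit in \cite{Jackson2005}, and (iii) admits ratio $2$ by the iterative rounding of \cite{Fleischer2006}, summing to $6$. For the $(12,\,8b(v)+O(k))$ claim, I would plug Theorem~\ref{t:kOS} with $\alpha=2$ into each of (i) and (ii), which by the remark following Theorem~\ref{t:kOS} yields the undirected ratio $(4,\,2b(v)+O(k))$, and Theorem~\ref{t:ElC}(i) with $\alpha=4$ into (iii), yielding $(4,\,4b(v)+O(k))$. The cost ratios sum to $4+4+4=12$ and the degree bounds sum to $2b(v)+2b(v)+4b(v)+O(k)=8b(v)+O(k)$, as claimed.

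The technical heart is proving the decomposition is valid under the stated threshold on $n$. After (i) and (ii) are satisfied, any pair $(u,v)$ with $\kappa(u,v)<k$ must admit a $uv$-separator $S$ of size less than $k$ containing both $s$ and $t$ — otherwise one of $\kappa(s,\cdot)$ or $\kappa(t,\cdot)$ would drop below $k$ — so the residual requirements translate into element-connectivity requirements in which $s,t$ play the role of non-terminals, justifying (iii) as a genuine {\ElC} {\SN} instance. The existence of a valid pair $(s,t)$ then reduces to a counting argument over ``rogue'' bisets witnessing $k$-connectivity violations not already covered by (i) or (ii). Cheriyan and V\'egh obtain $n>k^3(k-1)+k$ by a relatively coarse count that bounds the rogue-biset incidences at each vertex by roughly $k^2$. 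I plan to tighten this by a finer double count that exploits the posimodularity/submodularity of the biset deficiency function and the asymmetry between the inner and outer parts of a biset, improving the per-vertex incidence bound to essentially linear in $k$ and yielding the thresholds $n\geq k(k-1)(k-1.5)+k$ in the uncapacitated case and $n\geq 2k(k-1)(k-0.5)+k$ in the degree-bounded case (the extra factor of $2$ arising because the search for $(s,t)$ must additionally dodge a family of small-support witnesses induced by the degree bounds). The bulk of the work, and the main obstacle, is this biset-counting lemma; once it is in place, the reduction to the three subproblems and the cost/degree bookkeeping are routine.
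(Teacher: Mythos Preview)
Your three-step outline resembles the Cheriyan--V\'egh framework the paper refines, but two of the load-bearing steps are not right.

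First, the second root cannot be fixed upfront, and the residual is not an {\ElC} {\SN} instance. In the actual argument one first computes a $k$-out-connected subgraph $J$ (via the external-root procedure, or in the cost-only case via the Auletta--Dinitz--Nutov--Parente procedure), then computes the set $U=\bigcup\{S:\hat{S}\in{\cal S}_J,\ |S|\le k-1\}$, and only \emph{then} chooses the second root's neighbourhood inside $V\setminus U$. This adaptive choice forces every remaining deficient biset to have $|S|\ge k$; together with $|\Gamma(\hat{S})|\le k-1$ and symmetry, that is precisely what makes ${\cal S}_J$ independence-free, and hence $\max\{f_J,0\}$ positively skew supermodular --- the property Theorem~\ref{t:ElC'} actually uses. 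Your version, fixing $s,t\in V$ in advance and $k$-out-connecting from each, only yields $s,t\in\Gamma(\hat{S})$ for every deficient $\hat{S}$. Two designated nodes in every boundary does not give an {\ElC} {\SN} instance (element-connectivity with non-terminals $\{s,t\}$ is strictly weaker than node-connectivity) and does not by itself imply skew supermodularity, so there is no reason the Fleischer--Jain--Williamson rounding applies to your subproblem~(iii). The counting argument is not about locating a good pair $(s,t)$; it is about bounding $|U|$ so that $V\setminus U$ still has $k$ nodes for the second external root.

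Second, the bound on $|U|$ is not obtained by a posimodularity double-count. The paper proves a general lemma (Lemma~\ref{l:posi}): for any weakly posi-uncrossable biset family ${\cal F}$ with inner parts of size at most $p$ and boundaries of size at most $\gamma$, one has $\bigl|\bigcup_{\hat{S}\in{\cal F}}S\bigr|\le p(2\gamma+1)\nu_{\cal F}$, where $\nu_{\cal F}$ is the maximum number of pairwise strongly-disjoint members. The proof builds an auxiliary digraph of in-degree at most $\gamma$ on a minimal covering subfamily and applies $(2\gamma+1)$-colourability of $2\gamma$-degenerate graphs. The two thresholds then come from two different bounds on $\nu$: in the cost-only case the Auletta et~al.\ procedure gives an internal root of degree exactly $k$, whence Lemma~\ref{l:R'} yields $\nu\le\lfloor k/2\rfloor$; in the degree-bounded case that procedure is unavailable (it does not respect degree bounds), one must use the external-root construction, and only $\nu\le k$ from Lemma~\ref{l:R} is available. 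The factor of two has nothing to do with ``small-support witnesses induced by the degree bounds.''
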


\subsubsection*{Remark}
When this paper was in submission,
Ene and Vakilian~\cite{EneV2014} published a paper that presents several
improved results for {\DB} {\SN}.
They gave an $(3,6b(v)+5)$-approximation for {\DB} {\EC} {\SN},
$(3,6b(v)+3)$-approximation for {\DB} {\kOS},
and $(15,34b(v)+17)$ for {\DB} {\kCS} with $|V|=\Omega(k^3)$.

\subsection{Organization}
The rest of this paper is organized as follows.
In Section~\ref{s:bisets} we formulate Theorems \ref{t:kOS} and \ref{t:ElC} in terms of biset functions,
see Theorems \ref{t:kOS'} and \ref{t:ElC'}, respectively.
In Section~\ref{s:rounding} we describe the iterative rounding algorithm that we use, and 
formulate the latter two theorems in terms of extreme points of appropriate polytopes;
see Theorems \ref{t:kOS''} and \ref{t:ElC''}, respectively.
These two theorems are the key ingredients in proving Theorems \ref{t:kOS} and \ref{t:ElC}, respectively;
we prepare some tools for proving them in Section~\ref{s:extr-points} and prove them formally 
in Section~\ref{s:high-value}.
In Section~\ref{s:reductions}, we show that Theorem~\ref{t:reductions}
follows from Theorems~\ref{t:kOS} and \ref{t:ElC}.
In the subsequent two 
Sections~\ref{s:new-reduction} and \ref{s:kCS},
we prove
Theorems \ref{t:new-reduction} and \ref{t:kCS},  respectively.

\section{Biset edge-covering formulation of {\SN} problems (Theorems \ref{t:kOS} and \ref{t:ElC})} \label{s:bisets}

\subsection{Biset LP relaxation}
We use a standard setpair LP relaxation, due to Frank and Jord\'an~\cite{Frank1995},
but we formulate it in equivalent but more convenient
terms of {\em bisets}, as was suggested by Frank \cite{Frank2009},
and used in several other recent papers \cite{Nutov2009a,Nutov2009b,Nutov2011,Nutov2012}.

A {\em biset} is an ordered pair $\hat{S}=(S, S^+)$ of subsets
of $V$ such that $S\subseteq S^+$.
$S$ is called the {\em inner-part} of $\hat{S}$ and
$S^+$ is called the {\em outer-part} of $\hat{S}$.
We call $S^+\setminus S$ the {\em boundary} of $\hat{S}$, denoted by $\Gamma(\hat{S})$.
In the case of undirected graphs, an edge $e$ covers a biset $\hat{S}$
if it has one end-node in $S$ and the other in $V \setminus S^+$,
and we denote by $\delta_E(\hat{S})$ the set of edges in $E$ covering $\hat{S}$.
In the case of directed graphs, an edge $e$ covers $\hat{S}$ if it enters $\hat{S}$,
namely, if $e$ has tail in $V \setminus S^+$ and head in $S$;
we denote by $\delta_E^-(\hat{S})$ the set of edges in $E$ that cover $\hat{S}$.
We also denote by $\delta_E^+(\hat{S})$ the set of edges in $E$ that leave $\hat{S}$,
namely, edges in $E$ with tail in $S$ and head in $V \setminus S^+$.

Let $\Vset$ denote the set of all bisets of a groundset $V$.
A graph $(V,F)$ satisfies the connectivity requirements if
$|\delta_F(\hat{S})| \geq f(\hat{S})$  for each $\hat{S} \in {\cal V}$
in undirected graphs, and
$|\delta^-_F(\hat{S})| \geq f(\hat{S})$ for each $\hat{S} \in {\cal V}$
in directed graphs, where $f$ is the biset function derived from the connectivity requirements;
in this case we say that the graph $(V,F)$ is {\em $f$-connected}.

Every set $S$ can be considered as the biset $(S,S)$.
Hence degree constraints can be represented by using bisets.
For a node $v$, define a biset $\hat{S}_v=(\{v\},\{v\})$.
Then the degree constraint on a node $v$ in undirected graphs is represented by
$|\delta_F(\hat{S}_v)| \leq b(v)$.
In digraphs,
the out-degree constraint on $v$ is $|\delta^+_F(\hat{S}_v)| \leq b(v)$, and
the in-degree  constraint on $v$ is $|\delta^-_F(\hat{S}_v)| \leq b^-(v)$.
We sometimes abuse the notation to identify a node $v \in V$ as the biset $(\{v\}, \{v\})$.

We consider the following generic problem.

\vspace*{0.2cm}

\begin{center}
\fbox{
\begin{minipage}{0.960\textwidth}
\noindent
{\bf {\em Degree Bounded $f$-Connected Subgraph}} \\
A graph $G=(V,E)$ with edge/arc costs $c\colon  E \rightarrow \Rset_+$,
a biset function $f$ on ${\cal V}$,
and degree-bounds $b\colon B \rightarrow \Zset_+$ on a subset $B$ of $V$ are given.
The goal is to find a minimum cost edge set $F \subseteq E$ such that $(V,F)$
is $f$-connected, and the degree/out-degree of each $v \in B$ is at most $b(v)$.
\end{minipage}
}
\end{center}

\medskip
Let $x(e) \in [0,1]$ be a variable indicating whether an edge $e \in E$
is chosen to the solution or not.
Given an edge-set $F$ let $x(F)=\sum_{e \in F} x(e)$.
Our LP relaxation for the degree-bounded $f$-connected subgraph problem is 
$\min\{c \cdot x\colon x \in P(f,b,E)\}$, where $P(f,b,E)$ is a polytope defined as follows.
In the case of directed graphs $P(f,b,E)$ is defined by the constraints
\begin{equation*} 
\begin{array}{lll}
& x(\delta^-_{E}(\hat{S})) \geq f(\hat{S}) \ \ & \mbox{\  for each $\hat{S} \in \Vset$,}  \\
& x(\delta^+_{E}(v)) \leq b(v)                 & \mbox{\  for each $v \in B$,}            \\
& 0 \leq x(e) \leq 1                           & \mbox{\  for each $e \in E$.}            
\end{array}
\end{equation*}
In the case of undirected graphs, $P(f,b,E)$ is defined 
by replacing both $\delta^-_{E}(\hat{S})$ and $\delta^+_{E}(\hat{S})$ with $\delta_E(\hat{S})$.

Given a biset function $f$ we denote
$\gamma=\gamma_f=\max_{f(\hat{S})>0}|\Gamma(\hat{S})|$.
Given a biset function $f$ and an edge-set $J$ the {\em residual biset function} of $f$ is
$f_J(\hat{S})=f(\hat{S})-|\delta^-_J(\hat{S})|$ in the case of directed graphs and 
$f_J(\hat{S})=f(\hat{S})-|\delta_J(\hat{S})|$ in the case of undirected graphs.
Given a parameter $\alpha \geq 1$ the {\em residual degree bounds} are 
$b_J^\alpha(v)=b(v)-|\delta^+_J(v)|/\alpha$ in the case of directed graphs and
$b_J^\alpha(v)=b(v)-|\delta_J(v)|/\alpha$ in the case of undirected graphs.

\subsection{Intersecting supermodularity and {\sf Directed} {\kOS}}

Let $\hat{X}=(X,X^+)$ and $\hat{Y}=(Y,Y^+)$ be two bisets.
Define
\[
\hat{X} \cap \hat{Y}= (X\cap Y, X^+ \cap Y^+) \text{\ and\ } \hat{X} \cup \hat{Y}=(X\cup Y, X^+ \cup Y^+).
\]

For a biset function $f$ and bisets $\hat{X}$ and $\hat{Y}$,
the {\em supermodular inequality} is defined as 
\begin{equation}
\label{e:super}
f(\hat{X})+f(\hat{Y}) \leq f(\hat{X} \cap \hat{Y}) + f(\hat{X} \cup \hat{Y}).
\end{equation}
A biset function $f\colon  \Vset \rightarrow \Zset$ is {\em intersecting supermodular}
if any $\hat{X},\hat{Y}$ with $X \cap Y \neq \emptyset$
satisfy the supermodular inequality \eqref{e:super}.

We assume that for any $G=(V,E)$, $J$, and $\alpha \geq 1$, one can find an extreme point solution to
$\min\{c \cdot x\colon  x \in P(f_J,b_J^\alpha,E)\}$.
Under this assumption,
we prove the following theorem.

 \begin{theorem} [Implies Theorem~\ref{t:kOS}] \label{t:kOS'}
If $f$ is intersecting supermodular then directed {\DB} {\fCS} admits approximation ratio
$(\alpha,\alpha b(v)+\left\lceil\frac{2\gamma}{\alpha-1}\right\rceil+1)$
  for any integer $\alpha \geq 2$,
  where $\gamma=\gamma_f=\max_{f(\hat{S})>0}|\Gamma(\hat{S})|$.
\end{theorem}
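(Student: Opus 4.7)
The plan is to implement an iterative rounding algorithm on the LP $\min\{c \cdot x : x \in P(f_J, b_J^\alpha, E)\}$, starting with $J = \emptyset$. In each iteration, I compute an extreme point $x^*$ of the current polytope and apply one of two rules: (R1) if some edge $e$ has $x^*(e) \geq 1/\alpha$, move $e$ from $E$ into $J$; (R2) otherwise, if some $v \in B$ has out-degree in the current support at most $\lceil 2\gamma/(\alpha-1)\rceil + 1$, drop $v$'s degree constraint. The algorithm halts when no further rounding or relaxation is needed. The correctness of this loop rests on a structural theorem (the forthcoming Theorem~\ref{t:kOS''}), which asserts that at every extreme point of $P(f,b,E)$ with $f$ intersecting supermodular and $0 < x^*(e) < 1$ on the support, at least one of (R1) or (R2) applies.

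Granting the structural theorem, the approximation guarantees follow by routine analysis. For the cost, since (R1) rounds only edges with fractional value $\geq 1/\alpha$, a standard telescoping bound on successive LP values gives $c(J) \leq \alpha$ times the initial LP optimum, hence $\leq \alpha \cdot \mathrm{OPT}$. For the out-degree at $v \in B$, while the constraint at $v$ is still active the residual bound $b_J^\alpha(v) = b(v) - |\delta_J^+(v)|/\alpha$ must stay non-negative, so at most $\alpha b(v)$ edges out of $v$ enter $J$ in this phase; the constraint is relaxed via (R2) only when the support out-degree at $v$ has dropped to at most $\lceil 2\gamma/(\alpha-1)\rceil + 1$, and only those remaining edges can be added thereafter. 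Summing the two contributions yields the claimed $\alpha b(v) + \lceil 2\gamma/(\alpha-1)\rceil + 1$.

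The main obstacle is establishing the structural theorem, which I would approach along the standard iterative rounding line. Let $x^*$ be an extreme point with fully fractional support. I pick a maximal linearly independent family of tight biset constraints $\mathcal{L}$ together with a set $B' \subseteq B$ of tight degree constraints, so that $|\mathcal{L}| + |B'| = |E|$. Using the intersecting supermodularity of $f$, the standard biset uncrossing argument (for $\hat{X},\hat{Y} \in \mathcal{L}$ with $X \cap Y \neq \emptyset$, both $\hat{X} \cap \hat{Y}$ and $\hat{X} \cup \hat{Y}$ are tight and their characteristic vectors on $\delta_E^-$ lie in the span of those of $\hat{X},\hat{Y}$) lets me assume $\mathcal{L}$ is cross-free, so that the inner parts form a laminar family with a natural forest structure. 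On this forest I then run a token-counting argument: distribute tokens according to $x^*$ across $E$ and $B'$, and redistribute them to $\mathcal{L} \cup B'$ so that every element collects at least a prescribed share; a strict surplus contradicts $|\mathcal{L}| + |B'| = |E|$.

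The technical difficulty specific to node-connectivity, and the source of the $2\gamma/(\alpha-1)$ term, is the nonempty boundary $\Gamma(\hat{S})$: edges incident to boundary nodes are invisible to the tight constraint $x(\delta_E^-(\hat{S})) \geq f(\hat{S})$ yet still consume tokens, and the laminarity of inner parts need not align with the outer parts. The negation of (R2), namely that every $v \in B$ retains support out-degree strictly greater than $\lceil 2\gamma/(\alpha-1)\rceil + 1$, must be used precisely to cover these boundary losses; the parameter $\gamma = \max_{f(\hat{S})>0}|\Gamma(\hat{S})|$ enters the counting through the sum of boundary sizes along a root-to-leaf path in the laminar forest. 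Carrying out this discharging so that the slack matches exactly $\lceil 2\gamma/(\alpha - 1)\rceil + 1$ is where the real work lies; once the counting closes, the extreme point cannot exist under the negations of (R1) and (R2), and the theorem follows.
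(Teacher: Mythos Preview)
Your overall architecture matches the paper exactly: run {\sc IteRounding}, reduce correctness to a structural theorem about extreme points (the paper's Theorem~\ref{t:kOS''}), and prove that theorem by uncrossing to a laminar biset family $\mathcal{L}$ plus tight degree nodes $C$ with $|\mathcal{L}|+|C|=|E|$, then running a token argument. So the plan is sound.

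However, your rule (R2) is misstated, and this is not cosmetic. You write (R2) as ``$|\delta_E^+(v)| \le \lceil 2\gamma/(\alpha-1)\rceil + 1$'', whereas the structural theorem actually proved (and needed) is that either some $x^*(e)\ge 1/\alpha$ or some $v\in B$ has $|\delta_E^+(v)| \le \alpha\, b(v) + \lceil 2\gamma/(\alpha-1)\rceil + 1$. In the iterative algorithm this corresponds to relaxing $v$ when $|\delta_E^+(v)|\le \alpha\, b_J^\alpha(v)+\beta$ (i.e.\ $\sigma=\alpha$ in the paper's framework), not when $|\delta_E^+(v)|\le \beta$. Your final degree accounting still lands at $\alpha b(v)+\beta$, but only because the extra $\alpha b_J^\alpha(v)$ term telescopes with the $|\delta_J^+(v)|$ already accumulated; the structural claim you wrote down is strictly stronger and is not what the token argument delivers.

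Concretely, here is where your token sketch breaks. In the paper's scheme each edge $e=uv$ contributes $1-\alpha x(e)$ tail-tokens at $u$ and $\alpha x(e)$ head-tokens at $v$. For $v\in C$ the tight constraint gives $x(\delta_E^+(v))=b(v)$, so the total tail-tokens at $v$ equal $|\delta_E^+(v)|-\alpha b(v)$. Under the negation of the \emph{correct} (R2) this quantity is at least $\beta+1$, and those $\beta+1$ tokens per node in $C$ are exactly what pay for the boundary terms via $\sum_{v\in C}(\Delta_{\mathcal L}(v)+2)\le 3|C|+2\gamma(|\mathcal E|-1)$ from Lemma~\ref{l:count}. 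Under the negation of \emph{your} (R2) you only know $|\delta_E^+(v)|>\beta$, so the tail-tokens at $v$ are merely $>\beta-\alpha b(v)$, which can be negative; the counting does not close. Fix (R2) to include the $\alpha b(v)$ term and your outline becomes the paper's proof.
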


This theorem implies Theorem~\ref{t:kOS}.
For $s \in V$ and an integer $k \geq 1$, define a biset function $g\colon \Vset \rightarrow \Zset$ as
\begin{equation*} 
 g(\hat{S}) =
 \begin{cases}\displaystyle
  k - |\Gamma(\hat{S})| & \mbox{if $S \neq \emptyset$ and $s \not\in S^+$,} \\
  0                     & \mbox{otherwise.}
 \end{cases}
\end{equation*}
Then a digraph $(V,F)$ is $k$-outconnected from $s$ 
(namely, satisfies $\kappa(s,v)\geq k$ for each $v \in V \setminus \{s\}$) if and only if
$|\delta^-_F(\hat{S})| \geq g(\hat{S})$ for each $\hat{S} \in \Vset$,
namely if and only if $(V,F)$ is $g$-connected.
Thus $g$ represents the $k$-out-connectivity requirements.

It is known that $g$ is intersecting supermodular~\cite{Frank2009}.
Observe that $\gamma_g \leq k-1$. Moreover, 
an extreme point solution to
$\min\{c \cdot x\colon  x \in P(g_J,b_J^\alpha,E)\}$ can be computed in
polynomial time
(we omit the somewhat standard implementation details).
Therefore,
Theorem~\ref{t:kOS} is obtained by applying Theorem~\ref{t:kOS'}
to $g$.

\subsection{Skew supermodularity and {\ElC} {\SN}}
For two bisets $\hat{X}=(X,X^+)$ and $\hat{Y}=(Y,Y^+)$,
define
\[
 \hat{X} \setminus \hat{Y} = (X\setminus Y^+, X^+ \setminus Y).
\]
For a biset function $f$ and bisets $\hat{X}$ and $\hat{Y}$,
the {\em negamodular inequality} is defined as 
\begin{equation}
 \label{e:posi}
 f(\hat{X})+f(\hat{Y})  \leq  f(\hat{X} \setminus \hat{Y}) + f(\hat{Y} \setminus \hat{X}).
\end{equation}
 $f$ is called (\emph{positively}) {\em skew supermodular} if 
 the supermodular inequality \eqref{e:super} or the negamodular
 inequality \eqref{e:posi}
 holds for any bisets $\hat{X},\hat{Y}$ with $f(\hat{X})>0$ and $f(\hat{Y})>0$.

We prove the following theorem under the same assumption as Theorem~\ref{t:kOS'}.

\begin{theorem} [Implies Theorem~\ref{t:ElC}] \label{t:ElC'}
If $f$ is skew supermodular then undirected {\DB} {\fCS} admits the following approximation ratios.
\begin{itemize}
\item[{\em (i)}]
$\left(\alpha, \alpha b(v)+\left\lceil 4\frac{\gamma+2}{\alpha-2}\right\rceil +4\right)$ 
for any integer $\alpha \geq 4$. 
\item[{\em (ii)}]
	     $(\infty,2b(v)+1.5\gamma^2 +7.5\gamma+15)$.
\end{itemize}
 Here, $\gamma=\gamma_f=\max_{f(\hat{S})>0}|\Gamma(\hat{S})|$.
\end{theorem}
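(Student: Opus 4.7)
The plan is to derive Theorem~\ref{t:ElC'} from the extreme point characterization Theorem~\ref{t:ElC''} via the iterative rounding meta-algorithm of Section~\ref{s:rounding}, with careful accounting against the residual degree bound $b_J^\alpha(v) = b(v) - |\delta_J(v)|/\alpha$. Fix an integer $\alpha \geq 4$ for part~(i). Initialize $J = \emptyset$ and $B' = B$, and repeat the following until $f_J \equiv 0$: compute an extreme point $x$ of $\min\{c \cdot x : x \in P(f_J, b_J^\alpha, E)\}$; by Theorem~\ref{t:ElC''}, at least one of (a) some edge $e$ has $x(e) \geq 1/\alpha$, or (b) some $v \in B'$ has support degree at most $\alpha b_J^\alpha(v) + \tau$ with $\tau = \lceil 4(\gamma+2)/(\alpha-2) \rceil + 4$, must hold. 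In case (a) I move $e$ from $E$ to $J$; in case (b) I drop $v$ from $B'$. I also delete every edge $e$ with $x(e) = 0$ from $E$. Since $|E| + |B'|$ strictly decreases, the procedure terminates in polynomial time.

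The cost analysis is routine: feasibility of the residual LP is preserved at every step (the updates to $f_J$ and $b_J^\alpha$ exactly reflect the commitment of $e$), so the first LP optimum lower-bounds the optimal integral cost, and each rounded edge contributes $c(e) \leq \alpha \cdot c(e) x(e)$, giving total cost at most $\alpha$ times that LP optimum. For the degree analysis, the zero-value deletion rule ensures that at the step where $v$ is dropped from $B'$, the remaining edges at $v$ form exactly the current support at $v$, whose size is at most $\alpha b_J^\alpha(v) + \tau$. Letting $d^\star$ and $b^\star$ denote the values of $|\delta_J(v)|$ and $b_J^\alpha(v)$ at that drop, we have $d^\star = \alpha(b(v) - b^\star)$ by definition, and at most $\alpha b^\star + \tau$ further edges can ever be added at $v$; summing gives the final $|\delta_J(v)| \leq \alpha b(v) + \tau$, which is exactly part~(i). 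A node never dropped satisfies $|\delta_J(v)| \leq \alpha b(v)$ directly from the LP constraint.

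Part~(ii) is handled by the same meta-algorithm with $\alpha = 2$ in the residual updates and with the cost-rounding branch disabled: the second half of Theorem~\ref{t:ElC''} supplies the unconditional guarantee that some $v \in B'$ always satisfies the weaker bound $2 b_J^2(v) + 1.5\gamma^2 + 7.5\gamma + 15$ on support degree, so constraints are dropped one by one (and any edge with $x(e)=1$ is rounded) until $f_J \equiv 0$. The same telescoping then yields $|\delta_J(v)| \leq 2b(v) + 1.5\gamma^2 + 7.5\gamma + 15$, at the price of surrendering every cost guarantee.

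The genuine work, and main obstacle, is Theorem~\ref{t:ElC''} itself, whose proof occupies Sections~\ref{s:extr-points} and~\ref{s:high-value}. Skew supermodular biset functions do not uncross cleanly: a crossing tight pair $\hat{X}, \hat{Y}$ can be replaced either by $\hat{X} \cap \hat{Y}, \hat{X} \cup \hat{Y}$ or by $\hat{X} \setminus \hat{Y}, \hat{Y} \setminus \hat{X}$, but the posimodular branch can inflate boundaries by up to $\gamma$. The hard step is to build an independent laminar family of tight bisets from any extreme point and to design a token-counting argument that allocates tokens both at edge endpoints and at family members, whose accounting tolerates the posimodular inflation while losing only $O(\gamma)$ tokens per boundary defect; pushing this through to the strong form needed for part~(ii) requires an additional $O(\gamma^2)$-loss argument.
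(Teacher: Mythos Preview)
Your overall approach---derive Theorem~\ref{t:ElC'} from the extreme-point characterization Theorem~\ref{t:ElC''} via the {\sc IteRounding} meta-algorithm---is exactly the route the paper takes, and the cost telescoping is standard. However, you have misquoted Theorem~\ref{t:ElC''} in both parts, and this breaks your degree accounting.

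For part~(i), Theorem~\ref{t:ElC''} does \emph{not} give a bound of the form $|\delta_E(v)| \leq \alpha b_J^\alpha(v) + \tau$; it gives the pure-constant bound $|\delta_E(v)| \leq \left\lceil 4\frac{\gamma+2}{\alpha-2}\right\rceil + 5$. (You may be confusing this with Theorem~\ref{t:kOS''}, which does carry an $\alpha b(v)$ term.) In the language of {\sc IteRounding} this is the $\sigma=0$ case, not $\sigma=\alpha$, so your telescoping $d^\star + (\alpha b^\star + \tau) = \alpha b(v) + \tau$ is not justified as written. The paper recovers the correct additive $+4$ (rather than $+5$) precisely from the $\sigma=0$ savings: when the drop is triggered by a constant-only degree bound $\beta$, the final guarantee is $\alpha b(v) + \beta - 1$, using $b(v)\geq 1$; with $\beta = \left\lceil 4\frac{\gamma+2}{\alpha-2}\right\rceil + 5$ this gives the stated bound.

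For part~(ii), two things are off. First, the second bullet of Theorem~\ref{t:ElC''} is \emph{not} unconditional: it requires that every edge have an endpoint in $B$, and its conclusion is still an either/or---either some $e$ has $x^*(e)\geq 1/2$, or some $v\in B$ has $|\delta_E(v)|\leq 1.5\gamma^2 + 7.5\gamma + 16$ (again a constant, no $2b_J^2(v)$ term). The paper enforces the precondition by, at each iteration, moving to $J$ every edge with no endpoint in $B$ (this is harmless for degree bounds and is explicitly noted after the description of {\sc IteRounding}); it still executes the $x(e)\geq 1/2$ rounding branch, merely surrendering the cost guarantee. Second, your claimed drop condition again carries a spurious $2 b_J^2(v)$ term. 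The correct accounting is $\sigma=0$, $\alpha=2$, $\beta=1.5\gamma^2+7.5\gamma+16$, yielding $2b(v)+\beta-1 = 2b(v)+1.5\gamma^2+7.5\gamma+15$ as stated.
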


\medskip
For $T \subseteq V$ and $r\colon T \times T \rightarrow \Zset_+$
define a biset function $h\colon  \Vset \rightarrow \Zset$
as
\begin{equation*} 
 h(\hat{S}) =
 \begin{cases}\displaystyle
  \max_{u \in S \cap T, v \in T\setminus S^+} r(u,v) -|\Gamma(\hat{S})| &
  \mbox{if $S \cap  T \neq \emptyset \neq T \setminus S^+$ and $T \cap \Gamma(\hat{S})=\emptyset$,} \\
  0                                                                     & \mbox{otherwise.}
 \end{cases}
\end{equation*}
By a ``mixed-connectivity'' version of Menger's Theorem (see, e.g. \cite{Kortsarz2007}),
an undirected graph $(V,F)$ satisfies
$\lambda^T(u,v)\geq r(u,v)$ for each $u,v \in T$ if and only if
$|\delta_F(\hat{S})| \geq h(\hat{S})$ for each $\hat{S} \in \Vset$,
namely, if and only if $(V,F)$ is $h$-connected.
Thus $h$ represents element-connectivity requirements.

By \cite{Fleischer2006}, $h$ is skew supermodular.
Moreover, 
$\gamma_h \leq k-1$ holds, and
an extreme point solution to
$\min\{c \cdot x\colon  x \in P(g_J,b_J^\alpha,E)\}$ can be computed in
polynomial time.
Therefore,
Theorem~\ref{t:ElC} is obtained by applying Theorem~\ref{t:ElC'} to $h$.

\section{Iterative rounding algorithms 
(Theorems \ref{t:kOS'} and \ref{t:ElC'})} \label{s:rounding}

Here we describe the version of the iterative rounding method we use, 
and formulate Theorems \ref{t:kOS'} and \ref{t:ElC'} in terms of extreme points of appropriate polytopes.
To apply iterative rounding, we define $J \subseteq E$ as
the set of edges that have already been chosen as a part of the current
solution by the algorithm. We also denote by $\deg_E(v)$ the out-degree/degree of $v$ w.r.t. $E$, so
$\deg_E(v)=|\delta^+_E(v)|$ in the case of digraphs and
$\deg_E(v)=|\delta_E(v)|$ in the case of undirected graphs.
The algorithms have three parameters $\alpha$, $\beta$, and $\sigma$.
Parameter $\sigma$ must satisfy $\sigma \leq \alpha$, and is usually set
to $\alpha$ or $0$.

\begin{center}
\fbox{
\begin{minipage}{0.960\textwidth}
\begin{description}\setlength{\itemsep}{0pt}
 \item[Algorithm {\sc IteRounding}]
 \item[Input:] A graph $G=(V,E)$, $B\subseteq V$,
	    degree-bounds $b\colon B \rightarrow \Zset_+$, edge costs
	    $c\colon E\rightarrow \Zset_+$, a biset function $f\colon \Vset \rightarrow \Zset$, 
	    and integers $\alpha \geq 1$, $\beta \geq 0$, and $\sigma \leq \alpha$.
 \item[Output:] An $f$-connected subgraph $(V,J)$ of $G$.
 \item[Step 1:] $J := \emptyset$.
 \item[Step 2:] Compute a basic optimal solution  $x^*$ to $\min\{c \cdot x\colon  x \in P(f_J,b_J^\alpha,E)\}$.
 \item[Step 3:] If there is $e \in E$ such that $x^*(e)=0$             then remove $e$ from $E$.
 \item[Step 4:] If there is $e \in E$ such that $x^*(e) \geq 1/\alpha$ then move $e$ from $E$ to $J$.
 \item[Step 5:] If there is $v \in B$ such that $\deg_E(v) \leq \sigma b_J^\alpha(v)+\beta$
	              then remove $v$ from $B$.
 \item[Step 6:] If $E \neq \emptyset$, return to Step~2.
	              Otherwise, output $(V,J)$.
\end{description}
\end{minipage}
}
\end{center}

\vspace*{0.2cm}

The performance of various versions of this algorithm are analyzed in several papers, c.f. 
\cite{Bansal2009,Lau2009,Lau2011,Lau2008,Nutov2012}.
Assume that at each iteration, there exists 
an edge $e \in E$ as in Steps 3 or 4, or 
a node $v \in B$ as in Step 5. 
Then the algorithm {\sc IteRounding} computes an edge set $J$ of cost 
at most $\alpha$ times the optimal,
such that $\deg_J(v) \leq \alpha b(v)+\beta$ for every $v \in B$
 and 
if $\sigma=0$ then $\deg_J(v) \leq \alpha b(v)+\max\{\beta-1,0\}$ for
every $v \in B$
(since $b(v) \geq 1$ for $v \in B$; for details see e.g.~\cite{Nutov2012}).

We say that $x$ is \emph{maximal} in a polyhedron if the polyhedron contains no
point $y$ such that $x(e) \leq y(e)$ for all $e \in E$ and $x(e') <
y(e')$ holds for some $e' \in E$. 
If we care only about the degree approximation, as in part (ii) of 
Theorems \ref{t:ElC} and \ref {t:ElC'}, then
we define $x^*$ computed in Step~2 as a basic optimal solution to
$\max\{\sum_{e\in E}x(e) \colon x \in  P(f_J,b_J^\alpha,E)\}$;
thus, $x^*$ is maximal in $P(f_J,b_J^\alpha,E)$ in this case.
Note that at Step~4 we can move from $E$ to $J$ 
any edge $e$ that has no tail/end-node in $B$, without changing the approximability of the degrees.

Note that an arbitrary set $F$ of directed edges satisfies
$|\delta^-_F(\hat{X})| + |\delta^-_F(\hat{Y})| \geq |\delta^-_F(\hat{X} \cap
\hat{Y})| + |\delta^-_F(\hat{X}\cup \hat{Y})|$
for any $\hat{X},\hat{Y} \in \Vset$.
Similarly, an arbitrary set $F$ of undirected edges satisfies both
$|\delta_F(\hat{X})| + |\delta_F(\hat{Y})| \geq
|\delta_F(\hat{X} \cap \hat{Y})| + |\delta_F(\hat{X}\cup \hat{Y})|$
and
$|\delta_F(\hat{X})| + |\delta_F(\hat{Y})| \geq
|\delta_F(\hat{X} \setminus \hat{Y})| + |\delta_F(\hat{Y}\setminus \hat{X})|$
for any $\hat{X},\hat{Y} \in \Vset$.
These facts can be verified by counting contributions of edges in both
sides (see, e.g., \cite{Lau2011}). 
Hence if $f$ is intersecting supermodular and the graph is directed, or
if $f$ is skew supermodular and the graph is undirected, 
so is its residual function $f_J$.

We will prove the following property of extreme points solutions.

\begin{theorem} [Implies Theorem~\ref{t:kOS'}] \label{t:kOS''}
Let $x^*$ be an extreme point of $P(f,b,E)$ where $G$ is a directed graph and 
$f$ is an intersecting supermodular biset function on $V$.
Then for any integer $\alpha \geq 2$, there is $e \in E$ with
$x^*(e)=0$ or $x^*(e) \geq 1/\alpha$, or there is
$v \in B$ with $|\delta^+_E(v)| \leq \alpha b(v)+\left\lceil
 \frac{2\gamma}{\alpha-1}\right\rceil+1$.
Here, $\gamma=\gamma_f=\max_{f(\hat{S})>0}|\Gamma(\hat{S})|$. 	     
\end{theorem}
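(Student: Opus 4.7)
The plan is a proof by contradiction in the spirit of iterative rounding. Suppose every $e\in E$ satisfies $0<x^*(e)<1/\alpha$ and every $v\in B$ satisfies $|\delta^+_E(v)|\geq\alpha b(v)+\lceil 2\gamma/(\alpha-1)\rceil+2$; I will derive a contradiction with $x^*$ being an extreme point of $P(f,b,E)$.

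The first step is uncrossing. Using the modular identity $\chi(\delta^-_E(\hat X))+\chi(\delta^-_E(\hat Y))=\chi(\delta^-_E(\hat X\cap\hat Y))+\chi(\delta^-_E(\hat X\cup\hat Y))$ together with the intersecting supermodularity of $f$, I would replace any two crossing tight bisets by their intersection and union while preserving tightness and linear independence. The outcome is a family $\mathcal{L}$ of tight bisets whose inner parts $\{S:\hat S\in\mathcal{L}\}$ form a laminar family of subsets of $V$, and whose outer parts are nested along every chain of inner parts. Together with the tight degree constraints indexed by $T_B := \{v\in B:x^*(\delta^+_E(v))=b(v)\}$, the family $\mathcal{L}$ spans all tight constraints, so linear independence at the extreme point gives $|E|=|\mathcal{L}|+|T_B|$.

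The second step is a token argument. Assign every edge $\alpha$ tokens, for a total of $\alpha|E|$. For each $e=(u,v)$, I would transfer one token to $u$ whenever $u\in T_B$ and distribute the remainder along the initial segment of the chain of bisets in $\mathcal{L}$ whose inner part contains $v$ that $e$ actually enters (those with $u\notin S^+$, an initial segment by the nesting of outer parts). The goal is for every $\hat S\in\mathcal{L}$ to collect at least $\alpha$ tokens and every $v\in T_B$ to collect at least $\alpha$ tokens, with some strict surplus. Each biset has $|\delta^-_E(\hat S)|\geq\alpha f(\hat S)+1\geq\alpha+1$ incoming edges, supplying the needed tokens together with an integer surplus; each $v\in T_B$ supplies $\geq\alpha b(v)+\lceil 2\gamma/(\alpha-1)\rceil+2$ tokens via its out-edges, absorbing the coverage at $v$ with additional slack. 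Since $\alpha|E|=\alpha(|\mathcal{L}|+|T_B|)$, any strict surplus in the collection yields a contradiction.

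The genuine obstacle is the boundary accounting. In pure set-function iterative rounding ($\gamma=0$) this reduces to Jain's argument, but when bisets have non-empty boundary, an edge can fail to enter a biset in its chain because its other endpoint sits in $\Gamma(\hat S)=S^+\setminus S$ rather than strictly outside $S^+$. Each chain accumulates at most $\gamma$ boundary bisets, and each missed biset must be compensated by excess contribution at a tight degree node: tail-side when an outgoing edge fails to enter because its head's chain loses to boundary, and head-side when an incoming edge at $v$ fails to enter because $v$ itself lies in the boundary. Converting one missed biset into surplus requires $\alpha-1$ spare tokens, yielding the rate $1/(\alpha-1)$, and the factor $2$ reflects losses from both directions. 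Making this bookkeeping tight --- so the additive term is exactly $\lceil 2\gamma/(\alpha-1)\rceil+1$ rather than something larger --- is the technical heart of the proof.
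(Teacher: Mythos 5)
Your high-level plan (contradiction plus a laminar family from Lemma~\ref{l:rank} plus a token-counting argument) matches the paper's strategy, but the token scheme you sketch is both different from and weaker than the one the paper uses, and the piece you flag at the end as ``the technical heart'' is precisely what is missing.

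The paper does \emph{not} assign $\alpha$ integer tokens per edge. It gives each edge $e=uv$ a single fractional token, split into $1-\alpha x(e)>0$ ``tail-tokens'' at $u$ and $\alpha x(e)>0$ ``head-tokens'' at $v$. The reason this particular split is the right one is a pointwise integrality fact you have not identified: for each $\hat S\in\mathcal L$, the head-tokens from $E^+_S$ (edges covering $\hat S$ but no child) plus the tail-tokens from $E^-_S$ (edges covering a child but not $\hat S$) total $\alpha x(E^+_S)+|E^-_S|-\alpha x(E^-_S)$, which equals $\alpha\bigl(f(\hat S)-\sum_{\hat R\in\mathcal C_S}f(\hat R)\bigr)+|E^-_S|$ and is therefore a positive integer. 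This is how every non-leaf biset collects at least one token without any explicit reasoning about chains; the ``chain'' bookkeeping you propose is not present and would be hard to make rigorous, since you do not say how the tokens are split among the bisets on a chain, nor how a non-leaf deep in the tree ends up with $\alpha$ of them.

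The boundary accounting is also handled differently. Your claim that ``each chain accumulates at most $\gamma$ boundary bisets'' is not the form the paper uses and is not obviously sufficient: the relevant quantity is $\sum_{v\in C}\max\{\Delta_{\mathcal L}(v),1\}$, and the paper proves (Lemma~\ref{l:count}) the bound $2\gamma(|\mathcal E|-1)+|C|$ for laminar biset families, where $\mathcal E$ is the set of leaves. The argument is then split into two cases: if $(\alpha-1)|\mathcal E|\geq|C|$ (Lemma~\ref{l:alpha'}), the $\alpha|\mathcal E|$ head-tokens at the leaves already pay for $\mathcal E\cup C$ and a high-value edge exists; if $|C|>(\alpha-1)|\mathcal E|$ (Lemma~\ref{l:beta'}), each $v\in C$ has at least $\beta+1$ surplus tail-tokens, which together with Lemma~\ref{l:count} suffice to give each $v\in C$ enough tokens to keep one and donate one to every biset that owns or shares $v$. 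Your writeup gestures at where the $2\gamma$ and $\alpha-1$ come from, but without the fractional split, the integrality lemma, the case dichotomy, and Lemma~\ref{l:count}, the contradiction $|E|>|\mathcal L|+|C|$ is not actually established.
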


\begin{theorem} [Implies Theorem~\ref{t:ElC'}] \label{t:ElC''}
Let $x^*$ be an extreme point of $P(f,b,E)$ where $G=(V,E)$ is an undirected graph and 
 $f$ is a skew supermodular biset function on $V$.
Recall that $\gamma=\gamma_f=\max_{f(\hat{S})>0}|\Gamma(\hat{S})|$. 	     
 Then the following holds.
\begin{itemize}
\item[\rm (i)]
For any integer $\alpha \geq 4$,
there is $e \in E$ with $x^*(e)=0$ or $x^*(e) \geq 1/\alpha$, or
there is $v \in B$ with $|\delta_E(v)| \leq \left\lceil 4\frac{\gamma+2}{\alpha-2}\right\rceil + 5$. 
\item[\rm (ii)]
	     If $x^*$ is maximal in $P(f,b,E)$ and
	     every edge in $E$ is incident to some node in $B$, then
	     there is $e \in E$ with $x^*(e)=0$ or $x^*(e) \geq 1/2$, or
	     there is $v \in B$ with $|\delta_E(v)| \leq
	     1.5\gamma^2+7.5\gamma+16$.
\end{itemize}
\end{theorem}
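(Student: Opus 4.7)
My plan is to prove Theorem \ref{t:ElC''} by contradiction, using the standard uncrossing and token-counting framework adapted to skew supermodular biset functions.

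Suppose the conclusion fails: for the extreme point $x^*$, every edge satisfies $0 < x^*(e) < 1/\alpha$ (and in part (ii), $x^*(e) < 1/2$ with every edge touching $B$), and every $v \in B$ exceeds the claimed degree bound. The first step is to apply a biset uncrossing lemma for skew supermodular $f$: using the supermodular inequality \eqref{e:super} on pairs with intersecting inner parts and the posimodular inequality \eqref{e:posi} on the remaining crossing pairs, I would extract a laminar family $\mathcal{L}$ of tight bisets such that $\{\chi(\delta(\hat{S})) : \hat{S} \in \mathcal{L}\}$ together with $\{\chi(\delta(v)) : v \in T_B\}$ forms a basis of the tight constraints at $x^*$, where $T_B \subseteq B$ is the set of nodes with tight degree constraints. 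Since $x^*$ has no integral components, extremality then gives $|E| = |\mathcal{L}| + |T_B|$.

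The heart of the proof is a fractional token argument. For part (i) I would assign $\alpha$ tokens to every edge, for a total of $\alpha|E|$, and then redistribute so that each member of $\mathcal{L}$ and each $v \in T_B$ receives strictly more than $\alpha$ tokens, contradicting $|E|=|\mathcal{L}|+|T_B|$. The redistribution walks the laminar tree of $\mathcal{L}$ bottom-up. Each tight biset $\hat{S}$ satisfies $x^*(\delta(\hat{S})) \geq 1$, which combined with $x^*(e) < 1/\alpha$ forces $|\delta(\hat{S})| \geq \alpha+1$; and each $v \in T_B$ has far more than $\alpha$ incident edges by assumption. The delicate part is boundary accounting: an edge in $\delta(\hat{S})$ can also lie in $\delta(\hat{T})$ for a child $\hat{T}$ when the edge has an endpoint in $\Gamma(\hat{T})$, and each biset's boundary has at most $\gamma$ nodes. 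This is the origin of the $4(\gamma+2)/(\alpha-2)$ term: a biset can reuse at most $O(\gamma)$ tokens from its children's covering edges via boundary sharing, and ensuring each biset and each node still retains a positive surplus forces the stated degree slack. For part (ii), the stronger hypothesis that every edge is incident to $B$ allows a sharper scheme where edges are charged directly to high-degree $B$-nodes, and a quadratic $O(\gamma^2)$ accounting is used to cover chains of boundary-sharing bisets along a root-to-leaf path of the laminar tree.

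The main obstacle I foresee is the careful bookkeeping in the bottom-up discharge: balancing tokens between ``private'' edges of a biset (those not in any child's $\delta$) and ``boundary'' edges shared with descendants, all while preserving a strict surplus at every $v \in T_B$. The precise constants, in particular the $+5$ slack and the factor $4$ multiplying $\gamma+2$, suggest a three-way split between edges crossing $\hat{S}$ alone, edges shared across a boundary with a descendant, and edges incident to high-degree $B$-nodes inside $\hat{S}$, and verifying that each of these classes carries the right token budget is where I expect the proof to become most technical.
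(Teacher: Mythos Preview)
Your setup is right in spirit---extract a (strongly) laminar family $\mathcal{L}$ of tight bisets plus a tight set $C\subseteq B$ with $|E|=|\mathcal{L}|+|C|$, then derive a contradiction by token counting---but the plan is missing the structural idea that actually makes the counting close. The paper does \emph{not} run one uniform bottom-up discharge. Instead it splits into two regimes governed by the ratio of leaves $\mathcal{E}$ of $\mathcal{L}$ to $|C|$: when $(\lfloor\alpha/2\rfloor-1)|\mathcal{E}|\geq |C|$ a fractional scheme (end-tokens $\theta x(e)$ and middle-tokens $1-2\theta x(e)$, Lemma~\ref{l:theta}) already yields an edge of value $\geq 1/\alpha$ with no degree hypothesis; only when $|C|>(\alpha/2-1)|\mathcal{E}|$ does one invoke the degree assumption and push the surplus of tokens at nodes in $C$ back onto the bisets (Lemma~\ref{l:beta}). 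Your single ``$\alpha$ tokens per edge, everyone gets $>\alpha$'' scheme has no mechanism to succeed in the first regime, where $C$ may be tiny and no degree slack is available.

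The second missing ingredient is how, in the second regime, tokens at $C$ reach the bisets. This is governed by the parameter $\Delta_{\mathcal{L}}(v)$ (the number of bisets sharing $v$) and the counting bound $\sum_{v\in C}\max\{\Delta_{\mathcal{L}}(v),1\}\leq \gamma|\mathcal{E}|+|C|$ for \emph{strongly} laminar families (Lemma~\ref{l:count}); strong laminarity, not mere laminarity, is what skew supermodularity buys you here, and it is exactly what produces the factor $4(\gamma+2)/(\alpha-2)$. Your proposal mentions ``boundary sharing'' but never isolates $\Delta_{\mathcal{L}}(v)$ or uses strong laminarity, so the accounting you sketch has no way to bound how many bisets a single $v\in C$ must pay for. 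For part~(ii) there is an analogous case split on $|\mathcal{E}|\lessgtr(\gamma+4)|C|$, together with a black/white coloring of bisets by whether they own a node of $C$; the hypothesis ``every edge touches $B$'' is used to upgrade to ``every edge touches $C$'', which is what drives the white-tree and black-tree lemmas. The ``$O(\gamma^2)$ along a root-to-leaf chain'' picture you describe is not how the quadratic term arises.
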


\subsection*{Proof ideas in Theorems~\ref{t:kOS''} and \ref{t:ElC''}}

Roughly speaking, Theorems \ref{t:kOS''} and \ref{t:ElC''} are based on two
new ideas. One is a structural result on laminar biset families.
In the edge-connectivity problems,
each extreme point solution of an LP relaxation is characterized by a 
laminar set family,
and this fact always plays a key role in the analysis of iterative
rounding algorithms.
For proving  Theorems \ref{t:kOS''} and \ref{t:ElC''},
we extend concept of laminarity to biset families.
Indeed, we have two definitions of laminar biset families, 
we call one of them simply a laminar family, and the other a
strongly laminar family.
We will show in Lemma~\ref{l:rank}
that an extreme point
of $P(f,b,E)$ is defined by a laminar biset family if $f$ is
intersecting supermodular and the graph is directed,
and by a strongly laminar biset family if $f$ is skew supermodular and
the graph is undirected.
Despite this
observation, we still have some difficulty in proving
Theorems~\ref{t:kOS''} and \ref{t:ElC''}. This is because
a node is possibly shared by more than one biset even in these biset families.
This is the main reason why 
Lau et al.~\cite{Lau2009} and Lau and Singh~\cite{Lau2008} needed to assume that the
degree-bounds are given on terminals only, and 
\cite{Nutov2012} needed an exponential approximation factor in the degree-bounds.
We overcome this
by presenting a bound on the number of bisets that share
degree-bounded nodes (Lemma~\ref{l:count}).

The other new idea is to use two token counting strategies depending upon the size of the
laminar biset family and the number of tight degree nodes.
A standard way to analyze iterative rounding is to compare the
number of variables and the number of constraints that define
an extreme point solution.
In our problem, 
each variable corresponds to an edge, and
an extreme point solution is defined by 
constraints corresponding to a
biset in a laminar or strongly laminar family, or to a 
tight degree node.
Hence we assign tokens to each edge,
and count the number of tokens in terms of the size of the biset family
and the number of tight degree nodes.
In Theorem~\ref{t:kOS''} and part (i) of Theorem~\ref{t:ElC''},
if the number of tight degree nodes is smaller compared with the number
of leaf bisets in the biset family, we can almost ignore the
tight degree nodes, and hence the token counting for the problem without
degree-bounds can be modified for such a case.
For the other case, we apply a token counting using the above structural
result on a laminar or strongly laminar biset family.
As for part (ii) of Theorem~\ref{t:ElC''}, we do the opposite.
If the number of tight degree nodes is smaller, we apply a known token distribution after careful
redistribution of tokens. 
If the number of tight degree nodes is larger, we use the fact that
each edge is incident to some node in $B$.

These ideas have been already given in a preliminary version
\cite{Fukunaga2012} of this paper.
By generalizing and simplifying them, the ratios were slightly
improved in Theorem~\ref{t:kOS''} and part (ii) of Theorem~\ref{t:ElC''}.
In \cite{Fukunaga2012},
we did not apply the idea of using two token strategies for
bicriteria approximation of {\DB} {\ElC} {\SN}.
This is why
\cite{Fukunaga2012} did not have a result corresponding to
part (i) of Theorem~\ref{t:ElC''} with a constant $\alpha$.

Proofs of Theorems \ref{t:kOS''} and \ref{t:ElC''} will be presented in the
subsequent two sections.
In Section~\ref{s:extr-points}, we discuss laminarity of biset families.
In Section~\ref{s:high-value}, we
prove the theorems by giving token counting arguments.

\section{Laminar biset families} \label{s:extr-points}

A set family ${\cal F}$ is {\em laminar} if for any $X,Y \in {\cal F}$,
either $X \subseteq Y$, $Y \subseteq X$, or $X \cap Y=\emptyset$.
Note that $X \subseteq Y$ or $Y \subseteq X$ holds if and only if
$\{X \cap Y,X \cup Y\}=\{X,Y\}$, and that $X \cap Y=\emptyset$ holds if and only if
$\{X \setminus Y,Y\setminus X\}=\{X,Y\}$.
Our aim is to establish that any extreme point of $P(f,b,E)$
with intersecting supermodular $f$ or with skew supermodular $f$ 
is defined by a  ``laminar'' family of bisets.
But how to define ``laminarity'' of bisets?

In the case of an intersecting supermodular $f$, it is natural to say that ${\cal F}$ is laminar if
$\{\hat{X} \cap \hat{Y},\hat{X} \cup \hat{Y}\}=\{\hat{X},\hat{Y}\}$
for any $\hat{X},\hat{Y} \in {\cal F}$ with $X \cap Y \neq \emptyset$.
In the case of a skew supermodular $f$, it is natural to say that ${\cal F}$ is laminar if
$\{\hat{X} \cap \hat{Y},\hat{X} \cup \hat{Y}\}=\{\hat{X},\hat{Y}\}$ or
$\{\hat{X} \setminus \hat{Y},\hat{Y} \setminus \hat{X}\}=\{\hat{X},\hat{Y}\}$ for any
$\hat{X},\hat{Y} \in {\cal F}$; we refer to this property as ``strong laminarity''.
Laminar biset families are used in \cite{Frank2009},
while strongly laminar biset families are defined in \cite{Fleischer2006} in terms of setpairs.
Following \cite{Nutov2012}, we formulate both these concepts in terms of bisets, 
by establishing an inclusion order (namely, a partial order) on bisets.

\begin{definition}
We say that a biset $\hat{Y}$ contains a biset $\hat{X}$ and
write $\hat{X} \subseteq \hat{Y}$ if $X \subseteq Y$ and $X^+ \subseteq Y^+$;
if also $\hat{X} \neq \hat{Y}$ then $\hat{X} \subset \hat{Y}$ and $\hat{Y}$ properly contains $\hat{X}$.
A biset family ${\cal F}$ is laminar (resp., strongly laminar)
if for any $\hat{X},\hat{Y} \in {\cal L}$ either
$\hat{X} \subseteq \hat{Y}$, 
$\hat{Y} \subseteq \hat{X}$, or
$X \cap Y=\emptyset$ (resp.,  $X \cap Y^+=\emptyset$ and $Y \cap X^+=\emptyset$).
\end{definition}

 \begin{figure}
  \centering
 \includegraphics[]{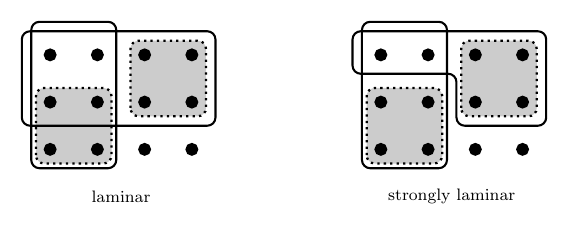}
  \caption{A laminar family and a strongly laminar family of bisets on twelve nodes.
  A filled region surrounded by a dotted line represents the inner-part of a biset, and a
  region surrounded by a solid line represents the outer-part.}
 \label{fig.laminar}
 \end{figure}

Figure~\ref{fig.laminar} illustrates examples of laminar and strongly
laminar biset families,
each of which consists of two bisets.
The family illustrated in the left is not strongly laminar because the outer-part
of a biset intersects the inner-part of the other.

Note that a strongly laminar biset family is laminar, and that the family of inner-parts
of a laminar biset family is a laminar set family.
The above inclusion order on bisets defines a forest structure of a laminar biset family.
In the rest of this paper, ``minimal'' and ``maximal'' are defined with respect to this inclusion order.
For a laminar biset family $\Fset$ and $\hat{X},\hat{Y} \in \Fset$, we say that $\hat{Y}$ is
the {\em parent} of $\hat{X}$ and $\hat{X}$ is a {\em child} of $\hat{Y}$
if $\hat{Y}$ is the minimal biset with $\hat{X} \subset \hat{Y}$.
A minimal biset in $\Fset$ is called a {\em leaf}.

Let $G=(V,E)$ be a graph, and let ${\cal F}$ be a biset family on $V$.
For $E' \subseteq E$ the characteristic vector of $E'$ is an $|E|$-dimensional vector whose component
corresponding to $e \in E$ is $1$ if $e\in E'$, and $0$ otherwise.
If $G$ is a directed graph, then let
$\chi^+_E({\cal F})$ denote the set of characteristic vectors of the edge sets in 
$\{\delta^+_E(\hat{S})\colon \hat{S} \in {\cal F}\}$ and 
$\chi^-_E({\cal F})$ is the set of characteristic vectors of 
the edge sets in $\{\delta^-_E(\hat{S})\colon \hat{S} \in {\cal F}\}$.
If $G$ is an undirected graph, let $\chi_E({\cal F})$ denote the set of characteristic vectors of 
the edge sets in $\{\delta_E(\hat{S})\colon \hat{S} \in {\cal F}\}$.
For $C \subseteq V$, $\chi_E(C)=\chi_E(\{(v,v)\colon v \in C\})$, and similarly $\chi^+_E(C)$ and $\chi^-_E(C)$ are defined. 
The following statement was proved for set-functions in \cite{Bansal2009,Lau2009}, and the proof for biset
functions is similar (e.g., see \cite{Nutov2012} for the case of a skew supermodular $f$). 

\begin{lemma} \label{l:rank}
Let $x$ be an extreme point of $P(f,b,E)$ with $0 < x_e < 1$ for all $e \in E \neq \emptyset$.
 Let $\mathcal{L}$ be a family of bisets, $C \subseteq B$, and suppose
 that they satisfy  the following conditions.
\begin{itemize}
\item[{\em (i)}]
If $G$ is a directed graph and $f$ is intersecting supermodular, then 
$x(\delta^-_E(\hat{S}))=f(\hat{S}) \geq 1$ for all $\hat{S} \in {\cal L}$,
$x(\delta^+_E(v))=b(v)$ for all $v \in C$, 
the vectors in $\chi^-_E({\cal L}) \cup \chi^+_E(C)$ are linearly independent,
and ${\cal L}$ is laminar.
\item[{\em (ii)}]
If $G$ is an undirected graph and $f$ is skew supermodular, then 
$x(\delta_E(\hat{S}))=f(\hat{S}) \geq 1$ for all $\hat{S} \in {\cal L}$,
$x(\delta_E(v))=b(v)$ for all $v \in C$, 
the vectors in $\chi_E({\cal L}) \cup \chi_E(C)$ are linearly independent,
and ${\cal L}$ is strongly laminar.
\end{itemize}
 If $\mathcal{L}$ and $C$ are inclusion-wise maximal,
 then $|\mathcal{L}|+|C|=|E|$.
\end{lemma}


The following parameter is defined in \cite{Nutov2012}.

\begin{definition}
Let ${\cal L}$ be a laminar biset family on $V$, let $\hat{S} \in {\cal L}$ and let $v \in V$.
We say that $\hat{S}$ {\em owns}   $v$ if $\hat{S}$ is the minimal biset in ${\cal L}$ with $v \in S$.
We say that $\hat{S}$ {\em shares} $v$ if $\hat{S}$ is a   minimal biset in ${\cal L}$ with $v \in \Gamma(\hat{S})$.
Let $\Delta_{\cal L}(v)$ denote the number of bisets in ${\cal L}$ that
share $v$.
\end{definition}

From the definition it follows that every node $v$ is owned by at most
one biset in a laminar family,
and that if two bisets in a laminar family share the same node $v$
then they are incomparable, namely, that none of them contains the other.
The definition also motivates the following lemma, which we need for
proving our structural result described in Lemma~\ref{l:count}.

\begin{lemma} \label{l:laminar}
Let ${\cal L}$ be a laminar biset family, and let $v \in V$.
Let ${\cal X}$ be a sub-family of ${\cal L}$ such that
$v \in \Gamma(\hat{X})$ for each $\hat{X} \in {\cal X}$, and the bisets in 
${\cal X}$ are pair-wise incomparable.
For each $\hat{X} \in {\cal X}$, let $\hat{X}'$ be a biset in ${\cal L}$ such that
$\hat{X} \subseteq \hat{X}'$ and $\hat{X'}$ contains no biset in ${\cal
 X} \setminus \{\hat{X}\}$
 (possibly $\hat{X}'=\hat{X}$). 
Then $v \in \Gamma(\hat{X}')$ or $v \in X'$
holds, and the latter holds for at most one biset in ${\cal X}'=\{\hat{X}' \colon
 \hat{X} \in {\cal X}\}$. Furthermore, if ${\cal L}$ is strongly laminar
 then the former holds for all bisets in  ${\cal X}'$.
\end{lemma}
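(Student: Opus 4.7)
The plan is to lift properties of $\mathcal{X}$ up to $\mathcal{X}'$ along the inclusion $\hat{X} \subseteq \hat{X}'$, and then combine the (strong) laminarity of $\mathcal{L}$ with the defining property ``$\hat{X}'$ contains no element of $\mathcal{X} \setminus \{\hat{X}\}$'' to forbid bad configurations of $v$.

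First I would dispatch the dichotomy: since $v \in \Gamma(\hat{X}) \subseteq X^+ \subseteq (X')^+$, either $v \in X'$ or $v \in \Gamma(\hat{X}')$, which is immediate from $\hat{X} \subseteq \hat{X}'$. Next, for the ``at most one'' statement, I would suppose toward contradiction that two distinct $\hat{X}_1', \hat{X}_2' \in \mathcal{X}'$ both contain $v$ in their inner parts, associated to $\hat{X}_1, \hat{X}_2 \in \mathcal{X}$. A short check gives injectivity of the assignment $\hat{X} \mapsto \hat{X}'$: if $\hat{X}_1' = \hat{X}_2'$, then this common biset would contain both $\hat{X}_1$ and $\hat{X}_2$, forcing $\hat{X}_1 = \hat{X}_2$ via the defining property. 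So $\hat{X}_1 \neq \hat{X}_2$, and since $v \in X_1' \cap X_2' \neq \emptyset$, laminarity rules out disjointness of the inner parts and forces, say, $\hat{X}_1' \subseteq \hat{X}_2'$; this makes $\hat{X}_2'$ contain $\hat{X}_1 \in \mathcal{X} \setminus \{\hat{X}_2\}$, contradicting its defining property.

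For the strongly laminar case (with $|\mathcal{X}| \geq 2$), I would upgrade the argument to exclude $v \in X'$ for every $\hat{X}' \in \mathcal{X}'$. Fix such a candidate $\hat{X}'$ associated to $\hat{X}$, pick any other $\hat{Y} \in \mathcal{X} \setminus \{\hat{X}\}$ with associated $\hat{Y}'$, and note that $v \in (Y')^+$ by the same lift as in step one. Then $v \in X' \cap (Y')^+$ rules out the ``pairwise-disjoint'' branch of the strong laminarity trichotomy and forces comparability of $\hat{X}'$ and $\hat{Y}'$; either direction would place one ancestor inside the other and hence containing the other's associated element of $\mathcal{X}$, contradicting the defining property of that ancestor. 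The only real obstacle is bookkeeping; conceptually the lemma just says that pairwise incomparability of $\mathcal{X}$ plus the ``no other element contained'' rule on $\mathcal{X}'$ propagates upward through $\hat{X} \mapsto \hat{X}'$.
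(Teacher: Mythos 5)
Your proof is correct and takes essentially the same approach as the paper's: establish the dichotomy $v\in X'$ or $v\in\Gamma(\hat{X}')$, then exploit the pairwise incomparability of the lifted bisets $\hat{X}'$ together with laminarity (resp.\ strong laminarity) to show $v$ lies in at most one (resp.\ no) inner part $X'$. The only difference is that you spell out the argument that the $\hat{X}'$s are pairwise incomparable (via the ``contains no other member of $\mathcal{X}$'' property), which the paper asserts without proof.
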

\begin{proof}
 Since $\Gamma(\hat{X})$ is contained by the outer-part of $\hat{X}'$,
 $v \in \Gamma(\hat{X})$ is either in $\Gamma(\hat{X}')$ or in $X'$.
 Any two bisets $\hat{X}'$ and $\hat{Y}'$ in ${\cal X}'$ are
 incomparable, and hence $X' \cap Y' = \emptyset$ by the laminarity of
 $\Lset$.
 It follows from this fact that $v$ is contained by the inner-part of at
 most one biset in ${\cal X}'$.
 When $\Lset$ is strongly laminar, $X' \cap
 \Gamma(\hat{Y}')=\Gamma(\hat{X}') \cap Y'=\emptyset$
 holds for any $\hat{X}',\hat{Y}' \in {\cal X}'$, and hence 
 $v \in \Gamma(X')$ holds for all $\hat{X}' \in {\cal X}'$.
\end{proof}

 The following key statement will be used in our token countings.

\begin{lemma}[Structural Lemma] \label{l:count}
Let ${\cal L}$ be a  biset family on $V$, let $C \subseteq V$, let ${\cal E}$ be the set of leaves of ${\cal L}$,
and let
$\gamma=\max_{\hat{S} \in {\cal L}}|\Gamma(\hat{S})|$. Then
\begin{eqnarray}
\sum_{v \in C} \max\{\Delta_{\cal L}(v),1\}   & \leq & 2\gamma(|{\cal E}|-1)+|C| \ \
\mbox{ if } {\cal L} \mbox{ is laminar,}
\label{e:delta'} \\
\sum_{v \in C} \max\{\Delta_{\cal L}(v),1\} & \leq & \gamma|{\cal E}|+|C|   \ \ \ \ \ \ \ \ \ \
\mbox{ if } {\cal L} \mbox{ is strongly laminar.} \label{e:delta}
\end{eqnarray}
\end{lemma}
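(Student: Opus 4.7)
My plan is to reduce both inequalities to bounds on $\sum_{v\in V}(\Delta_{\mathcal{L}}(v)-1)^+$ via the identity $\max\{a,1\}=1+(a-1)^+$ for $a\ge 0$: summing over $v \in C$ yields $|C|+\sum_{v\in C}(\Delta_{\mathcal{L}}(v)-1)^+$, and since the summand is non-negative, enlarging the index set to all of $V$ only strengthens the bound I seek. It therefore suffices to show $\sum_{v\in V}(\Delta_{\mathcal{L}}(v)-1)^+ \le 2\gamma(|\mathcal{E}|-1)$ when $\mathcal{L}$ is laminar and $\sum_{v\in V}(\Delta_{\mathcal{L}}(v)-1)^+ \le \gamma|\mathcal{E}|$ when $\mathcal{L}$ is strongly laminar. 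I will prove both bounds by induction on $|\mathcal{L}|$, the base case $|\mathcal{L}|\le 1$ being trivial because then $\Delta_{\mathcal{L}}(v)\le 1$ for every $v$.

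For the inductive step, pick any leaf $\hat{L}\in\mathcal{L}$ and set $\mathcal{L}' = \mathcal{L}\setminus\{\hat{L}\}$. Since $L\neq\emptyset$ in our setting, the strict ancestors of $\hat{L}$ form a chain in $\mathcal{L}$; consequently, for each $v\in\Gamma(\hat{L})$ at most one ancestor of $\hat{L}$ can become a new minimal sharer in $\mathcal{L}'$ (the lowest one with $v\in\Gamma$, provided $\hat{L}$ was its only relevant descendant), while for $v\notin\Gamma(\hat{L})$ the value $\Delta_{\mathcal{L}}(v)$ is unchanged. Hence $\sum_v(\Delta_{\mathcal{L}}-1)^+ \le \sum_v(\Delta_{\mathcal{L}'}-1)^+ + |\Gamma(\hat{L})| \le \sum_v(\Delta_{\mathcal{L}'}-1)^+ + \gamma$. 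I split into Case~(A), where $\hat{L}$ is a root-leaf or its parent has other children and $|\mathcal{E}(\mathcal{L}')| = |\mathcal{E}|-1$, and Case~(B), where $\hat{L}$'s parent $\hat{P}$ has $\hat{L}$ as its only child, so $\hat{P}$ becomes a new leaf of $\mathcal{L}'$ and $|\mathcal{E}(\mathcal{L}')| = |\mathcal{E}|$. Case~(A) closes the induction immediately in both settings.

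The crux is Case~(B). For strong laminarity I will show the ``extras'' $\sum_v\bigl[(\Delta_{\mathcal{L}}-1)^+ - (\Delta_{\mathcal{L}'}-1)^+\bigr]$ vanish. Every $v\in\Gamma(\hat{L})$ lies in $P^+$, so either $v\in\Gamma(\hat{P})$ or $v\in P\setminus L$. In the first sub-case $\hat{P}$ is the new minimal sharer of $v$ in $\mathcal{L}'$, keeping $\Delta$ constant. In the second sub-case, a hypothetical second sharer $\hat{S}\neq\hat{L}$ of $v$ in $\mathcal{L}$ must be incomparable with $\hat{P}$ (the cases $\hat{S}\subseteq\hat{P}$ and $\hat{P}\subseteq\hat{S}$ are ruled out because $\hat{L}$ is the sole child of $\hat{P}$ and is itself a leaf), and then strong laminarity forces $P\cap S^+=\emptyset$; since $v\in P$ and $v\in\Gamma(\hat{S})\subseteq S^+$, this contradicts the hypothesis. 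Hence $\Delta_{\mathcal{L}}(v)=1$, the $(\cdot-1)^+$ contribution is unchanged, and the inductive hypothesis yields the bound $\gamma|\mathcal{E}|$.

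The main obstacle is laminar Case~(B): without the disjointness $P\cap S^+=\emptyset$, the preceding contradiction breaks and the extras can genuinely reach $\gamma$ per single leaf removal. My plan is to absorb this via amortization: instead of removing only $\hat{L}$, I peel off $\hat{L}$ together with the entire maximal chain of only-child parents above it, continuing until the chain terminates at either a root or a branching node, which yields a single Case~(A) step in which $|\mathcal{E}|$ drops by one. Using Lemma~\ref{l:laminar} to track how many distinct $v$ can be newly revealed as sharers across this chain, the cumulative extras over the entire chain turn out to be at most $2\gamma$, exactly the slack afforded by the factor-$2$ weakening of the laminar bound, and the induction closes.
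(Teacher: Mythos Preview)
Your reduction to bounding $\sum_{v}(\Delta_{\mathcal L}(v)-1)^+$ and your argument for inequality~\eqref{e:delta} (the strongly laminar case) are correct; the leaf-by-leaf induction with Cases~(A) and~(B) is a finer-grained version of the paper's chain-removal induction on $|\mathcal{E}|$, and your key observation in Case~(B)---that a node $v\in\Gamma(\hat L)$ landing in $P\setminus L$ cannot have a second sharer under strong laminarity---is exactly what the paper extracts from Lemma~\ref{l:laminar}.

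Your plan for inequality~\eqref{e:delta'} (the laminar case), however, has a genuine gap: the assertion that peeling an entire only-child chain incurs cumulative extras at most $2\gamma$ is false. With $\gamma=1$, take a chain $\hat X_0\subset\hat X_1\subset\hat X_2$ (with $\hat X_2$ a root) where $\Gamma(\hat X_j)=\{v_j\}$, the $v_j$ are pairwise distinct, and $v_j\in X_{j+1}$ for $j<2$; add three further root-leaves $\hat Y_0,\hat Y_1,\hat Y_2$ with $\Gamma(\hat Y_j)=\{v_j\}$ and with the $Y_j$ pairwise disjoint and disjoint from $X_2$. This family is laminar (though not strongly laminar, since e.g.\ $v_0\in X_2\cap Y_0^+$). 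Each $v_j$ has $\Delta_{\mathcal L}(v_j)=2$, so $\sum_v(\Delta_{\mathcal L}(v)-1)^+=3$; after removing the chain $\{\hat X_0,\hat X_1,\hat X_2\}$ each $v_j$ retains only the sharer $\hat Y_j$ and the sum drops to $0$. The extras equal $3>2=2\gamma$, and by lengthening the chain and adding one new $\hat Y_j$ per link you make the extras arbitrarily large while $|\mathcal{E}|$ still drops by only one. Lemma~\ref{l:laminar} does not rescue this: applied to a single node $v$, it only says that among chosen ancestors of the sharers of $v$ at most one contains $v$ in its inner part; it places no bound on how many \emph{distinct} boundary nodes get absorbed into inner parts as you walk up the chain.

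The paper sidesteps this by a direct counting argument rather than induction. It lets $\mathcal L'$ be the family of bisets whose parent has at least two children, so $|\mathcal L'|\le 2|\mathcal E|-2$, and uses Lemma~\ref{l:laminar} to show that every $v$ with $\Delta_{\mathcal L}(v)\ge 2$ lies in the boundary of at least $\Delta_{\mathcal L}(v)-1$ distinct members of $\mathcal L'$ (walk up from each sharer to the nearest member of $\mathcal L'$; these walks do not merge, and all but at most one keep $v$ in the boundary). Double-counting then gives $\sum_v(\Delta_{\mathcal L}(v)-1)^+\le\sum_{\hat Y\in\mathcal L'}|\Gamma(\hat Y)|\le\gamma|\mathcal L'|\le 2\gamma(|\mathcal E|-1)$.
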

\begin{proof}
We may assume that $|\Delta_{\cal L}(v)| \geq 2$ for every $v \in C$,
as if $\Delta_{\cal L}(v) \leq 1$ for some $v \in C$, then
excluding $v$ from $C$ decreases both sides of each of (\ref{e:delta'}) and (\ref{e:delta}) by exactly one.
 We also assume that exactly one biset is maximal in $\cal L$;
 otherwise, we add the biset $(V,V)$ to $\cal L$, which has no effect
 on the claim.

 We prove (\ref{e:delta'}).
 Let ${\cal L}'$ be the family of bisets in ${\cal L}$
 whose parent has at least two children.
It is known and easy to prove by induction that $|{\cal L}'| \leq 2|{\cal E}|-2$.
 Let $v \in C$.
 ${\cal L}$ includes
 $\Delta_{\cal L}(v)$ incomparable bisets that contain $v$ in their
 boundaries.
 Let $\cal X$ denote a family of such bisets.
 For each biset $\hat{X} \in \cal X$,
 let $\hat{X}'$ be the minimal biset in $\cal L'$
 with $\hat{X} \subseteq \hat{X}'$.
 Note that $\hat{X'}$ includes no biset in ${\cal X} \setminus \{\hat{X}\}$.
 Lemma~\ref{l:laminar} indicates that 
 at most one biset in $\{\hat{X}' \colon \hat{X} \in \cal X\}$ includes $v$ in its inner-part.
 Thus $v$ belongs to the boundaries of at least $\Delta_{\cal L}(v)-1$ bisets in ${\cal L}'$.
 This implies
 \[
 \sum_{v \in C} \max\{\Delta_{\cal L}(v),1\} -|C| = \sum_{v \in C} (\max\{\Delta_{\cal L}(v),1\}-1) \leq
 \gamma |{\cal L}'| \leq 2\gamma(|{\cal E}|-1) \ .
 \]

We prove (\ref{e:delta}) by induction on $|{\cal E}|$.
Assume therefore that $|{\cal E}| \geq 2$, as otherwise $|\Delta_{\cal L}(v)| \leq 1$ for every $v \in C$.
Then there exists $\hat{S} \in {\cal L}$ such that $\hat{S}$ has at least two children,
but every proper descendant of $\hat{S}$ has at most one child. Let $\hat{R}$ be a child of $\hat{S}$,
 let $\hat{Z} \subseteq \hat{R}$ be a leaf of ${\cal L}$
(possibly $\hat{R}=\hat{Z}$), and let
${\cal P}=\{\hat{Y} \in {\cal L}\colon  \hat{Z} \subseteq \hat{Y} \subseteq \hat{R}\}$
be the ``chain'' from the child $\hat{R}$ of $\hat{S}$ to $\hat{Z}$. 
Since we assume that $|\Delta_{\cal L}(v)| \geq 2$ for every $v \in C$,
then by Lemma~\ref{l:laminar}, every node that is shared by some biset in ${\cal P}$
belongs to $\Gamma(\hat{R})$, so there are at most $|\Gamma(\hat{R})| \leq \gamma$ such nodes.
Hence excluding the bisets in ${\cal P}$ from ${\cal L}$ decreases the left hand side
of (\ref{e:delta}) by at most $\gamma$, while $|{\cal E}|$ decreases by $1$,
hence the right hand side of (\ref{e:delta}) decreases by $\gamma$.
\end{proof}

\section{Proof of Theorems \ref{t:kOS''} and \ref{t:ElC''}} \label{s:high-value}

Let $x=x^*$ be an extreme point solution to the corresponding biset LP relaxation, and
let ${\cal L}$ and $C$ be as in Lemma \ref{l:rank}.
Let ${\cal E}$ be the set of leaf bisets in ${\cal L}$.
For a biset $\hat{S} \in {\cal L}$, denote by ${\cal C}_S$ the set of children of $\hat{S}$, by
$E^+_S$ the set of edges in $E$ that cover $\hat{S}$ but not a child of $\hat{S}$, and by
$E^-_S$ the set of edges in $E$ that cover a child of $\hat{S}$ but not
$\hat{S}$;
see examples in Fig.~\ref{fig.edge}.
If $\hat{S} \in {\cal E}$, then $E^+_S=\delta^-_E(\hat{S})$ when $E$ is the
set of arcs, and $E^+_S=\delta_E(\hat{S})$ when $E$ is the
set of undirected edges.
$E^-_S= \emptyset$ if $\hat{S} \in {\cal E}$.
If $\hat{S}\in {\cal L} \setminus {\cal E}$, then
$E^+_S \cup E^-_S \neq \emptyset$ since otherwise
the vectors in $\chi_E(\{\hat{S}\}\cup {\cal C}_S)$ are linearly dependent.

 \begin{figure}
  \centering
 \includegraphics[scale=.65]{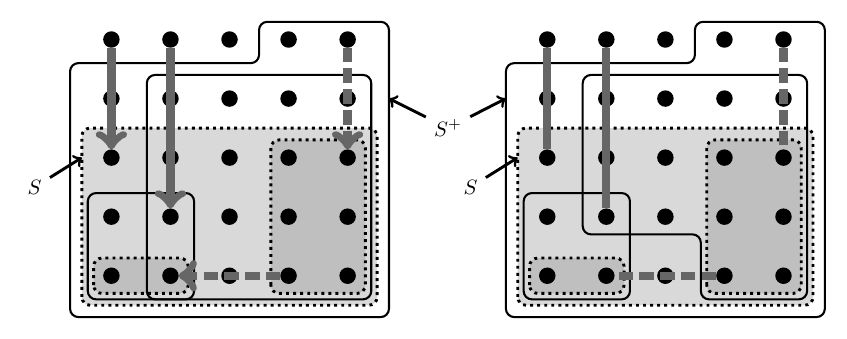}
  \caption{Examples of edges in $E^+_S$ and $E^-_S$.
  In the left example,
  edges are directed, and bisets are $\hat{S}$ and its two children that
  form a laminar family. In the right,
  edges are undirected and
  bisets are $\hat{S}$ and its two children that
  form a strongly laminar family. 
  Edges in $E_S^+$ and in $E_S^-$ are represented by gray solid lines
  and gray dotted lines, respectively.}
 \label{fig.edge}
 \end{figure}

\subsection{Intersecting supermodular $f$ and directed graphs (Theorem~\ref{t:kOS''})}

Assume for the sake of contradiction that $0 <  x(e)<1/\alpha$ for every $e \in E$.
Assign one token to every edge $e =uv \in E$,
putting $1-\alpha x(e) >0$ ``tail-tokens'' at $u$ and $\alpha x(e)>0$ ``head-tokens'' at $v$.
We will show that these tokens can be distributed such that every member of ${\cal L} \cup C$
gets one token, and some tokens are not assigned.
This gives the contradiction $|E|>|{\cal L}|+|C|$.

For every $\hat{S} \in {\cal L}$,
the amount of head-tokens of edges in $E^+_S$ and tail-tokens of edges in $E^-_S$ is
$\alpha x(E^+_S)+|E^-_S|-\alpha x(E^-_S)$.
Note that this is an integer, since $\alpha$ is an integer, and
since $x(E^+_S)-x(E^-_S)=f(\hat{S})-\sum_{\hat{R} \in {\cal
C}_S}f(\hat{R})$ follows from
$\delta_E^-(\hat{S})\setminus (\bigcup_{\hat{R} \in {\cal
C}_S}\delta_E^-(\hat{R}))=E^+_S$ and
$(\bigcup_{\hat{R} \in {\cal C}_S}\delta_E^-(\hat{R})) \setminus \delta_E^-(\hat{S}) =E^-_S$.
It is a positive integer since 
$x(E^+_S)>0$ if $E^+_S\neq \emptyset$, and 
$|E^-_S|-\alpha x(E^-_S) > 0$ if $E^-_S \neq \emptyset$.
Thus if we assign to every $\hat{S} \in {\cal L}$
the head-tokens of edges in $E^+_S$ and
the tail-tokens of edges in $E^-_S$,
then every member of ${\cal L}$ will get at least one token, and the tail-tokens entering the maximal
members of ${\cal L}$ are not assigned.
An edge belongs to $E_S^+$ for at most one biset $\hat{S} \in
\mathcal{L}$.
Thus each head-token is not counted twice.
The same goes for tail-tokens.

We note that we did not use
the assumption $\alpha \geq 2$ above, and hence this token
distribution is possible even if $\alpha=1$.
Hence, if $C=\emptyset$, and in particular if there are no degree bounds,
then this implies that the extreme points
of the polytope $P(f,b,E)$ are all integral.

\begin{lemma} \label{l:alpha'}
If $(\alpha-1)|{\cal E}| \geq |C|$ then there is $e \in E$ with $x(e) \geq 1/\alpha$.
\end{lemma}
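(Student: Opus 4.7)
My plan is a proof by contradiction. I assume that $x(e)<1/\alpha$ holds for every $e\in E$ and extend the token bookkeeping introduced just before the lemma so that it also covers the degree-constrained nodes in $C$. The goal is to distribute the $|E|$ tokens in such a way that every member of $\mathcal{L}\cup C$ receives at least one token while at least one token remains unassigned; this will contradict the equality $|E|=|\mathcal{L}|+|C|$ guaranteed by Lemma~\ref{l:rank}.

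First, I reuse the biset assignment already set up: the head-tokens of $E^+_S$ together with the tail-tokens of $E^-_S$ deliver a positive-integer number of tokens to each $\hat{S}\in\mathcal{L}$, and every leaf $\hat{S}\in\mathcal{E}$ is credited with $\alpha f(\hat{S})\geq\alpha$ head-tokens. Each leaf therefore carries $\alpha-1$ surplus tokens beyond the one it needs, for a total leaf surplus of at least $(\alpha-1)|\mathcal{E}|\geq|C|$ by the hypothesis. I transfer one leaf surplus token to each $v\in C$, so that every member of $\mathcal{L}\cup C$ holds at least one token, each counted exactly once. (If $\mathcal{L}=\emptyset$ then $|\mathcal{E}|=0$, the hypothesis forces $|C|=0$, and Lemma~\ref{l:rank} gives $|E|=0$, making the statement vacuous.)

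It remains to exhibit a still-unassigned token. I pick any maximal biset $\hat{M}\in\mathcal{L}$. Because $\hat{M}$ has no parent in $\mathcal{L}$, no edge in $\delta^-_E(\hat{M})$ lies in $E^-_S$ for any $\hat{S}\in\mathcal{L}$, so the tail-tokens of these edges were never consumed in the biset assignment and are disjoint from the head-tokens used above. Their total amount is the integer $|\delta^-_E(\hat{M})|-\alpha f(\hat{M})$, which is at least $1$ since $\alpha x(e)<1$ for every $e$ yields $|\delta^-_E(\hat{M})|>\alpha x(\delta^-_E(\hat{M}))=\alpha f(\hat{M})$. Adding this leftover token to the $|\mathcal{L}|+|C|$ tokens already distributed gives $|E|\geq|\mathcal{L}|+|C|+1$, contradicting Lemma~\ref{l:rank}. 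The step I expect to need the most care is this last disjointness claim, which I would verify by a short case analysis on the position of $\hat{S}$ relative to $\hat{M}$ (above, equal to, strictly inside, or incomparable with $\hat{M}$), using laminarity and the maximality of $\hat{M}$ to rule out each possibility.
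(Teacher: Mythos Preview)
Your proposal is correct and follows essentially the same line as the paper's proof: both use the leaf surplus of $(\alpha-1)|\mathcal{E}|\geq|C|$ head-tokens to cover $C$, keep one token at each non-leaf biset via the head/tail token count on $E^+_S\cup E^-_S$, and exhibit the unassigned tail-tokens of edges entering a maximal member of $\mathcal{L}$ for the contradiction. Your flagged disjointness claim is exactly the paper's observation that ``the tail-tokens of the edges entering the maximal members of $\mathcal{L}$ are not assigned,'' and the case analysis you sketch (using laminarity and maximality of $\hat{M}$) goes through without difficulty.
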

 \begin{proof}
Every leaf $\hat{S}$ gets $\alpha x(\delta_E^-(\hat{S}))=\alpha f(\hat{S}) \geq \alpha$
head-tokens from edges in $E^+_S$. Hence we have $\alpha |{\cal E}|$ tokens at leaves.
By the assumption $(\alpha-1)|{\cal E}| \geq |C|$ we have
$|{\cal E}| +|C| \leq \alpha |{\cal E}|$,
hence the $\alpha |{\cal E}|$ tokens at leaves suffice to give one token to each member of ${\cal E} \cup C$.
Every non-leaf biset $\hat{S} \in {\cal L}$ gets
the head-tokens from edges in $E^+_S$ and
the tail-tokens from edges in $E^-_S$, so at least one token.
Consequently, every member of ${\cal L} \cup C$
gets one token, and the tail-tokens of the edges entering
the maximal members of ${\cal L}$ are not assigned.
This gives the contradiction $|E|>|{\cal L}|+|C|$.
 \end{proof}

\begin{lemma} \label{l:beta'}
If $|C| > (\alpha-1) |{\cal E}|$, then
there is $e \in E$ with $x(e) \geq 1/\alpha$ or
there is $v \in C$ with $|\delta^+_E(v)| \leq \alpha b(v)+\beta$, 
where $\beta = \left\lceil\frac{2\gamma}{\alpha-1}\right\rceil+1$.
\end{lemma}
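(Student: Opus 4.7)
I will argue by contradiction using an extension of the token scheme from Lemma~\ref{l:alpha'}. Assume toward contradiction that $x(e)<1/\alpha$ for every $e\in E$ and $|\delta_E^+(v)|\geq\alpha b(v)+\beta+1$ for every $v\in C$, where $\beta=\lceil 2\gamma/(\alpha-1)\rceil+1$. As before, give each edge $e=vu$ a budget of $1-\alpha x(e)>0$ tail-tokens at $v$ and $\alpha x(e)>0$ head-tokens at $u$, totalling $|E|$ tokens. The goal is to deliver at least one token to each member of $\mathcal{L}\cup C$ while leaving at least one token unassigned, contradicting $|E|=|\mathcal{L}|+|C|$.

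\medskip

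\noindent The starting assignment delivers head-tokens of $E_S^+$ and tail-tokens of $E_S^-$ to each $\hat{S}\in\mathcal{L}$, giving every biset a positive integer amount of tokens, with each leaf collecting $\alpha f(\hat{S})\geq\alpha$. Under the hypothesis $|C|>(\alpha-1)|\mathcal{E}|$, the leaf surplus $(\alpha-1)|\mathcal{E}|$ alone is insufficient to cover $C$, so I will supplement it by harvesting tail-tokens at $C$ itself. Each $v\in C$ has $|\delta_E^+(v)|-\alpha b(v)\geq\beta+1$ tail-tokens from its own out-edges. For each out-edge $vu$, the bisets entered by $vu$ form a chain in $\mathcal{L}$, and in the starting scheme its tail-tokens are assigned to the parent $\hat{S}_{vu}^-$ of the outermost entered biset (when such a parent exists). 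A key structural observation, following from laminarity, is that each such $\hat{S}_{vu}^-$ is either on the ancestor chain of $v$'s owner, or lies on a chain immediately above one of the $\Delta_\mathcal{L}(v)$ minimal bisets sharing $v$.

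\medskip

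\noindent The reassignment step is to transfer one unit of tail-tokens at each $v\in C$ back to $v$ itself and compensate the biset losing this unit from the leaf surplus. For the global count, Lemma~\ref{l:count} gives
\[
\sum_{v\in C}\max\{\Delta_\mathcal{L}(v),1\}\leq 2\gamma(|\mathcal{E}|-1)+|C|,
\]
and combining with $|\mathcal{E}|<|C|/(\alpha-1)$ (from $|C|>(\alpha-1)|\mathcal{E}|$), the right-hand side is strictly less than $\bigl(2\gamma/(\alpha-1)+1\bigr)|C|\leq\beta|C|$. Since each $v\in C$ provides at least $\beta$ extra tail-tokens beyond the single unit reserved for itself, the total extra supply strictly exceeds the compensation demanded by the affected bisets; together with the $(\alpha-1)|\mathcal{E}|$ units of leaf surplus covering the remaining deficits at bisets, at least one token remains unassigned, giving $|E|>|\mathcal{L}|+|C|$, a contradiction.

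\medskip

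\noindent The hardest part will be the local routing rather than the global count: one has to verify that each specific biset losing tail-tokens can be compensated from the surplus along the laminar tree, and that each shared node $v\in\Gamma(\hat{S})$ is charged only once across the chains incident to it. Lemma~\ref{l:laminar} handles the latter. For the former, the plan is to choose, for each $v\in C$, an out-edge $vu$ to donate the unit whose $\hat{S}_{vu}^-$ has enough headroom (so that a leaf surplus can be pushed along a single chain up to it), and to rely on the precisely tuned constant in $\beta$ to close the local accounting.
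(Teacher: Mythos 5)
Your proposal has the right ingredients (the same token scheme as Lemma~\ref{l:alpha'} extended with tail-token surplus at $C$, plus Lemma~\ref{l:count} for the global bookkeeping), but it is explicitly incomplete, and the incomplete part is exactly where the paper's proof differs.

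The paper does not attempt the ``local routing'' you sketch. It simply pools the $\alpha|\Eset|$ head-tokens at leaves together with the $\geq(\beta+1)|C|$ tail-tokens at $C$-nodes and redistributes this pool \emph{globally}: one token to each leaf and $\Delta_{\cal L}(v)+2$ tokens to each $v\in C$. Each $v$ keeps one token and gives one token to \emph{every} biset that owns or shares $v$ (whether or not that biset actually lost a tail-token), which makes the compensation trivially correct without tracing which $\hat{S}_{vu}^-$ lost what. Feasibility of this redistribution is the whole content of the numerical step: one shows
$|\Eset|+\sum_{v\in C}(\Delta_{\cal L}(v)+2) < \alpha|\Eset|+(\beta+1)|C|$, using \eqref{e:delta'} and $|C|>(\alpha-1)|\Eset|$. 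Your accounting instead compares $\sum_{v\in C}\max\{\Delta_{\cal L}(v),1\}$ to $\beta|C|$, which is a different inequality and does not by itself establish that the desired $\Delta_{\cal L}(v)+2$ tokens per node and one token per leaf are available.

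The concrete gap in your write-up is the routing step you yourself flag: you would need to identify, for each $v\in C$, which bisets lose tail-tokens, show each such biset can be compensated, and show no biset is double-charged. This is nontrivial precisely because the parent $\hat{S}_{vu}^-$ of the topmost biset entered by an out-edge $vu$ need not own or share $v$, so it is not directly covered by Lemma~\ref{l:laminar}. The paper's choice of giving tokens to \emph{all} owners/sharers of $v$ and relying on the ``$\Delta_{\cal L}(v)+2$'' budget is what avoids having to resolve this case analysis; your plan does not yet give a replacement argument, so as written the proof is not closed.
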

\begin{proof}
Assume that
$|\delta^+_E(v)| \geq \alpha b(v)+\beta+1$ for every $v \in C$.
Then the amount of tail-tokens at each $v \in C$ is at least
$\alpha b(v)+\beta+1 - \alpha x(\delta^+_E(v))= \beta+1$.
Hence we have at least $\alpha |{\cal E}|+(\beta+1)|C|$ tokens at leaves and nodes in $C$.
From these tokens, we give one token to every leaf and $\Delta_{\cal L}(v)+2$ tokens
to every $v \in C$, and spare tokens remain.
This is possible,
since by \eqref{e:delta'} of Lemma~\ref{l:count} and the assumption $|C| > (\alpha-1) |{\cal E}|$,
\begin{eqnarray*}
|{\cal E}|+\sum_{v \in C} (\Delta_{\cal L}(v)+2)  &\leq&
|{\cal E}| +3|C| + 2\gamma(|{\cal E}|-1) \\
&=&\alpha |{\cal E}| + (2\gamma+1-\alpha)|{\cal E}| + 3|C|-2\gamma\\
&=&\alpha |{\cal E}| + |C| \left((2\gamma+1-\alpha)\frac{|{\cal E}|}{|C|} + 3\right)-2\gamma\\
&<&\alpha |{\cal E}| + |C| \left(\frac{2\gamma}{\alpha-1} + 2\right)-2\gamma\\
&\leq&\alpha |{\cal E}| + |C| (\beta+1) \ .
\end{eqnarray*}

Every $v \in C$ will keep one token, and from the remaining at least $\Delta_{\cal L}(v)+1$ tokens
$v$ will give one token to every biset that owns or shares $v$.
Now let $\hat{S} \in {\cal L}$ be a biset that is not a leaf
and that
does not own or share any node in $C$.
Then $\hat{S}$ gets
the head-tokens from edges in $E^+_S$ and
the tail tokens from edges in $E^-_S$, so at least one token as argued above.
Consequently, every members of ${\cal L} \cup C$
gets one token. This gives the contradiction $|E|>|{\cal L}|+|C|$.
\end{proof}

Theorem~\ref{t:kOS''} follows by combining Lemmas \ref{l:alpha'} and \ref{l:beta'}.

\subsection{Skew supermodular $f$ and undirected graphs (Part (i) of Theorem~\ref{t:ElC''})}

We now consider Theorem~\ref{t:ElC''}.
Recall that edges are undirected and $\mathcal{L}$ is 
strongly laminar in this theorem.
We deduce part (i) of Theorem~\ref{t:ElC''} from the following two lemmas.

\begin{lemma} \label{l:theta}
If $(\theta-1)|{\cal E}| \geq |C|$ for an integer $\theta \geq 2$,
then there is $e \in E$ with $x(e) \geq \frac{1}{2\theta}$.
\end{lemma}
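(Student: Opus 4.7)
The plan is to prove Lemma~\ref{l:theta} by a token argument that parallels the directed Lemma~\ref{l:alpha'}, with the usual factor-of-two loss from working with undirected edges. Suppose for contradiction that $0 < x(e) < \tfrac{1}{2\theta}$ for every $e\in E$. By Lemma~\ref{l:rank}(ii) we may take $\mathcal{L}$ strongly laminar with $|\mathcal{L}|+|C|=|E|$, so it suffices to assign one token per edge, supply at least one token to every element of $\mathcal{L}\cup C$, and exhibit one leftover token, contradicting $|E|=|\mathcal{L}|+|C|$. Each edge $e=uv$ contributes its one token as a value-token of $\theta x(e)$ placed at each of $u$ and $v$ (so $2\theta x(e)$ in total) plus an edge-spare of $1-2\theta x(e)>0$ retained on $e$.

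For a leaf $\hat{S}\in\mathcal{E}$, strong laminarity together with Lemma~\ref{l:laminar} implies that every node in $S$ is owned exclusively by $\hat{S}$ and appears in the boundary of no other biset of $\mathcal{L}$. Hence $\hat{S}$ claims, without contention from other members of $\mathcal{L}$, the value-tokens at endpoints in $S$ contributed by the edges of $\delta_E(\hat{S})$; these sum to $\theta x(\delta_E(\hat{S}))=\theta f(\hat{S})\geq\theta$. Collectively the leaves accumulate at least $\theta|\mathcal{E}|$ tokens, and the hypothesis $(\theta-1)|\mathcal{E}|\geq|C|$ gives $\theta|\mathcal{E}|\geq|\mathcal{E}|+|C|$; this pool is therefore redistributed to supply one token to each leaf and one token to each $v\in C$.

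For a non-leaf $\hat{S}\in\mathcal{L}\setminus\mathcal{E}$, linear independence of $\chi_E(\mathcal{L})\cup\chi_E(C)$ forces $E_S^+\cup E_S^-\neq\emptyset$, and we use a laminar-tree charging scheme---attributing each edge's spare to the unique minimal biset of $\mathcal{L}$ that it covers, and each value-token at an endpoint to the unique biset owning that endpoint---to show that the tokens accruing to $\hat{S}$ via edges in $E_S^+\cup E_S^-$ together with value-tokens at endpoints owned by $\hat{S}$ amount to at least one. Strict positivity of every edge-spare, together with value-tokens at nodes lying outside all bisets of $\mathcal{L}$, leaves at least one surplus token unclaimed by any biset or by $C$, yielding the desired strict inequality $|E|>|\mathcal{L}|+|C|$.

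The main obstacle is the non-leaf step: organizing the laminar-tree charging so that each edge's value- and spare-tokens are attributed to at most one biset while every non-leaf still collects at least one token. The factor-of-two loss relative to the directed Lemma~\ref{l:alpha'} is intrinsic---an undirected edge's single token must be split between its two endpoints rather than cleanly between a head and a tail.
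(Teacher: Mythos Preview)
Your overall strategy matches the paper's: assume $0<x(e)<\tfrac{1}{2\theta}$, split each edge's unit token into two end-tokens of $\theta x(e)$ and a middle token of $1-2\theta x(e)$, use the hypothesis $(\theta-1)|\mathcal{E}|\geq|C|$ to cover all leaves and all of $C$ from the leaves' end-tokens, and then argue that each non-leaf biset collects at least one token. The leaf step is fine.

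The gap is precisely where you flag it. First, your charging rule for the spare---``attribute each edge's spare to the unique minimal biset of $\mathcal{L}$ that it covers''---sends every spare down to a leaf, so non-leaves would receive no spares at all under your rule. The paper instead assigns the middle-token of an edge $e$ to the biset $\hat{S}$ for which $e\in E_S^-$, i.e.\ the lowest biset that $e$ does \emph{not} cover but whose child it does. Second, and more fundamentally, even with the correct charging you only assert that the collected amount is at least one without saying why; strict positivity of the pieces gives $t(\hat{S})>0$, not $t(\hat{S})\geq 1$. The paper's resolution is an \emph{integrality} argument: writing $t(\hat{S})$ for the end-tokens of $E_S^+$ at nodes owned by $\hat{S}$ plus the middle-tokens of $E_S^-$, a short calculation (splitting $E_S^-$ into edges covering one versus two children of $\hat{S}$, and tracking the set $E'$ of edges covering both $\hat{S}$ and a child) yields
\[
t(\hat{S}) \;=\; \theta\Bigl(f(\hat{S})-\sum_{\hat{R}\in\mathcal{C}_S} f(\hat{R})\Bigr) + |E_S^-|.
\]
Since $\theta$, the $f$-values, and $|E_S^-|$ are integers, so is $t(\hat{S})$; combined with $t(\hat{S})>0$ (from linear independence and the strict token bounds) this gives $t(\hat{S})\geq 1$. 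Without this identity you cannot rule out a non-leaf collecting only a fractional amount, and your closing remark about ``strict positivity of every edge-spare'' leaving surplus is not a substitute: a surplus somewhere does not repair a shortfall at a specific non-leaf in a per-member accounting.
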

\begin{proof}
We generalize the approach from~\cite{Nagarajan2010}.
Suppose for the sake of contradiction that $0< x(e) <\frac{1}{2\theta}$ for every $e \in E$.
Assign one token to every $e =uv \in E$,
putting $\theta x(e) >0$ ``end-tokens'' at each of $u$ and $v$, and $1-2\theta x(e)>0$ ``middle-tokens'' at $e$.
We will show that these tokens can be distributed such that every member of ${\cal L} \cup C$
gets one token, and the middle-tokens of the edges entering the maximal members of ${\cal L}$ are not assigned.
This gives the contradiction $|E|>|{\cal L}|+|C|$.

Every leaf $\hat{S}$ gets
$\theta x(\delta_E(\hat{S})) = \theta f(\hat{S}) \geq \theta$ end-tokens from edges in $E^+_S$.
Hence we have $\theta |{\cal E}|$ tokens at leaves.
By the assumption $(\theta-1)|{\cal E}| \geq |C|$, we have
$|{\cal E}| +|C| \leq \theta |{\cal E}|$,
so these tokens suffice to give one token to each member of ${\cal E} \cup C$.

Now let $\hat{S} \in {\cal L}$ be a non-leaf biset.
 Denote by $t(\hat{S})$ the amount of the following tokens:
 \begin{itemize}
  \item end-tokens of edges in $E^+_S$ at nodes owned by $\hat{S}$
	(these tokens always exist if $E^+_S \neq \emptyset$),
  \item middle-tokens of edges in $E^-_S$
	(these tokens always exist if $E^-_S \neq \emptyset$),
  \item end-tokens of edges in $E^-_S$ at nodes owned or shared by
	$\hat{S}$
	(these tokens exist if there exists an edge in $E^-_S \neq
	\emptyset$ that covers exactly one child of $\hat{S}$).
 \end{itemize}
 Note that $t(\hat{S})>0$, by the linear independence.
We claim that $t(\hat{S})$ is an integer, hence $t(\hat{S}) \geq 1$.
Let $E_1$ be the set of edges in $E^-_S$ that cover exactly one child of $\hat{S}$
and let $E_2=E^-_S\setminus E_1$ be the set of edges in $E^-_S$ that cover two distinct
children of $\hat{S}$. Let $E'$ be the set of edges in $E$ that cover both $\hat{S}$
and some child of $\hat{S}$. Then,
\begin{eqnarray*}
t(\hat{S}) & = & \theta x(E^+_S)+(|E_1|-\theta x(E_1))+(|E_2|-2\theta x(E_2)) \\
           & = & \theta [x(E^+_S)+x(E')]-\theta[x(E_1)+2x(E_2)+x(E')]+|E_1|+|E_2|    \\
           & = & \theta x(\delta_E(\hat{S}))-\theta \sum_{\hat{R} \in {\cal C}_S}
                 x(\delta_E(\hat{R}))+|E^-_S|
             =   \theta \left(f(\hat{S})-\sum_{\hat{R} \in {\cal C}_S}
			 f(\hat{R})\right)+|E^-_S| \ .
\end{eqnarray*}

 To each non-leaf biset $\hat{S} \in {\cal L}$,
 we distribute tokens counted in $t(\hat{S})$.
 Consequently, every members of ${\cal L} \cup C$
gets one token, and the middle-tokens of the edges entering
 the maximal members of ${\cal L}$ are not assigned.
  We note that any tokens are not counted more than once because 
 $E^+_S \cap E^+_{S'}=\emptyset$ and
 $E^-_S \cap E^-_{S'}=\emptyset$ hold
 for any distinct bisets $\hat{S},\hat{S}' \in \cal L$.
This gives the contradiction $|E|>|{\cal L}|+|C|$.
\end{proof}

We note that if $C=\emptyset$, i.e., if there are no degree bounds, then
the same proof applies for $\theta=1$ to show that
any extreme point of $P(f,b,E)$ has an edge $e \in E$ with $x(e) \geq 1/2$.
This coincides with the simple proof of the result of \cite{Fleischer2006}
given in \cite{Nagarajan2010}.

\begin{lemma} \label{l:beta}
If $|C| > (\alpha/2-1) |{\cal E}|$, then
there is $e \in E$ with $x(e) \geq 1/\alpha$ or
there is $v \in C$ with $|\delta_E(v)| \leq \beta$,
where $\beta =\left\lceil 4\frac{\gamma+2}{\alpha-2}\right\rceil + 5$.
\end{lemma}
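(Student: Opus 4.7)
The approach is a token-counting argument patterned on Lemma~\ref{l:beta'} (the directed counterpart), leveraging the end- and middle-token decomposition of Lemma~\ref{l:theta}. Suppose for contradiction that $0 < x(e) < 1/\alpha$ for every $e \in E$ and $|\delta_E(v)| \geq \beta + 1$ for every $v \in C$; apply Lemma~\ref{l:rank}(ii) to obtain a strongly laminar family ${\cal L}$ with $|{\cal L}| + |C| = |E|$, let ${\cal E}$ denote its leaves, and set $\theta = \alpha/2 \geq 2$. Endow each edge $e = uv$ with $\theta x(e)$ end-tokens at each of $u, v$ and $1 - 2\theta x(e)$ middle-tokens, so that the total is $|E|$ tokens.

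The first two ingredients come directly from Lemma~\ref{l:theta}: every leaf $\hat{S}$ can collect at least $\theta f(\hat{S}) \geq \theta$ end-tokens from $E^+_S$ at its owned nodes, and every non-leaf $\hat{S}$ can collect at least $t(\hat{S}) \geq 1$ tokens from end-tokens at its owned nodes of $E^+_S$ plus middle-tokens of $E^-_S$. The new ingredient is to divert, for each $v \in C$, all end-tokens at $v$ and all middle-tokens of edges incident to $v$ to $v$ itself. The tight LP constraint $x(\delta_E(v)) = b(v)$ and the hypothesis $x(e) < 1/\alpha$ force $b(v) < |\delta_E(v)|/\alpha$, so the tokens at $v$ amount to at least
\[
\theta b(v) + \bigl(|\delta_E(v)| - \alpha b(v)\bigr) \;=\; |\delta_E(v)| - \theta b(v) \;\geq\; |\delta_E(v)|/2 \;\geq\; (\beta + 1)/2.
\]
Each $v \in C$ then keeps one token for itself and hands one token to each of the at most $\Delta_{\cal L}(v) + 1$ bisets in ${\cal L}$ that own or share $v$; these compensation tokens offset the tokens the respective bisets lose through the diversion, so each such biset still ends up with at least one token, while bisets untouched by $C$ draw on their unchanged $t(\hat{S}) \geq 1$ quota (or the $\theta$ end-tokens of a leaf).

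The concluding count compares supply and demand: the tokens supplied by leaves and by the $v \in C$ total at least $\theta|{\cal E}| + \tfrac{\beta + 1}{2}|C|$, while the demand of one token per member is bounded by $|{\cal E}| + \sum_{v \in C}(\Delta_{\cal L}(v) + 2) \leq (\gamma + 1)|{\cal E}| + 3|C|$ after applying the strongly-laminar estimate~\eqref{e:delta} of Lemma~\ref{l:count}. Plugging in the hypothesis $|C| > (\theta - 1)|{\cal E}|$, equivalently $|{\cal E}| < 2|C|/(\alpha - 2)$, and solving for $\beta$ gives exactly $\beta \geq \lceil 4(\gamma + 2)/(\alpha - 2)\rceil + 5$, at which point a strict surplus of at least one unassigned token emerges, yielding $|E| > |{\cal L}| + |C|$ and contradicting Lemma~\ref{l:rank}.

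The main obstacle is bookkeeping the overlap between tokens $v \in C$ claims and those a biset containing $v$ in its inner or boundary would otherwise collect via $t(\hat{S})$. The directed case of Lemma~\ref{l:beta'} disentangles the two roles of $v$ through the head/tail distinction, which is unavailable for undirected edges. Instead the argument will invoke strong laminarity via Lemma~\ref{l:laminar}: the sharers of $v$ form an antichain whose outer-parts all contain $v$ in their boundaries, so the single compensation token each sharer receives from $v$ precisely matches its lost contribution in $t(\hat{S})$. The constant $+5$ slack in $\beta$ is what absorbs the remaining inefficiencies, notably the case of edges with both endpoints in $C$, where a single middle-token must be apportioned between two $C$-members.
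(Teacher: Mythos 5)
Your approach differs from the paper's: you keep the end-/middle-token decomposition from Lemma~\ref{l:theta} (so each edge carries a total of one token and the demand is one token per member of $\mathcal{L}\cup C$), whereas the paper's proof of Lemma~\ref{l:beta} abandons that decomposition entirely and instead places one token at \emph{each endpoint} of every edge (so each edge carries two tokens), then shows that every member of $\mathcal{L}\cup C$ can collect two tokens and every maximal biset four. This different bookkeeping is not cosmetic; it is what lets the paper avoid the issues below.

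There are genuine gaps in your argument that I do not see how to close without essentially switching to the paper's scheme.

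First, the compensation does not cover all diverted middle-tokens. You divert to $v\in C$ every middle-token of every edge incident to $v$. A non-leaf $\hat{S}$ is supposed to collect the middle-tokens of $E^-_S$ in its $t(\hat{S})$ quota, and you compensate $\hat{S}$ only if $\hat{S}$ owns or shares some node of $C$. But an edge $e\in E^-_S$ can have an endpoint $v'\in C$ located deep inside the subtree of $\hat{S}$ (inside the inner part of a child or a more distant descendant). Then $\hat{S}$ does not own or share $v'$, receives no compensation token from $v'$, yet has permanently lost the middle-token of $e$; the remaining contributions to $t(\hat{S})$ need not be a positive integer, so $\hat{S}$ may end up with less than one token. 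Your invocation of Lemma~\ref{l:laminar} only controls the antichain of \emph{sharers} of $v$, not this situation.

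Second, your supply count $\theta|\mathcal{E}|+\tfrac{\beta+1}{2}|C|$ double-counts. The $\theta|\mathcal{E}|$ term counts end-tokens at all nodes owned by leaves, but a leaf can own a node $v\in C$, and the end-tokens at $v$ from edges in $\delta_E(v)$ are exactly what you are also counting inside $v$'s budget $\theta b(v)+(|\delta_E(v)|-\alpha b(v))$. So the two summands overlap and the inequality ``supply $\geq$ demand'' cannot be read off your two numbers. (The paper sidesteps this by taking \emph{all} four tokens for each leaf from the pool collected at $C$-nodes, never crediting leaves with their own end-tokens, so there is nothing to double-count.) Related to this, your final arithmetic is asserted rather than derived; plugging your supply and demand bounds into $|\mathcal{E}|<2|C|/(\alpha-2)$ does not in fact reproduce $\beta=\lceil 4(\gamma+2)/(\alpha-2)\rceil+5$, and one would need to redo the count carefully after fixing the double-counting.

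Third, you set $\theta=\alpha/2$, but $\alpha$ ranges over all integers $\geq 4$, so $\theta$ may be non-integral. The integrality of $t(\hat{S})=\theta\left(f(\hat{S})-\sum_{\hat{R}\in\mathcal{C}_S}f(\hat{R})\right)+|E^-_S|$, which is what upgrades ``positive'' to ``$\geq 1$'' in Lemma~\ref{l:theta}, is lost for odd $\alpha$. You would have to use $\theta=\lfloor\alpha/2\rfloor$ and re-derive the bound, which changes the constants again. Taken together these points mean the proposal is a sketch of a plausible but unverified alternative rather than a proof; the paper's cruder but cleaner two-tokens-per-edge scheme is what actually closes the argument.
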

\begin{proof}
Assume for a contradiction that
$0< x(e)<1/\alpha$ for every $e \in E$ and that
$|\delta_E(v)| \geq \beta+1$ for every $v \in C$.
We give one token to each end-node of every edge in $E$.
We will show that these tokens can be distributed such that every member of ${\cal L} \cup C$
gets two tokens, and each maximal member of ${\cal L}$ gets four tokens leading to the contradiction that $|E| > |{\cal L}| + |C|$.

The amount of tokens at each $v \in C$ is at least $\beta+1$.
Hence we have at least $(\beta+1)|C|$ tokens at the nodes in $C$.
From these tokens, we give four tokens to every leaf and $2(\Delta_{\cal L}(v)+2)$ tokens
to every $v \in C$.
This is possible by \eqref{e:delta} of Lemma~\ref{l:count} and the assumption 
$|C| >(\alpha/2-1) |{\cal E}|$, as we verify below.
\begin{multline*}
4|{\cal E}|+2\sum_{v \in C} (\Delta_{\cal L}(v)+2)  \leq
4|{\cal E}|+2(\gamma|{\cal E}|+3|C|) 
= 2|{\cal E}|(\gamma+2)+6|C| \\
= |C| \left(6 + 2 \frac{|{\cal E}|}{|C|} (\gamma+2)\right) 
< |C| \left(6 + 2 (\gamma+2) \frac{2}{\alpha -2}\right)
\leq |C| (1 + \beta).
\end{multline*}

 Every $v \in C$ will keep two tokens. From the remaining $2(\Delta_{\cal L}(v)+1)$
 tokens, $v$ will give two tokens to every biset that owns or shares $v$.
 Now we discuss the tokens given to bisets in $\cal L$.
 We show that the tokens can be rearranged as claimed by induction on the
 height of the forest corresponding to $\cal L$.
 
 Let $\hat{S} \in {\cal L}$ be a biset that is not a leaf.
 By the induction hypothesis, we assume that each descendant
 of $\hat{S}$ has at least two tokens, and each child of $\hat{S}$ has
 four tokens.
 We move two tokens from each child to $\hat{S}$.
If $\hat{S}$ has at least two children then we are done.
If $\hat{S}$ has one child and owns or shares a node $v \in C$,
 then $\hat{S}$ gets two tokens from its child and two tokens from $v$.
 Let us consider the other case (i.e., $\hat{S}$ has one child and owns or
 shares no node in $C$).
In this case, each edge in 
$E^+_S \cup E^-_S$ has one end-node that is owned or shared by
 $\hat{S}$, and this end-node is not contained by $C$.
 We give the tokens of these end-nodes to $\hat{S}$.
Note that $|E^+_S \cup E^-_S|\geq 2$, by linear independence and the integrality of $f$.
Hence $\hat{S}$ gets two tokens from its child and two tokens from
 end-nodes of edges in
$E^+_S \cup E^-_S$.
 Consequently,
 we can always give four tokens to $\hat{S}$, keeping two tokens for
 each descendant of $\hat{S}$.
\end{proof}

Applying Lemma~\ref{l:theta} with $\theta=\lfloor \alpha/2 \rfloor$,
we get that if $(\lfloor \alpha/2 \rfloor-1)|{\cal E}| \geq |C|$,
and in particular if $(\alpha/2-1)|{\cal E}| \geq |C|$,
then there is $e \in E$ with $x(e) \geq \frac{1}{2\lfloor \alpha/2 \rfloor} \geq \frac{1}{\alpha}$.
Together with Lemma~\ref{l:beta} this implies Theorem~\ref{t:ElC''}.

\subsection{Degree approximation only (Part (ii) of Theorem~\ref{t:ElC''})}

We call a biset in $\Lset$ {\em strictly black} if it owns a node in
$C$, {\em black} if one of its descendents is strictly black (i.e., its
inner-part contains a node in $C$), and 
{\em white} otherwise (i.e., its inner-part contains no node in $C$).
Let $\Eset_{\rm b}$ and $\Eset_{\rm w}$ denote the family of strictly
black bisets and white bisets in $\Eset$, respectively.

\begin{lemma}
 If $|\Eset| \leq (\gamma+4)|C|$, then there is $e \in E$ with $x(e) \geq
 1/2$, or there is $v \in C$ with $|\delta_E(v)| \leq 1.5\gamma^2+7.5\gamma+16$.
\end{lemma}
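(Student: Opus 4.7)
Our plan is a token-counting argument by contradiction, refining the framework of Lemmas~\ref{l:theta} and~\ref{l:beta} to exploit the narrowness of the laminar family under the hypothesis $|\Eset|\leq(\gamma+4)|C|$.  Suppose every $e\in E$ satisfies $x(e)<1/2$ and every $v\in C$ satisfies $|\delta_E(v)|\geq D+1$, where $D=1.5\gamma^2+7.5\gamma+16$.  Assign each edge $e=uv\in E$ a unit token, distributed as an end-token of value $x(e)$ at each of $u$ and $v$, plus a middle-token of value $1-2x(e)>0$ retained on $e$.  The total token mass is $|E|=|\Lset|+|C|$, so it suffices to allocate strictly more than this in integer tokens to the members of $\Lset\cup C$.

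Leaves and non-leaves are paid by the local argument of Lemma~\ref{l:theta}.  Every leaf $\hat S$ is covered by at least three edges (since $x(e)<1/2$ and $x(\delta_E(\hat S))=f(\hat S)\geq 1$), and as in Lemma~\ref{l:theta} the end-tokens on its inner-part endpoints together with the middle-tokens of appropriate covering edges sum to an integer at least $1$.  For a non-leaf $\hat S$ the quantity $t(\hat S)=x(E_S^+)+|E_1|-x(E_1)+|E_2|-2x(E_2)$ is a positive integer by the integrality of $f$ and the linear independence in Lemma~\ref{l:rank}, providing at least one integer token.  The real work lies in the color bookkeeping.  Partition $\Eset=\Eset_{\rm b}\sqcup\Eset_{\rm w}$; since each strictly black leaf owns at least one distinct node of $C$, we have $|\Eset_{\rm b}|\leq|C|$, hence $|\Eset_{\rm w}|\leq(\gamma+3)|C|$.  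Every ancestor of a strictly black leaf is black and can be paid from a node in $C$ that it owns or shares; Lemma~\ref{l:laminar} caps the number of such bisets charged to $v$ by $\Delta_\Lset(v)+1$, and Lemma~\ref{l:count} bounds $\sum_{v\in C}(\Delta_\Lset(v)+1)\leq\gamma|\Eset|+2|C|\leq(\gamma^2+4\gamma+2)|C|$.  White bisets, containing no node of $C$ in their inner part, must draw payment from middle-tokens of their covering edges; a strictly nested chain of white bisets has length at most $\gamma$ (each step drops a distinct boundary element), and the middle-token surplus of at least $\deg_E(v)-2b(v)\geq D+1-2b(v)$ available at each $v\in C$ compensates for all such chains.

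The main obstacle is matching the constants: the quadratic $1.5\gamma^2$ arises from pairing two $\gamma$-bounds, namely that each $v\in C$ lies in the boundary of at most $\gamma$ pairwise-incomparable bisets (Lemma~\ref{l:laminar}) and that above each such boundary element hangs a white chain of length at most $\gamma$ demanding payment.  Sharpening the naive $2\gamma^2$ estimate down to $1.5\gamma^2$ requires a careful case split on the branching of each non-leaf biset: those with at least two children supply two independent middle-tokens for free from their covering edges, whereas single-child chains must be amortized along the chain, using the boundary element dropped at each step.  The coefficient $7.5$ on the linear term and the additive $16$ absorb boundary corrections and the extra middle-token cost at the two endpoints of each chain.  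With $D=1.5\gamma^2+7.5\gamma+16$ the accounting tips strictly in our favor, producing the contradiction $|E|>|\Lset|+|C|$ and completing the proof.
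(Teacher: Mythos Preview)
Your proposal has a genuine gap rather than a complete argument. The central structural claim---``a strictly nested chain of white bisets has length at most $\gamma$ (each step drops a distinct boundary element)''---is unjustified and, as stated, false. Whiteness of $\hat S$ only means $S\cap C=\emptyset$; along a nested chain the boundary can grow, shrink, or swap nodes arbitrarily subject to $|\Gamma(\hat S)|\leq\gamma$, and nothing forces a distinct boundary element to be permanently dropped at each step. Without this bound the whole amortization of white chains against the surplus at $C$ collapses, and with it the explanation of the $1.5\gamma^2$ term. (There is also a minor slip: from $|\Eset_{\rm b}|\leq|C|$ you can only conclude $|\Eset_{\rm w}|\geq|\Eset|-|C|$, not $|\Eset_{\rm w}|\leq(\gamma+3)|C|$.) More broadly, the proposal does not actually carry out the accounting: ``requires a careful case split'' and ``absorb boundary corrections'' are descriptions of work to be done, not a proof.

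The paper's argument is structurally different from what you sketch. It assigns \emph{two integer} tokens per edge (one at each endpoint), treats $\Lset\cup C$ as a single laminar family, and has each $v\in C$ keep $2$ tokens, return $1/3$ token to each incident edge, and release the remaining $\frac{2}{3}|\delta_E(v)|-2>\gamma^2+5\gamma+9$ tokens. Using $|\Eset|\leq(\gamma+4)|C|$ this pool exceeds $(1+\gamma)|\Eset|+5|C|$, enough to give $1$ to every white leaf, $4$ to every strictly black leaf, and $1$ to every biset owning or sharing a node of $C$. The proof then shows by induction on the tree that every biset can accumulate $4$ tokens; the delicate case---a biset $\hat S$ with a single child where all of $E^+_S$ (or all of $E^-_S$) meet a single vertex of $C$---is exactly where the reserved $1/3$ tokens on edges are spent, since $|E^+_S\cup E^-_S|\geq 3$ when one of the two sets is empty. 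That $1/3$-return trick is the idea your sketch is missing.
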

\begin{proof}
Assume for a contradiction that $0 < x(e) < 1/2$ for every $e \in E$ and 
$|\delta_E(v)| \geq 1.5\gamma^2+7.5\gamma+17$ for every $v \in C$.
Identifying a node $v \in C$ as a biset $(\{v\},\{v\})$,
we regard $\Lset \cup C$ as a biset family.
$\Lset \cup C$ may not be strongly laminar, but it is laminar.
Therefore we can define the inclusion order on $\Lset \cup C$.

 We assign two tokens to every edge in $E$, putting one end-token at
 each of its end-nodes.
 We will show that these
 tokens can be distributed such that every member of $\Lset \cup C$
 gets two tokens, and an extra token remains. This gives the
 contradiction that $|E| > |\Lset|+|C|$.

 Let $e=uv \in E$.
 Note that there always exists a biset $\hat{X} \in \Lset \cup C$ such that $e\in
 \delta_E(\hat{X})$. 
Suppose that $\hat{X}$ is a minimal one
 among such bisets.
 Without loss of generality, let $u \in X$. 
 We give the end-token of $e$ at $u$  to $\hat{X}$.
 If there also exists a biset $\hat{Y} \in \Lset \cup C$ such that 
 $e \in \delta_E(\hat{Y})$ and $v \in Y$, then we 
 give the end-token of $e$ at $v$ to
  the minimal such biset $\hat{Y}$.
 Otherwise, the end-token of $e$ at $v$ is given to the minimal biset
 $\hat{X}'$ such that $\hat{X} \subset \hat{X}'$ and $e \not\in \delta_E(\hat{X}')$.

 Since bisets in $\Eset_{\rm w}$ and nodes in $C$ are leaves of
 $\Lset \cup C$, they obtain one token from each edge incident to them after this distribution.
 Hence each biset $\hat{S} \in \Eset_{\rm w}$ has three
 tokens and each node $v \in C$ has $|\delta_E(v)|$ tokens.
 We make each $v \in C$ keep only two tokens, 
 return $1/3$ tokens to each edge in $\delta_E(v)$, and release the
 other tokens.
 Then the total number of released tokens is
 \[
 \sum_{v \in C}\left(\frac{2}{3}|\delta_E(v)|-2\right) 
 >
 |C|(\gamma^2+5\gamma+9)
 \geq (1+\gamma)|\Eset|+5|C|,
 \]
 where the first inequality follows from $|\delta_E(v)| >
 1.5\gamma^2+7.5\gamma+16.5$, $v \in C$ and 
 the last one follows from $|\Eset| \leq (\gamma+4)|C|$.
 We redistribute these tokens to bisets as follows:
 \begin{itemize}
  \item one token is given to each biset in $\Eset_{\rm w}$,
  \item four tokens are given to each biset in $\Eset_{\rm b}$,
  \item if $v \in C$ is shared by a biset $\hat{X}$,
	then one token is given to $\hat{X}$,
  \item if $v \in C$ is owned by a biset $\hat{X}$
	and $v$ is shared by no biset in $\Lset$,
	then two tokens are given to $\hat{X}$,
  \item if $v \in C$ is owned by a biset $\hat{X}$
	and $v$ is shared by some bisets in $\Lset$,
	then one token is given to $\hat{X}$.
 \end{itemize}
 If a biset owns or shares more than one node in $C$,
 it obtains tokens from each of those nodes in $C$ following the last
 three rules.
 This redistribution is possible because the number of tokens we need is 
 \begin{eqnarray*}
  |\Eset_{\rm w}| + 4|\Eset_{\rm b}|+\sum_{v \in C}(1+\max\{1,\Delta_{\Lset}(v)\})
  & \leq & (|\Eset|-|C|) + 4|C| + |C|+\gamma|\Eset| +|C|\\
   &=& (1+\gamma)|\Eset| + 5|C|,
 \end{eqnarray*}
 where the above inequality follows from
 $|\Eset_{\rm w}| + |\Eset_{\rm b}|=|\Eset|$,
$|\Eset_{\rm b}| \leq |C|$, and \eqref{e:delta}.

 Now all tokens given to bisets in $\Eset$ and nodes in $C$ have been redistributed
 such that
\begin{itemize}
 \item each node in $C$ has two tokens,
 \item each biset in $\Eset$ has four tokens,
 \item each pair of $v\in C$ and $e \in \delta_{E}(v)$ has $1/3$ tokens,
 \item each biset $\hat{X} \in \Lset$ has at least one token from each owning
       node in $C$,
       and one token from each sharing node in $C$.
       If a node $v$ owned by $\hat{X}$ is shared by no biset, then
       $\hat{X}$ has two tokens from $v$.
\end{itemize}

 Let $\hat{S} \in \Lset$, and let $\Lset'$ be the family of $\hat{S}$
 and its proper descendants.
 In what follows, we make
 each biset in $\Lset'$ receive at least two tokens,
 and $\hat{S}$ receive four tokens.
 For this, we redistribute tokens that were given to
 the bisets in $\Lset'$,
 and those kept by pairs of edge $e$ and its end-node $v \in C$
 such that $e$ is incident to a biset in $\Lset'$
 and $v$ is shared or owned by this biset.
 We prove that this redistribution is possible by induction on the
 height of the tree defined from $\Lset'$.
 If the height is one, the claim follows from
 that each biset in $\Eset$ has four tokens.
 Hence let us consider the case where the height of the tree is more
 than one.

 By the induction hypothesis, we can assume that each descendant
 has at least two tokens, and each child of $\hat{S}$ has four tokens.
 $\hat{S}$ can obtain two tokens from each of its child.
 Thus
 $\hat{S}$ can collect four tokens in each of the following cases:
 \begin{itemize}
  \item  $\hat{S}$ has more than one child;
  \item  $\hat{S}$ owns a node in $C$ that is shared by no biset in $\Lset$;
  \item  $\hat{S}$ owns or shares at least two nodes in $C$.
 \end{itemize}
 In the rest, we discuss the other case, and show that 
 $\hat{S}$ collects at least two tokens in addition to the tokens given from the child.
 Let $\hat{Y}$ be the child of $\hat{S}$.

 By the linear independence, $|E_S^+\cup E_S^-| \geq 2$ always holds, 
 and 
 $|E_S^+\cup E_S^-| \geq 3$ holds when either $E_S^+$ or $E_S^-$ is empty
 by the assumption that $x(e) < 1/2$, $e\in E$.
 Let us discuss how many tokens are given to $\hat{S}$ from the end-nodes of edges in $E_S^+\cup E_S^-$.
 
 Let $e=uv \in E_{S}^+$.
 Without loss of generality, we let $u \in S$ and $v \in V \setminus
 S^+$.
 Notice that $\hat{S}$ owns $u$.
 Hence if $u \in C$, then $\hat{S}$
 receives one or two tokens from $u$.
 If $u \not\in C$,
 $\hat{S}$ obtains the end-token of $e$ at $u$.
 Let $e'=u'v' \in E_{S}^-$. We let $u' \in S$ be the end-node 
 which is within $Y$, and consequently
 $v' \in S^+\setminus Y^+$.
 By the strongly laminarity of $\Lset$,
 no biset $\hat{X} \in \Lset$ with $e' \in
 \delta_E(\hat{X})$ contains
 $v'$ in its inner-part.
 This implies that, if $v' \not\in C$,
 $\hat{S}$ obtains the end-token of $e'$ at $v'$.
 If $v' \in C$, then $\hat{S}$ owns or shares $v'$,
 and obtains one or two tokens from $v'$.
 Summing up, if there is more than one node such as $u$ or $v'$,
 then we are done.
 Even if there exists exactly one such node, $\hat{S}$ receives at least
 two tokens unless the node is in $C$ and is shared by some bisets.

 Consider the case of $E^-_S \neq \emptyset \neq E^+_S$.
 We define $e=uv$ and $e'=u'v'$ as above.
 The above discussion shows that $\hat{S}$ receives two tokens unless $u=v' \in C$.
 If $u=v' \in C$, this node is not contained by $Y^+$,
 and hence no biset in $\Lset$ shares it.
 This means that $\hat{S}$
 always receives at least two tokens in this case.
 
 Next, consider the case of $E_S^+=\emptyset$.
 Then $|E_S^-| \geq 3$, 
 and all edges in $E_S^-$ are incident to the same
 node $v'$ in $C \cap (S^+ \setminus Y^+)$.
 Each of
 the edges in $E_S^-$ has $1/3$ tokens corresponding to $v'$.
 $\hat{S}$ collects
 one token from these edges, and another token from $v'$.
 Therefore, we are done.
 The claim is proven similarly when $E_S^-=\emptyset$.
\end{proof}

We next discuss the case where $|\Eset| \geq (\gamma+4)|C|$, and prove
the following lemma.

\begin{lemma}\label{lem.noedge-cost.smalle}
 If $x$ is maximal in $P(f,b,E)$ and every edge in $E$ is incident to some node in $B$ and
 $|\Eset| \geq (\gamma+4)|C|$, then there is $e \in E$ with $x(e) \geq
 1/2$.
\end{lemma}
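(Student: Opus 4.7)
The plan is to prove the lemma by contradiction, extending the $\theta=1$ token scheme from the proof of Lemma~\ref{l:theta}. Suppose for contradiction that $0<x(e)<1/2$ for every $e\in E$. Give each edge $e$ a total of one token, distributed as $x(e)$ end-tokens at each endpoint and a positive middle-token $1-2x(e)$ on $e$; the supply is exactly $|E|=|\Lset|+|C|$ tokens. First, carry out the standard allocation of Lemma~\ref{l:theta}: every leaf $\hat{S}$ receives the end-tokens of $E^+_S$ at endpoints in $S$, totalling $f(\hat{S})\geq 1$, and every non-leaf $\hat{S}$ receives an integer $t(\hat{S})\geq 1$ tokens consisting of end-tokens of $E^+_S$ at owned nodes and middle-tokens of $E^-_S$. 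Every biset of $\Lset$ therefore already holds at least one token.

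The heart of the proof is to show, using the hypotheses $|\Eset|\geq(\gamma+4)|C|$ and ``every edge is incident to $B$'', that we can extract one additional token for each $v\in C$ and still have at least one token unassigned. Since $x(e)<1/2$ and $x(\delta_E(\hat{S}))=f(\hat{S})\geq 1$ for every leaf, each leaf has $|\delta_E(\hat{S})|\geq 2f(\hat{S})+1$ incident edges; consequently the outside endpoints and middle portions of the edges of $\delta_E(\hat{S})$ carry an excess of at least $|\delta_E(\hat{S})|-f(\hat{S})\geq f(\hat{S})+1\geq 2$ tokens that the basic allocation does not consume. When $v\in C$ is not owned by any biset in $\Lset$, its $b(v)\geq 1$ end-tokens are entirely free and can be given directly to $v$. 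When $v$ is owned by some biset, I will redirect slack from a leaf at or below the owner to $v$; the assumption that every edge of $E$ is incident to $B$ is used to ensure that each such leaf actually ``sees'' $v$ through a covering edge whose end- or middle-token can be spared.

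To finish, I will charge each $v\in C$ to at most $\gamma+4$ leaves of $\Lset$: at most $\Delta_{\Lset}(v)+1\leq\gamma+1$ leaves associated with $v$ through Lemma~\ref{l:laminar} (the leaves descended from a biset owning or sharing $v$), plus at most three further ``witness'' leaves used by the local rerouting. Under the hypothesis $|\Eset|\geq(\gamma+4)|C|$, the total charge is strictly less than $|\Eset|$, so at least one leaf keeps its slack intact and contributes one extra assigned token, giving the contradiction $|E|>|\Lset|+|C|$. The main obstacle is the rerouting step for a node $v\in C$ sitting deep inside a nested chain of bisets in $\Lset$: one must shift tokens from a distant leaf up through the chain to reach $v$ while maintaining the integer threshold $t(\hat{S})\geq 1$ for every non-leaf $\hat{S}$ on the path. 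Linear independence from Lemma~\ref{l:rank} implies that $E^+_S\cup E^-_S$ is nonempty for each non-leaf and permits the necessary substitutions, but verifying that the slack of $\gamma+4$ leaves suffices to accommodate every substitution simultaneously requires bookkeeping in the spirit of Lemma~\ref{l:count}.
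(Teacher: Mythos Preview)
Your proposal has a genuine gap that breaks the charging argument. You claim that each $v\in C$ can be charged to at most $\gamma+4$ leaves, writing ``at most $\Delta_{\Lset}(v)+1\leq\gamma+1$ leaves associated with $v$ through Lemma~\ref{l:laminar}''. But $\Delta_{\Lset}(v)$ is \emph{not} bounded by $\gamma$ for an individual node: in a strongly laminar family one can have arbitrarily many pairwise-incomparable bisets all sharing the same node $v$ in their boundary (take disjoint inner parts $S_1,\dots,S_m$ with $S_i^+=S_i\cup\{v\}$). Lemma~\ref{l:count} only gives the aggregate bound $\sum_{v\in C}\Delta_{\Lset}(v)\leq\gamma|\Eset|$, which is useless for a per-node charge when $|\Eset|$ is large relative to $|C|$ --- exactly the regime of this lemma. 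Without the per-node bound, the inequality $(\gamma+4)|C|\leq|\Eset|$ no longer controls the total charge, and the contradiction does not follow.

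There is a second, more structural omission. The hypothesis ``every edge is incident to $B$'' is only a stepping stone; the paper immediately strengthens it (by maximality of $C$ among tight degree nodes with independent incidence vectors) to: every edge is incident to some node of $C$. This is what makes the token scheme work, because it lets one assign \emph{at most one} token per edge (zero if both endpoints are in $C$) and still pay for all bisets. Your fractional $\theta=1$ scheme never exploits this, and the ``rerouting'' you describe (``the leaf sees $v$ through a covering edge'') is not made precise enough to see how the hypothesis enters. The paper's actual proof is organized very differently: it colors bisets black or white according to whether their inner part meets $C$, gives each white leaf a surplus token (there are at least $|\Eset|-|C|\geq(\gamma+3)|C|$ of them), redistributes these to nodes of $C$ and to strictly black bisets, and then proves two separate inductive lemmas (one for white subtrees, one for black subtrees) showing every biset ends with enough tokens. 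The black/white decomposition is the missing idea in your outline.
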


Under the assumption in Lemma~\ref{lem.noedge-cost.smalle},
each edge $e \in E$ is incident to a node in 
$\{v \in B \colon x(\delta_E(v))=b(v)\}$ since otherwise we can increase
$x^*(e)$.
Since Lemma~\ref{l:rank} holds for arbitrary 
$\mathcal{L}$ and $C$ that satisfy the conditions described in the lemma,
we can define $C$ as an inclusion-wise maximal subset of $\{v \in B \colon
x(\delta_E(v))=b(v)\}$ such that the vectors in $\chi_E(C)$ are linearly independent.
If $C$ contains no end-node of $e \in E$, then 
the incidence vector of $\delta_E(v)$ defined from an end-node $v$ of $e$
is linearly independent from those in $\chi_E(C)$.
Since this contradicts the maximality of $C$, we can observe that each
edge $e \in E$ is incident to at least one node in $C$.

We again count bisets in $\Lset$ and nodes in $C$ for proving 
Lemma~\ref{lem.noedge-cost.smalle}, but the way of distributing tokens
is different here.
Let $e=uv \in E$.
By the assumption, at least one of the end-nodes of $e$ is in $C$.
If $C$ contains both end-nodes of $e$, then we assign no token to $e$.
If $C$ contains exactly one end-node, say $v$, of $e$, then we assign
one token. This token will be given to a biset in $\Lset$ as follows.
If $\Lset$ contains a biset $\hat{S}$ such that $e \in
\delta_E(\hat{S})$ and $u \in S$, then the token is given to such a
minimal biset.
 If there exists no such bisets and $\Lset$ contains a biset $\hat{X}$
 such that $e \in \delta_E(\hat{X})$ and $v \in X$,
       then the token is given to the minimal biset in $\{\hat{Y} \in
       \Lset \colon \hat{X} \subset \hat{Y}, u \in Y^+\}$.
Since the total number of tokens is at most $|E|$, it suffices to show
that an extra token remains after redistributing tokens so that each
biset in $\Lset$ and each node in $C$ owns one token.

Let $\hat{S} \in \Eset_{\rm w}$. 
Since $x(e) < 1/2$ for each $e \in E$, $|\delta_E(\hat{S})| \geq
2f(\hat{S})+1$. Since $S$ contains no nodes in $C$, 
each edge in $\delta_E(\hat{S})$ gives a token to $\hat{S}$.
Thus $\hat{S}$ has
$2f(\hat{S})+1 \geq f(\hat{S})+2$ tokens.
We make each $\hat{S} \in \Eset_{\rm w}$ release one token.
Then the number of released tokens is at least
$|\Eset_{\rm w}| \geq |\Eset|-|C| \geq (\gamma+3)|C|$.
Recall that the number of strictly black bisets is at most $|C|$.
We redistribute the released token to the nodes in $C$ and the strictly black
bisets so that
each $v \in C$ has one token,
and each strictly black biset has $\gamma+2$ tokens.
Note that each $\hat{S} \in \Eset_{\rm w}$ still has at least 
$f(\hat{S})+1$ tokens after this redistribution.

We first count tokens in a tree
which consists of only white bisets.

\begin{lemma}\label{lem.whitetree}
 Let $\hat{R} \in \Lset$ be a white biset, and 
 $\Lset'=\{\hat{S} \in \Lset \colon \hat{S} \subseteq \hat{R}\}$.
 We can distribute tokens owned by bisets in $\Lset'$
 so that each biset in $\Lset'$ has at least one token, and 
$\hat{R}$ has at least $1+f(\hat{R})$ tokens when $0 < x(e) < 1/2$ for each
 $e \in E$.
\end{lemma}
\begin{proof}
 We prove this by induction on the height of the tree representing $\Lset'$.
 If the height is one, then $\Lset'=\{\hat{R}\}$ and $\hat{R} \in
 \Eset_{\rm w}$. Thus the lemma follows in this case.
 
 Assume that the height is at least two.
 Applying the induction hypothesis to the trees rooted at the children
 of $\hat{R}$,
 we can allocate tokens so that 
 each biset below the children of $\hat{R}$ has one token, and
 each child $\hat{S}$ of $\hat{R}$ has $1+f(\hat{S})$ tokens.
 We can move $\sum_{\hat{S} \in \mathcal{C}_{R}} f(\hat{S})$ tokens from the children to
 $\hat{R}$.

 If $\sum_{\hat{S} \in \mathcal{C}_R} f(\hat{S}) > f(\hat{R})$, then we are done.
 Hence consider the other case.
 When $\sum_{\hat{S} \in \mathcal{C}_R} f(\hat{S}) = f(\hat{R})$,
 $|E_R^+| \geq 1$ holds by the linear
 independence.
  When 
 $\sum_{\hat{S} \in \mathcal{C}_R} f(\hat{S}) < f(\hat{R})$,
 $|E_R^+| \geq 1+ 2(f(\hat{R}) - \sum_{\hat{S}\in \mathcal{C}_R} f(\hat{S}))$ 
 holds by $x^*(e) < 1/2$, $e \in E$.
 In either case, 
 $|E^+_R| \geq  1+ f(\hat{R}) - \sum_{\hat{S} \in \mathcal{C}_R}
 f(\hat{S})$.
$\hat{R}$ is given a token from each $e\in E_R^+$ because
 $e$ has an end-node $v \in R$ such that $\hat{R}$ is a minimal biset
 with $e \in \delta_E(\hat{R})$ and $v \in R$, and $v\not\in C$ by $R\cap C=\emptyset$.
 Thus $\hat{R}$ has already owned $1+ f(\hat{R}) - \sum_{\hat{S} \in \mathcal{C}_R}
 f(\hat{S})$.
 With the tokens from the children, $\hat{R}$ obtains $1+f(\hat{R})$ tokens.
\end{proof}

We next give a token distribution scheme for trees in which the maximal bisets
are black. Together with Lemma~\ref{lem.whitetree}, this finishes the
proof of Lemma~\ref{lem.noedge-cost.smalle}.

\begin{lemma}\label{lem.blacktree}
 Let $\hat{R} \in \Lset$ be a black biset, and 
 $\Lset'=\{\hat{S} \in \Lset \colon \hat{S} \subseteq \hat{R}\}$.
 We can distribute tokens owned by bisets in $\Lset'$
 so that each biset in $\Lset'$ has at least one token, and 
$\hat{R}$ has at least $2$ tokens when $0 < x(e) < 1/2$ for each
 $e \in E$.
\end{lemma}
\begin{proof}
We show how to rearrange the tokens so that
each biset in $\Lset'$  obtains at least one token,
and $\hat{R}$ obtains at least $2+\gamma-|\Gamma(\hat{R})\cap C| \geq 2$ tokens.
Our proof is by induction on the height of the tree.
If the height is one, then the claim holds because it consists of a
strictly black biset.
Hence suppose that the height is at least two.

Let $\mathcal{B}$ be the set of black children of
$\hat{R}$,
and $\mathcal{W}$ 
be the set of white children of $\hat{R}$.
Apply the induction hypothesis to the subtrees rooted at the black children, 
and Lemma~\ref{lem.whitetree} to the subtrees rooted at the
white children. 
Then each biset below the children has one token, each $\hat{X} \in \mathcal{B}$ has
$2+\gamma-|\Gamma(\hat{X})\cap C|$ tokens, and
each $\hat{Y} \in \mathcal{W}$ has $1+f(\hat{Y}) \geq 2$ tokens.
If $\mathcal{B}=\emptyset$, then $\hat{R}$ is strictly black, and it has already given
$\gamma+2$ tokens. Since this finishes the claim, suppose that
 $\mathcal{B} \neq \emptyset$.
Since each child of $\hat{R}$ needs only one token, 
we can move extra tokens from the children to $\hat{R}$.
The number of tokens $\hat{R}$ obtains is at least
\begin{equation}\label{eq.tokens-subgraph}
|\mathcal{W}|+ \sum_{\hat{X} \in \mathcal{B}} (1+\gamma-|\Gamma(\hat{X})\cap C|).
\end{equation}

Let $\hat{S}$ be an arbitrary biset in $\mathcal{B}$.
A node $v\in \Gamma(\hat{S}) \cap C$ is either in $R$ or
$\Gamma(\hat{R})$.
If $v$ is in $R$, then we are done because $\hat{R}$ is a strictly black biset that owns $v$.
Therefore assume that each $v\in \Gamma(\hat{S}) \cap C$ is in
$\Gamma(\hat{R})$.
This means that $\Gamma(\hat{S}) \cap C \subseteq \Gamma(\hat{R}) \cap
C$, and hence $|\Gamma(\hat{S}) \cap C| \leq |\Gamma(\hat{R}) \cap
C|$.
Hence \eqref{eq.tokens-subgraph} is at least the required number of
tokens if $|\mathcal{W}| \geq 1$, if $|\mathcal{B}| \geq 2$, or if $\Gamma(\hat{S}) \cap C
\subset \Gamma(\hat{R}) \cap C$. 

Let $|\mathcal{B}|=1$, $|\mathcal{W}|=0$, and 
$\Gamma(\hat{S}) \cap C = \Gamma(\hat{R}) \cap C$.
Notice that $\hat{S}$ is the only child of $\hat{R}$ in this case.
It suffices to find one more token for $\hat{R}$.
The linear independence between $\chi_E(\hat{R})$ and
$\chi_E(\hat{S})$ implies that 
at least one of $E^+_R$ and $E^-_R$ is not empty.

Let $e \in E^+_R$.
Then $e$ has an end-node $v$ in $R \setminus S$. If $v \in C$, then
$\hat{R}$ is a strictly black biset that owns $v$, and hence $\hat{R}$
has the required number of tokens in this case as mentioned above.
If $v \not\in C$, then $e$ gives a token to $\hat{R}$.
Thus we are done when $E^+_R\neq \emptyset$.

Let $e'=u'v' \in E^-_R$.
Then $e'$ has an end-node, say $u'$, in $R^+ \setminus S^+$. 
If $u' \in C$, then
$\Gamma(\hat{S}) \cap C = \Gamma(\hat{R}) \cap C$
implies that $u' \in R$, and hence $\hat{R}$ is a strictly black biset.
Let $u' \not\in C$. Then $v' \in C$.
 $\Lset$ has no biset $\hat{Z}$ with $e' \in \delta_E(\hat{Z})$ and $u'
 \in Z$ by the strong laminarity of $\Lset$.
$e' \in \delta_E(\hat{S})$, $v' \in S\cap C$, 
and $\hat{R}$ is the
minimal biset such that $\hat{S} \subset \hat{R}$ and $u'\in R^+$.
Hence $e'$ gives one token to $\hat{R}$ in this case, which completes the proof.
\end{proof}

\section{Proof of Theorem~\ref{t:reductions}} \label{s:reductions}

Here we prove Theorem~\ref{t:reductions}, stating that {\SN} on undirected graphs
admits the following approximation ratios for any integer $\alpha \geq 1$.
\begin{itemize}
\item[\rm (i)]
$O(k^3\log |T|) \cdot (\alpha, \alpha b(v)+k/\alpha)$
for {\DB} {\NC} {\SN}.
\item[\rm (ii)]
$O(k\log k) \cdot (\alpha, \alpha b(v)+k/\alpha)$ for {\DB} {\sf Rooted} {\SN}.
\item[\rm (iii)]
$\frac{1}{\epsilon}O(k \log^2 k) \cdot (\alpha, \alpha b(v)+k/\alpha)$
for {\DB} {\SkCS} with $k \leq (1-\epsilon)|T|$ and $0<\epsilon <1$.
\end{itemize}

Part~(i) follows from Theorem~\ref{t:ElC} and the decomposition
of {\NC} {\SN} into $O(k^3 \log |T|)$ instances of {\ElC} {\SN}
due to Chuzhoy and Khanna~\cite{Chuzhoy2009}.

For proving (ii), we need to explain the algorithm of \cite{Nutov2009b}
for {\sf Rooted} {\SN} without degree-bounds.
By augmentation version, we denote instances of the problem
in which $G$ contains a subgraph $J$ of zero edge cost such that $\kappa(s,v) \geq r(s,v)-1$ for every
$v \in T$.
In \cite{Nutov2009b} it is shown that the augmentation version can be decomposed into
$O(k)$ instances of {\DB} {\fCS} with skew supermodular $f$.
The algorithm for the general version has $k$ iterations.
At iteration $\ell$, one adds to $J$ an edge set that increases the connectivity
by one for each node $v$ such that $\kappa(s,v)=r(s,v)-k+\ell -1$.
After iteration $\ell$ we have $\kappa(s,v) \geq r(s,v)-k+\ell$, hence after $k$ iterations the
solution becomes feasible.
In \cite{Nutov2009b} it is shown that
if the augmentation version admits an algorithm that computes a solution of cost at most $\alpha$ times
the optimal value of the corresponding biset LP relaxation,
then the general version admits ratio $O(\alpha \log k)$.
This is because if $x$ is a feasible solution
to LP relaxation derived from an instance of {\sf Rooted} {\SN},
then $\frac{x}{k-\ell+1}$ is feasible to the LP relaxation derived from the augmentation version.
For the case with degree-bounds, we proceed in the same way.
When we solve an augmentation version instance,
the degree-bounds $b'$ are defined by $b'(v)=\lceil \frac{b(v)}{k-\ell+1} \rceil$ for $v \in B$.
Then we claim that if there exists an
$(\alpha,\beta(b(v)))$-approximation algorithm for
{\DB} {\fCS} with skew supermodular $f$,
then {\DB} {\sf Rooted} {\SN} admits ratio
$O(k \log k) \cdot (\alpha,\beta(b(v)))$.
This and Theorem~\ref{t:ElC'} prove (ii).

We prove (iii).
In the augmentation version of {\SkCS}, the goal is to increase the connectivity between
the terminals from $k-1$ to $k$, namely,
$G$ contains a subgraph $J$ of zero edge cost such that $\kappa(u,v) \geq k-1$ for all $u,v \in T$.
We use a result of \cite{Nutov2011} that the augmentation version of
{\SkCS} with $k \leq (1-\epsilon)|T|$ is decomposed into $\frac{1}{\epsilon}O(\log k)$
instances of augmentation versions of {\sf Rooted} {\SN} with $r(s,v)=k$ for all $v \in T$.
To solve the general version of {\SkCS}, we repeatedly solve $k$ augmentation versions,
at iteration $\ell$ increasing the connectivity between the nodes in $T$ from $\ell-1$ to $\ell$.
As in the rooted case, if $x$ is a feasible solution
to LP relaxation derived from an instance of {\SkCS},
then $\frac{x}{k-\ell+1}$ is feasible to the LP relaxation derived from the augmentation version.
Hence if the augmentation version admits an algorithm that computes a solution of cost at most $\alpha$ times
the optimal value of the corresponding biset LP relaxation,
then the general version admits ratio $O(\alpha \log k)$.
This extends to the degree bounded setting, if at iteration $\ell$ we scale the degree bounds
to $b'(v)=\lceil \frac{b(v)}{k-\ell+1} \rceil$ for each $v \in B$.
Then we claim that if the augmentation version of {\DB} {\sf Rooted} {\SN} admits ratio
$(\alpha, \beta(b(v)))$ then {\DB} {\SkCS} admits ratio $\frac{1}{\epsilon}O(\log^2 k) \cdot (\alpha, \beta(b(v)))$.
By \cite{Nutov2009b} the augmentation version of {\sf Rooted} {\SN} can be decomposed into
$O(k)$ instances of {\fCS} with skew supermodular $f$, and this also extends to the degree bounded setting.
Overall, we obtain that {\DB} {\fCS} with skew supermodular $f$ admits ratio
$(\alpha, \beta(b(v)))$ then {\DB} {\SkCS} admits ratio $\frac{1}{\epsilon}O(k\log^2 k) \cdot (\alpha, \beta(b(v)))$.
This and Theorem~\ref{t:ElC'} prove (iii).

Part~(iii) does not mention the case $k > (1-\epsilon)|T|$.
In this case, compute a minimum cost set of $k$ internally disjoint $(u,v)$-paths for each
pair of $u,v \in T$, and define a solution as the union of these paths.
Note that the $k$ internally disjoint $(u,v)$-paths can be computed by a minimum cost flow algorithm.
The edge cost of this solution is $O(k^2)$ times the optimal, and
the degree of each node is $O(k^2)$ because $|T|=O(k)$.

\section{Proof of Theorem~\ref{t:new-reduction}} \label{s:new-reduction}

We need to describe the algorithm of \cite{Nutov2012} for {\DB} {\kCS}.
The algorithm uses the following procedure due to Khuller and Raghavachari \cite{Khuller1996},
that is also used in the next section.

\vspace*{0.2cm}

\begin{center}
\fbox{
\begin{minipage}{0.960\textwidth}
\begin{description}\setlength{\itemsep}{0pt}
\item[Procedure {\sc External $k$-Out-connectivity}]
\item[Input:] A graph $G=(V,E)$, an integer $k$, and $R\subseteq V$ with $|R|=k$.
\item[Output:] A subgraph $J$ of $G$.
\item[Step 1:]
Let $G'$ be obtained from $G$ by adding a new node $s$ and
all edges between $s$ and $R$, of cost zero each.
\item[Step 2:]
Compute a $k$-outconnected from $s$ spanning subgraph $J'$ of $G'$.
\item[Step 3:]
Return $J=(J' \setminus \{s\})$.
\end{description}
\end{minipage}
}
\end{center}

\vspace*{0.2cm}

Assume that {\DB} {\kOS} admits an $(\alpha,\beta(b(v)))$-approximation algorithm.
For undirected graphs, the algorithm of \cite{Nutov2012} is as follows.

\vspace*{0.2cm}

\begin{center}
\fbox{
\begin{minipage}{0.960\textwidth}
\begin{description}\setlength{\itemsep}{0pt}
\item[Algorithm {\sc Degree Bounded $k$-Connectivity}]
\item[Step 1:]
Apply Procedure {\sc External $k$-Out-connectivity},
where $J'$ is computed using the $(\alpha,\beta(b(v)))$-approximation algorithm for {\DB} {\kOS}
with degree bounds $b'(v)=b(v)+1$ if $v \in R$ and $b'(v)=b(v)$ otherwise.
\item[Step 2:]
Let $F$ be a set of edges on $V$ such that $J \cup F$ is $k$-connected.
\item[Step 3:]
For every $ut \in F$ compute a minimum-cost inclusion-minimal edge-set $I_{ut} \subseteq E \setminus J$
such that $J \cup I_{ut}$ contains $k$ internally disjoint $ut$-paths.
\item[Step 4:]
Return $J \cup I$, where $I=\cup_{ut \in F}I_{ut}$.
\end{description}
\end{minipage}
}
\end{center}

\vspace*{0.2cm}

In the case of directed graphs, Procedure {\sc External $k$-Out-connectivity}
computes a subgraph $J'=J^- \cup J^+$, where:
$J^+$ is $k$-outconnected from $s$ and is computed by the $(\alpha, \beta(b(v)))$-approximation algorithm,
while $J^-$ is a minimum cost subgraph which is $k$-inconnected to $s$, namely,
$\kappa_{J^-}(v,s) \geq k$ for each $v \in V\setminus \{s\}$.

\begin{lemma} [\cite{Nutov2012}] \label{l:reduction-core}
Algorithm {\sc Degree Bounded $k$-Connectivity} has ratio
$(\alpha+|F|,\beta(b(v))+2|F|+kd/2)$ for undirected graphs, and
$(\alpha+1+|F|,\beta(b(v))+k+|F|+kd/2)$ for digraphs,
where $F$ is the edge set computed at Step~2, $d=\max_{v \in V}
 |\delta_F(v)|$ for undirected graphs, and $d=\max_{v \in
 V}|\delta^+_F(v)|$ for digraphs.
\end{lemma}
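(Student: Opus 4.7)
My plan is to establish the cost and degree bounds separately, leveraging the approximation guarantee for external $k$-outconnectivity, the inclusion-minimality of each augmentation $I_{ut}$, and a structural property of externally $k$-connected subgraphs.

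For the cost bound, I would let $H^*$ denote an optimal $k$-connected subgraph of $G$ with cost $\mathrm{opt}$. In the undirected case, $H^*$ together with the zero-cost edges from $s$ to $R$ is $k$-outconnected from $s$ in $G'$, so the $(\alpha,\beta(b(v)))$-approximation at Step~1 of Procedure {\sc External $k$-Out-connectivity} yields $c(J) \le c(J') \le \alpha \cdot \mathrm{opt}$; in the directed case I would additionally obtain $c(J^-) \le \mathrm{opt}$ since $H^*$ is also $k$-inconnected to $s$, giving the extra $+1$ factor. For each $ut \in F$, $H^*$ contains $k$ internally disjoint $ut$-paths, so the minimum-cost inclusion-minimal $I_{ut}$ of Step~3 has cost at most $\mathrm{opt}$; summing over $F$ yields the claimed cost ratios.

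For the degree bound, I would fix $v \in B$ and decompose $\deg_J(v)$ by source. The contribution from $J^+$ is at most $\beta(b(v))$: the choice $b'(v) = b(v)+1$ for $v \in R$ absorbs the unit drop from removing the zero-cost edge $sv$ of $J'$. In the directed case $J^-$ contributes at most $k$ to the out-degree at every vertex, arising from the structural $k$-inconnectivity of the minimum-cost solution. For each $ut \in F$, inclusion-minimality forces every edge of $I_{ut}$ to lie on one of some fixed system $\mathcal{P}$ of $k$ internally disjoint $ut$-paths in $J \cup I_{ut}$; internal disjointness then yields $\deg_{I_{ut}}(v) \le 2$ when $v \notin \{u,t\}$ (resp.\ $\le 1$ for out-degrees in digraphs) and $\le k$ when $v \in \{u,t\}$. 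Summing non-endpoint contributions over $F$ gives at most $2|F|$ in the undirected case and $|F|$ in the directed case.

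The tighter endpoint bound of $kd/2$ follows from a structural lemma I would invoke: an externally $k$-connected subgraph with root set of size $k$ is $\lceil k/2 \rceil$-connected, which I would prove via a pigeonhole argument. Removing any vertex set of size strictly less than $k/2$ leaves more than $k/2$ internally disjoint paths from each of $u$ and $t$ to $R$, forcing the two sets of $R$-endpoints to intersect and preserving $u$-$t$ connectivity. Hence at least $\lceil k/2 \rceil$ of the paths in $\mathcal{P}$ can be routed entirely within $J$ at the endpoint in question, so $\deg_{I_{ut}}(v) \le \lfloor k/2 \rfloor$ whenever $v \in \{u,t\}$; summing over the at most $d$ edges of $F$ incident to $v$ yields an endpoint contribution of at most $kd/2$. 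The main obstacle I anticipate is establishing this structural lemma and its directed analogue combining $J^+$ and $J^-$, along with justifying $\deg^+_{J^-}(v) \le k$ in the directed case.
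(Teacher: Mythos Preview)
The paper does not give its own proof of this lemma; it is quoted verbatim from \cite{Nutov2012}. So there is nothing in the present paper to compare your argument against, and what follows is an assessment of your sketch on its own merits.

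Your cost argument and the non-endpoint degree bound are correct. Your structural lemma---that $J$ obtained from an externally $k$-outconnected $J'$ with $|R|=k$ is $\lceil k/2\rceil$-connected---is also correct, and your pigeonhole proof (more than $k/2$ surviving $u$-fans and $t$-fans must share an $R$-endpoint) is exactly the right idea.

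The gap is in the step ``hence at least $\lceil k/2\rceil$ of the paths in $\mathcal{P}$ can be routed entirely within $J$ at the endpoint, so $\deg_{I_{ut}}(v)\le\lfloor k/2\rfloor$.'' Knowing $\kappa_J(u,t)\ge m:=\lceil k/2\rceil$ does not by itself force any particular system $\mathcal{P}$ of $k$ disjoint paths in $J\cup I_{ut}$ to use $m$ $J$-edges at $u$, and inclusion-minimality only says each edge of $I_{ut}$ is critical, not that it lies on a fixed $\mathcal{P}$. What you actually need is the inequality $|\delta_{I_{ut}}(u)|\le k-\kappa_J(u,t)$, and this requires an uncrossing argument rather than a routing one: since $I_{ut}\neq\emptyset$ forces $\kappa_{J\cup I_{ut}}(u,t)=k$ exactly, each $e\in\delta_{I_{ut}}(u)$ has a tight $u$--$t$ separating biset $\hat S_e$ with $e\in\delta(\hat S_e)$; the tight bisets containing $u$ form a ring family, so their intersection $\hat S$ is tight and is crossed by every edge of $\delta_{I_{ut}}(u)$, whence
\[
|\delta_{I_{ut}}(u)|\ \le\ |\delta_{I_{ut}}(\hat S)|\ =\ \bigl(|\Gamma(\hat S)|+|\delta_{J\cup I_{ut}}(\hat S)|\bigr)-\bigl(|\Gamma(\hat S)|+|\delta_{J}(\hat S)|\bigr)\ \le\ k-\kappa_J(u,t)\ \le\ \lfloor k/2\rfloor.
\]
With this fix the undirected bound goes through. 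For the directed case you correctly flag the two remaining issues: the analogous $\lceil k/2\rceil$-connectivity statement needs both $J^+$ and $J^-$, and the bound $|\delta^+_{J^-}(v)|\le k$ is not automatic for an arbitrary min-cost $k$-inconnected subgraph---it requires choosing $J^-$ appropriately (e.g.\ inclusion-minimal, so that each vertex has out-degree exactly $k$ in the reversed $k$-outconnected subgraph).
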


Let $F$ be an edge set that satisfies the condition at Step~2 of Algorithm
{\sc Degree Bounded $k$-Connectivity}.
In \cite{Kortsarz2003,Nutov2012}, it is shown that there exists 
such an edge set $F$ on $R$.
Moreover, if $F$ is an inclusion-minimal such edge set, then
$F$ is a forest in the undirected case, and 
$F$ contains no alternating cycle
(a cycle such that every two successive arcs have opposite directions)
in the directed case.
The latter property is a known consequence from the undirected and directed Critical Cycle Theorems of
Mader \cite{Mader1972,Mader1985}.
In addition, the latter property implies $|F|\leq 2|R|-1$.
This can be seen as follows. We make a copy $R'$ of $R$, and replace each arc $uv \in F$ by an
undirected edge $uv'$, where $v'$ is the copy of $v$. Then we obtain an
undirected edge set on $R \cup R'$. This edge set is a forest
if $F$ contains no alternating cycle.
Thus $|F| \leq 2|R|-1$.

Therefore, we can find in polynomial time $F$ with
$|F| \leq |R|-1=k-1$ in the undirected case, and
$|F| \leq 2|R|-1=2k-1$ in the directed case.
We improve this by showing that $F$ as above can be converted in polynomial time into an edge set $F'$
such that $|F'|\leq |F|$, $\max_{v \in V}|\delta_{F'}(v)|=O(\sqrt{k})$
for undirected graphs, and 
$\max_{v \in V}|\delta^+_{F'}(v)|=O(\sqrt{k})$
for digraphs.

\subsection{Undirected graphs}

We start by proving the following.

\begin{lemma}\label{l:transform-undirected}
Let $G=(V,E \cup F)$ be a simple $k$-connected undirected graph such that $|\delta_E(v)| \geq p$ for all $v \in V$,
and $G \setminus \{e'\}$ is not $k$-connected for each $e' \in F$.
Let $d=\max\limits_{v \in V} |\delta_F(v)| \geq 4$,
let $u \in V$ with $|\delta_F(u)|=d$, and let $e=ut \in \delta_F(u)$.
Suppose that for every $v \in V$ with $|\delta_F(v)| \leq d-2$ the graph
$G \setminus \{ut\} \cup \{vt\}$ is not $k$-connected. Then
 $d(d+p-k-2)\leq 3p-k+1$.
\end{lemma}
\begin{proof}
$F$ is a forest by Mader's Critical Cycle Theorem for undirected graphs.
Consider the graph $G \setminus \{e\}$ and the biset family
\[
{\cal F}=\{\hat{S} \in {\cal V}\colon  u \in S, t \in V \setminus S^+,
         |\Gamma(\hat{S})|=k-1, \delta_{E \cup F}(\hat{S})=\{e\}\}  \ .
\]
${\cal F}$ is a ring biset family, namely, that
(i) the intersection of the inner parts of the members 
of ${\cal F}$ is non-empty, and 
(ii) $\hat{X} \cap \hat{Y}, \hat{X} \cup \hat{Y} \in {\cal F}$ for any 
$\hat{X},\hat{Y} \in {\cal F}$.
 Indeed, (i) is obvious because
 the inner part of each biset in ${\cal F}$ includes
 $t$.
 (ii) follows from the fact that the functions $|\Gamma(\cdot)|$ and
 $|\delta_{E \cup F}|$ satisfy the submodular inequality
 (the reverse of \eqref{e:super}), and the biset family $\{\hat{S} \in
 {\cal V} \colon u \in S, t \in V\setminus S^+\}$ is closed under union and intersection.
 This implies that ${\cal F}$ has a unique minimal member $\hat{S}$,
and that for every $v \in S$ the graph $G \setminus \{ut\} \cup \{vt\}$ is $k$-connected.
Thus $|\delta_F(v)| \geq d-1$ for every $v \in S$, implying that
$|\delta_{E \cup F}(v)| \geq p+d-1$.

Let $K=\Gamma(\hat{S})$, so $|K|=k-1$.
Let $I$ be the set of edges in $F$ with at least one end-node in $S$.
 Every edge in $I \setminus \{e\}$ has both of its end-nodes in $S \cup K$
 because $\delta_{E \cup F}(\hat{S})=\{e\}$.
In $G$, every $v \in S \setminus \{u\}$ has at least $p+d-1$ neighbors in $(S \setminus\{v\}) \cup K$, implying
$(|S|-1)+(k-1) \geq p+d-1$, so $p+d-k+1 \leq |S|$.
Since $I$ is a forest on a set $S \cup K \cup \{t\}$ of $|S|+k$ nodes, $|I| \leq	|S|+k-1$.
Let $\zeta_I(S)$ be the set of edges in $I$ with both end-nodes in $S$,
so $I$ is a disjoint union of $\delta_I(S)$ and $\zeta_I(S)$.
Hence $|\delta_I(S)|+|\zeta_I(S)|=|I| \leq |S|+k-1$.
On the other hand, $(d-1)|S| \leq \sum_{v \in S} |\delta_I(v)|=|\delta_I(S)|+2|\zeta_I(S)|$.
Summarizing, we have the following:
\begin{eqnarray}
 p+d-k+1                           & \leq & |S|,  \label{e:1} \\
|\delta_I(S)|+|\zeta_I(S)|  & \leq & |S|+k-1,   \label{e:2} \\
|\delta_I(S)|+2|\zeta_I(S)| & \geq & (d-1)|S|.  \label{e:3}
\end{eqnarray}
Subtracting (\ref{e:2}) from (\ref{e:3}) gives $|\zeta_I(S)| \geq (d-2)|S|-k+1$ and thus
$|\delta_I(S)| \leq 2k-2-(d-3)|S|$. Since $|\delta_I(S)| \geq 0$ we get
$(d-3)|S| \leq 2k-2$. Combining with (\ref{e:1}) we get
$$
p+d-k+1 \leq |S| \leq \frac{2k-2}{d-3}.
$$
Multiplying by $d-3$ and rearranging terms we obtain $d (d+p-k-2)\leq 3p-k+1$, as claimed.
\end{proof}

\begin{corollary}\label{cor.reduction-undirected}
Let $G=(V,E \cup F)$ be a simple $k$-connected undirected graph such that
$|\delta_E(v)| \geq k-1$ for all $v \in V$.
Then there exists a polynomial time algorithm that finds
a set $F'$ of edges on $V$ with $|F'| \leq |F|$ such that $G'=(V,E \cup F')$ is $k$-connected
and such that $|\delta_{F'}(v)| \leq \max\left\{3,\frac{3}{2}+\sqrt{2k+\frac{1}{4}}\right\}$ for all
$v \in V$.
\end{corollary}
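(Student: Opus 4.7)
The plan is to reduce the maximum $F$-degree by iterative local swaps, using Lemma~\ref{l:transform-undirected} to certify that the loop can only get stuck once the maximum degree is already small.

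First I would preprocess $F$ by repeatedly deleting any edge that is not critical for the $k$-connectivity of $(V,E\cup F)$; the resulting set has no more edges than before, all of its edges are critical, and the hypothesis $|\delta_E(v)|\geq k-1$ is unaffected since $E$ is untouched. Set $d:=\max_{v\in V}|\delta_F(v)|$ and $n_d:=|\{v\colon |\delta_F(v)|=d\}|$. Then, as long as $d\geq 4$, I would pick any $u$ with $|\delta_F(u)|=d$ and any $e=ut\in\delta_F(u)$, and search in polynomial time for a node $v$ with $|\delta_F(v)|\leq d-2$, $vt\notin E\cup F$, and such that $G\setminus\{ut\}\cup\{vt\}$ remains $k$-connected. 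If such a $v$ exists, perform the swap and reprune non-critical edges, then iterate; the swap leaves both $u$ and $v$ with $F$-degree at most $d-1$, pruning only decreases degrees further, and no other $F$-degree increases, so $(d,n_d)$ strictly decreases lexicographically. Thus the loop terminates in polynomially many iterations with $|F'|\leq|F|$.

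If on the other hand no admissible $v$ exists for the chosen $u$ and $e$ while $d\geq 4$, then the hypotheses of Lemma~\ref{l:transform-undirected} are satisfied with $p=k-1$, yielding
\[
d(d-3) \;=\; d\bigl(d+(k-1)-k-2\bigr) \;\leq\; 3(k-1)-k+1 \;=\; 2k-2.
\]
Solving $d^{2}-3d-(2k-2)\leq 0$ gives $d\leq \tfrac{3}{2}+\sqrt{2k+\tfrac{1}{4}}$, contradicting $d\geq 4$ outside the trivially-bounded regime; hence the loop exits with $d\leq \max\{3,\tfrac{3}{2}+\sqrt{2k+\tfrac{1}{4}}\}$, as claimed.

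The main obstacle is the bookkeeping around the swaps: ensuring that we never introduce parallel edges, that the pruned $F$ still satisfies the inclusion-minimality hypothesis of Lemma~\ref{l:transform-undirected} when that lemma is invoked, and that deciding $k$-connectivity of each candidate swap runs in polynomial time. The first is handled by the explicit restriction $vt\notin E\cup F$, the second by the reprune step that follows every swap, and the third by a standard reduction to $O(kn)$ max-flow computations.
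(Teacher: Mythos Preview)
Your argument is correct and follows the same iterative-swap approach as the paper's proof, which simply says: if the maximum $F$-degree exceeds the bound, Lemma~\ref{l:transform-undirected} with $p=k-1$ guarantees a swap $ut\mapsto vt$ that lowers it, and one repeats. Your version is in fact more careful than the paper's on two points the paper leaves implicit: you reprune after each swap so that the criticality hypothesis of Lemma~\ref{l:transform-undirected} is restored before it is next invoked, and you track the simple-graph constraint $vt\notin E\cup F$ (which, as the proof of Lemma~\ref{l:transform-undirected} shows, is automatic for the $v\in S$ that the lemma actually produces, since $\delta_{E\cup F}(\hat S)=\{ut\}$). Your lexicographic potential $(d,n_d)$ is also a cleaner termination measure than the paper states explicitly.
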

\begin{proof}
 Let $u \in V$ be a node that maximizes $|\delta_F(u)|$.
 Lemma~\ref{l:transform-undirected} with $p=k-1$ implies that
 if $|\delta_F(u)|$ is larger than the
 required value, then
 we can replace an edge $ut \in \delta_F(u)$
 by another edge $vt$ such that $|\delta_F(v)|$ is at most $|\delta_F(u)|-2$,
  keeping the graph being  $k$-connected.
 By repeating this replacement, we can obtain a required edge set $F'$.
\end{proof}

The undirected part of Theorem~\ref{t:new-reduction} follows from
Lemma~\ref{l:reduction-core}, Corollary~\ref{cor.reduction-undirected},
and our ability to find in polynomial time an edge set $F$ with $|F|\leq k-1$
at Step~2 of Algorithm {\sc Degree Bounded $k$-Connectivity}.

\subsection{Digraphs}
We start by proving the directed counterpart of Lemma~\ref{l:transform-undirected}.

\begin{lemma}\label{l:transform-directed}
Let $G=(V,E \cup F)$ be a simple $k$-connected digraph such that 
$|\delta^+_E(v)|\geq p$ for all $v \in V$,
and $G\setminus \{e'\}$ is not k-connected for each $e' \in F$.
Let $d=\max\limits_{v \in V} |\delta^+_F(v)| \geq 4$,
let $u \in V$ with $|\delta^+_F(u)|=d$ and let $e=ut \in \delta^+_F(u)$.
Suppose that for every $v \in V$ with $|\delta^+_F(v)| \leq d-2$ the graph
$G \setminus \{e\} \cup \{vt\}$ is not $k$-connected. Then
 $d(d+p-k-2)\leq 3p-k+2$.
\end{lemma}
\begin{proof}
Consider the graph $G \setminus \{e\}$ and the biset family
\[
{\cal F}=\{\hat{S} \in {\cal V}\colon  u \in S, t \in V \setminus S^+,
         |\Gamma(\hat{S})|=k-1, \delta^+_{E\cup F}(\hat{S})=\{e\}\}  \ .
\]
As in Lemma~\ref{l:transform-undirected}, it can be shown that ${\cal F}$ is a ring biset family,
so ${\cal F}$ has a unique minimal member $\hat{S}$,
and that for every $v \in S$ the graph $G \setminus \{ut\} \cup \{vt\}$ is $k$-connected.
Thus $|\delta^+_F(v)| \geq d-1$ for every $v \in S$,
implying that $|\delta^+_{E \cup F}(v)| \geq p+d-1$ for every $v \in S$.

Let $K=\Gamma(\hat{S})$,
so $|K|=k-1$.
Let $I$ be the set of arcs in $F$ with tail in $S$.
Every edge in $I \setminus \{e\}$ has its head in $S \cup K$.
In $G$, every $v \in S\setminus \{u\}$ has at least $p+d-1$ neighbors in $(S \setminus\{v\}) \cup K$,
 implying $(|S|-1)+(k-1) \geq p+d-1$, so $d-k+p+1 \leq |S|$.
Since $F$ has no alternating cycle,
$I$ is an arc set without alternating cycle on a set $S \cup K \cup \{t\}$ of $|S|+k$ nodes.
This implies that $|I| \leq2(|S|+k)-1$.
On the other hand, $(d-1)|S| \leq \sum_{v \in S} |\delta^+_I(v)| =|I|$.
Summarizing, we have the following:
\begin{eqnarray}
d-k+p+1                          & \leq & |S|,   \label{e:1'} \\
|I| & \leq & 2(|S|+k)-1,  \label{e:2'} \\
|I| & \geq & (d-1)|S|.    \label{e:3'}
\end{eqnarray}
From (\ref{e:2'}) and (\ref{e:3'}) we get $(d-1)|S| \leq 2(|S|+k)-1$ so
$|S|(d-3) \leq 2k-1$. Combining with (\ref{e:1'}) we get
$$
d-k+p+1 \leq |S| \leq \frac{2k-1}{d-3}.
$$
Multiplying by $d-3$ and rearranging terms we obtain
$d(d+p-k-2)\leq 3p-k+2$, as claimed.
\end{proof}

\begin{corollary}\label{cor.reduction-directed}
Let $G=(V,E \cup F)$ be a simple $k$-connected digraph such that
$\min_{v \in V} |\delta^-_E(v)| \geq k-1$
and  $\min_{v \in V} |\delta^+_E(v)| \geq k-1$.
Then there exists a polynomial time algorithm that finds a set $F'$ of arcs on $V$ with $|F'| \leq |F|$
such that $G'=(V,E \cup F')$ is $k$-connected
and such that $|\delta^+_{F'}(v)|$ and $|\delta^-_{F'}(v)|$
are both at most
$\max\left\{3,1.5+\sqrt{2k+1.25}\right\}$ for all $v \in V$.
\end{corollary}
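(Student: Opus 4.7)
The plan is to mimic the proof of Corollary~\ref{cor.reduction-undirected} but to invoke Lemma~\ref{l:transform-directed} in place of Lemma~\ref{l:transform-undirected}, and to run the iterative swap procedure twice: once for out-degrees on $G$, and once for in-degrees by applying the same procedure to the reverse digraph $G^R$.

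First I would initialize $F':=F$ and repeat the following out-degree reduction step: let $u$ be a node maximizing $d:=|\delta^+_{F'}(u)|$ in $F'$. If $d$ already satisfies $d\leq\max\{3,\,1.5+\sqrt{2k+1.25}\}$, we are done with out-degrees. Otherwise, $d\geq 4$ and we may apply Lemma~\ref{l:transform-directed} with $p=k-1$; the conclusion would give $d(d-3)\leq 2k-1$, i.e.\ $d\leq (3+\sqrt{8k+5})/2=1.5+\sqrt{2k+1.25}$, contradicting our assumption on $d$. Hence the hypothesis of the lemma fails, so there exist $t$ with $ut\in F'$ and $v$ with $|\delta^+_{F'}(v)|\leq d-2$ such that the digraph obtained from $(V,E\cup F')$ by replacing $ut$ by $vt$ is still $k$-connected. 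We perform this swap. The swap preserves $|F'|$, preserves every in-degree in $F'$ (only the outgoing endpoint of the arc changes), and while it raises $|\delta^+_{F'}(v)|$ by one it keeps it below $d$. Since no node's out-degree ever exceeds the original maximum and the number of nodes attaining the current maximum strictly decreases with each swap, this loop terminates after polynomially many iterations with every out-degree in $F'$ bounded by $\max\{3,\,1.5+\sqrt{2k+1.25}\}$.

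For the in-degree bound, I would apply the identical procedure to the reverse digraph $G^R=(V,E^R\cup F'^{R})$. This graph is $k$-connected, and the hypothesis $|\delta^-_E(v)|\geq k-1$ translates to $|\delta^+_{E^R}(v)|\geq k-1$, so Lemma~\ref{l:transform-directed} is again applicable with $p=k-1$. Each swap in $G^R$ of the form $ut\to vt$ corresponds in $G$ to replacing an arc $tu\in F'$ by $tv$; this changes only in-degrees (the in-degree of $u$ decreases, that of $v$ increases, all out-degrees are preserved), so the bound on out-degrees established in the first phase is untouched. After this second round of polynomially many swaps, every in-degree in $F'$ is also bounded by $\max\{3,\,1.5+\sqrt{2k+1.25}\}$.

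The main obstacle is really hidden in Lemma~\ref{l:transform-directed} itself (whose proof is already given); the corollary's proof is then a routine iteration, and the only things to check carefully are (i) that out-degree swaps do not disturb in-degrees and vice versa, so the two phases compose cleanly, and (ii) that each iteration is polynomial, which follows because testing $k$-connectivity after a candidate swap reduces to computing $O(n^2)$ max-flows, and there are at most $O(n\cdot d_{\max})$ swaps in each phase.
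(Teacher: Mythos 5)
Your proposal matches the paper's proof: both apply Lemma~\ref{l:transform-directed} with $p=k-1$ to iteratively reduce the maximum out-degree (observing that tail swaps leave in-degrees unchanged), then apply the identical procedure on the reverse digraph to bound in-degrees without disturbing the out-degree bound already achieved. Your computation that $d(d-3)\leq 2k-1$ forces $d\leq 1.5+\sqrt{2k+1.25}$, and your termination and polynomiality observations, are correct and consistent with the paper's argument.
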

\begin{proof}
Let $u \in V$ be a node that maximizes $|\delta^+_F(u)|$.
Lemma~\ref{l:transform-directed} with $p=k-1$ implies that
if $|\delta^+_F(u)|$ is larger than the
required value, then
we can replace an edge $ut \in \delta^+_F(u)$
by another edge $vt$ such that $|\delta^+_F(v)|$ is at most $|\delta^+_F(u)|-2$,
keeping the graph being  $k$-connected.
By repeating this replacement, we can obtain $F''$ that satisfies the
conditions on connectivity and out-degree.
Notice that $|\delta^-_{F''}(v)|=|\delta^-_{F}(v)|$ for all $v \in V$.
Similarly we can decrease the in-degree of a node in $V$
if it is larger than the required value,
by applying Lemma~\ref{l:transform-directed} to the graph obtained
by reversing the directions of all arcs.
This gives the required edge set $F'$.
\end{proof}

The directed part of Theorem~\ref{t:new-reduction} follows from
Lemma~\ref{l:reduction-core}, Corollary~\ref{cor.reduction-directed},
and our ability to find in polynomial time an edge set $F$ with $|F|\leq 2k-1$
at Step~2 of Algorithm {\sc Degree Bounded $k$-Connectivity}.

\section{Proof of Theorem~\ref{t:kCS}} \label{s:kCS}

First, we overview the algorithm of Cheriyan and
V\'egh~\cite{CVegh2012},
and show that it can be extended to the degree-bounded setting.
Then, we improve the bound on the number of nodes.

\subsection{Extension to the degree-bounded setting}
\label{s:CV}

Define a biset function $f^k\colon  \Vset \rightarrow \Zset$
as
\[
 f^k(\hat{S}) =
 \begin{cases}\displaystyle
 k -|\Gamma(\hat{S})| & \mbox{if $S \neq \emptyset$ and $S^+ \neq V$} \\
  0                   & \mbox{otherwise.}
 \end{cases}
\]
By the node-connectivity version of Menger's Theorem,
an undirected graph $(V,F)$ is $k$-connected if and only if
$|\delta_F(\hat{S})| \geq f^k(\hat{S})$ for each $\hat{S} \in \Vset$.
Now suppose that our goal is to augment a given graph $(V,J)$ by a minimum-cost edge set $F$
such that $(V,J \cup F)$ is $k$-connected.
A natural LP relaxation for this problem is as follows (see \cite{Frank1995}).
\begin{equation} \label{e:kconn}
\tau^*=\min\left\{\sum_{e \in E} c(e) x(e) \colon x(\delta_E(\hat{S}))
	    \geq f^k_J(\hat{S}) \text{ for } \forall \hat{S} \in \Vset ,
	    0 \leq x(e) \leq 1 \text{ for } \forall e \in E\right\}
\end{equation}
where $f^k_J(\hat{S})=f^k(\hat{S})-|\delta_J(\hat{S})|$ is the residual biset function of $f^k$. 
We will denote ${\cal S}_J=\{\hat{S} \colon f^k_J(\hat{S})>0\}$. 
Recall also that we denote $\gamma=\max_{f(\hat{S})>0} |\Gamma(\hat{S})|$, and note that $\gamma \leq k-1$
for $f=f^k_J$. In what follows, we assume that $k \geq 2$.

Two bisets $\hat{X}$ and $\hat{Y}$
{\em cross}      if $X \cap Y \neq \emptyset$        and $X^+ \cup Y^+ \neq V$, and
{\em nega-cross} if $X \setminus Y^+ \neq \emptyset$ and $Y \setminus X^+ \neq \emptyset$.
A biset function $f$ is {\em crossing supermodular} if any
$\hat{X},\hat{Y} \in \Vset$ that cross satisfy the supermodular inequality \eqref{e:super}.
$f$ is {\em symmetric} if $f(S,S^+)=f(V \setminus S^+,V \setminus S)$ for any biset $\hat{S}=(S,S^+) \in {\cal V}$.
It is known that the function $f^k_J$ is crossing supermodular and symmetric for any edge set $J$.


A biset family ${\cal F}$ is {\em independence-free} if any $\hat{X},\hat{Y} \in {\cal F}$ cross or nega-cross.
We say that a biset function $f$ is independence-free if the family $\{\hat{S} \in \Vset \colon f(\hat{S})>0\}$ 
is independence-free, and
that an edge set $J$ is independence-free if the biset function $f^k_J$ is independence-free
(namely, if the family ${\cal S}_J$ is independence-free).
The idea of Cheriyan and V\'egh is to find a ``cheap''
independence-free edge set $J$.
They also showed that
if $J$ is independence-free, the iterative rounding algorithm of
\cite{Fleischer2006}
for skew supermodular biset functions
computes an edge set $F \subseteq E$ with $c(F) \leq 2\tau^*$ such that $(V,J \cup F)$ is $k$-connected.

The step for finding such $J$
is based on the following statement.

\begin{lemma}[\cite{CVegh2012}] \label{l:CVegh}
 Let $J'$ be an undirected graph
on a node set $V \cup\{s\}$
 such that $J'$ is $k$-outconnected from $s$. Let $R$
be the set of neighbors of $s$ in $J'$, and let $J=J' \setminus\{s\}$.
Let $U=\bigcup\{S \colon \hat{S} \in {\cal S}_J, |S| \leq k-1\}$.
Then $|U| \leq |R|k^2 (k-1)$.
Furthermore, if $|V| \geq |U|+k$, then there exists a polynomial time algorithm that
given an edge set $E$ on $V$ with costs returns one of the following:
\begin{itemize}
\item[{\rm (i)}]
An edge set $F \subseteq E$ with $c(F) \leq 2\tau^*$ such that $J \cup F$ is $k$-connected.
\item[{\rm (ii)}]
The set $U$.
\end{itemize}
\end{lemma}

The algorithm constructs an edge set that corresponds to
$J'$ in Lemma~\ref{l:CVegh}
by applying Procedure {\sc External $k$-Out-connectivity} from the beginning of 
Section~\ref{s:new-reduction}.
Frank and Tardos \cite{Frank1989} gave a polynomial-time algorithm for computing a subgraph that is spanning $k$-outconnected from a root node in a directed graph.
This implies a $2$-approximation algorithm for the same problem in undirected graphs \cite{Khuller1996};
then it computes a subgraph of cost at most $2\tau^*$.
Procedure {\sc External $k$-Out-connectivity} uses this
$2$-approximation algorithm in its Step~2.

The algorithm of Cheriyan and V\'egh has four steps.
At every step, a certain edge set of cost at most $2\tau^*$ is computed.
If the algorithm terminates at Step~2, then it returns the union of the edge-sets computed at Steps~1 and 2,
of overall cost at most $4\tau^*$.
Else, the algorithm returns the union of the edge-sets computed at Steps~1,~3, and 4, 
of overall cost at most $6\tau^*$.

\vspace*{0.2cm}

\begin{center}
\fbox{
\begin{minipage}{0.960\textwidth}
\begin{description}\setlength{\itemsep}{0pt}
\item[Algorithm of Cheriyan and V\'egh]
\item[Step 1:]
Compute a subgraph $J_{CV}$ of $G$ 
by applying Procedure {\sc External $k$-Out-connectivity}
for some $R \subseteq V$ with $|R|=k$.
\item[Step 2:]
Apply the algorithm from Lemma~\ref{l:CVegh}.
If the algorithm returns an edge set $F$ as in Lemma~\ref{l:CVegh}(i)
then return $J_{CV} \cup F$ and STOP.
\item[Step 3:]
If the algorithm from Lemma~\ref{l:CVegh} returns $U_{CV}$,
then apply Procedure {\sc External $k$-Out-connectivity} for some $R
	   \subseteq V \setminus U_{CV}$
	   with $|R|=k$,
and add the computed edge set 
 to $J_{CV}$.
(Then, the graph $J_{CV}$ is independence-free.)
\item[Step 4:]
Apply the iterative rounding algorithm of \cite{Fleischer2006}
to compute an edge set $F \subseteq E$ 
such that $J_{CV} \cup F$ is $k$-connected.
\end{description}
\end{minipage}
}
\end{center}

\vspace*{0.2cm}
Step~3 of this algorithm needs a condition $|V|\geq |U_{CV}|+k$
to find $R$. Lemma~\ref{l:CVegh} shows that $|U_{CV}|\leq k^3(k-1)$, and hence
$n\geq k^3(k-1)+k$ suffices for guarantee the condition.

The algorithm can be extended to the degree bounded setting as follows.
At Steps 1 and 3, we apply Procedure {\sc External $k$-Out-connectivity}
with degree bounds $b'(v)=b(v)+1$ if $v \in R$ and $b'(v)=b(v)$ otherwise,
using our algorithm for undirected {\DB} {\kOS}.
At Steps 2 and 4, we use our algorithm for {\DB} {\fCS} with skew supermodular $f$.
If $|V| \geq |U|+k$, then following~\cite{CVegh2012},
we can design a polynomial time algorithm that returns either the set $U$,
or an edge set $F \subseteq E$ such that $J \cup F$ is $k$-connected,
within the same ratio as our algorithm for {\DB} {\fCS} with skew
supermodular $f$ (with $\gamma=k-1$).
These give $(12,8b(v)+O(k))$-approximation algorithm for
 {\DB} {\sf Undirected} {\kCS}.



\subsection{Improving the bound on the number of nodes}

In the rest, we improve the bound on $n$ as described in Theorem~\ref{t:kCS}.
For the degree-bounded setting,
we simply improve the statement $|U| \leq |R| k^2(k-1)$ $(=k^3(k-1))$ in
Lemma~\ref{l:CVegh}
to $|U| \leq 2k(k-1)(k-0.5)$.
Since the last claim in Lemma~\ref{l:CVegh} requires $|V|\geq |U|+k$,
this improvement proves our claim.

For the setting without degree bounds, we slightly modify the algorithm
of Cheriyan and V\'egh.  
Aulet\-ta~et~al.\ \cite{Auletta1999} gave a procedure
for computing a spanning
subgraph $J'$ of an undirected graph $G$ such that $J'$ is $k$-outconnected from some node
$r$ in $G$, $|\delta_{J'}(r)|=k$, and $c(J') \leq 2 \tau^*$ (this procedure
does not apply in the degree bounded setting, if we care about the
cost).  We apply this procedure at Step~1, instead of the {\sc External
$k$-Out-connectivity} procedure to obtain a subgraph $J'$ (instead of
$J_{CV}$).
Define $U'$ as $\bigcup\{S\colon  \hat{S} \in {\cal S}_{J'},|S| \leq
k-1\}$.
We apply Steps~2, 3, and 4 with $J'$ and $U'$ instead of $J_{CV}$ and $U_{CV}$.
We will prove that $|U'| +k \leq k(k-1)(k-1.5)+k$ holds,
and hence we can weaken the assumption on $|V|$ to $|V| \geq k(k-1)(k-1.5)+k$.

Now we describe the main result in this subsection.

\begin{lemma} \label{l:main}
Let $J'$ be an undirected graph such that $J'$ is $k$-outconnected from some node $s$ with 
$|\delta_{J'}(s)|=k$ and let $J=J' \setminus\{s\}$. Let
$U' =\bigcup\{S\colon  \hat{S} \in {\cal S}_{J'},|S| \leq p\}$
and
$U=\bigcup\{S\colon  \hat{S} \in {\cal S}_J,|S| \leq p\}$.
Then $|U'| \leq pk(k-1.5)$ and $|U| \leq 2pk(k-0.5)$.
In particular, for $p=k-1$, 
$$|U'| +k \leq k(k-1)(k-1.5)+k \ \ \ \mbox{and} \ \ \ |U| +k \leq 2k(k-1)(k-0.5)+k \ .$$
\end{lemma}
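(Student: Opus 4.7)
The plan is to exploit the $k$-outconnectivity of $J'$ from $s$ together with $|\delta_{J'}(s)|=k$, so that $s$ has exactly $k$ distinct neighbors forming a set $R\subseteq V\setminus\{s\}$ of size $k$. The first structural observation is that every deficient biset $\hat{S}\in{\cal S}_{J'}$ must have $s\in\Gamma(\hat{S})$: if instead $s\in V\setminus S^+$, then applying Menger's theorem to $\kappa_{J'}(s,v)\geq k$ for any $v\in S$ forces $|\delta_{J'}(\hat{S})|+|\Gamma(\hat{S})|\geq k$, contradicting $f_{J'}(\hat{S})>0$; the case $s\in S$ reduces to the previous by passing to the complement biset $\overline{\hat{S}}=(V\setminus S^+,V\setminus S)$, which has the same $f^k$-value and the same cut size.

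The second step derives a lower bound on $|S\cap R|$ for small deficient bisets. For $\hat{S}\in{\cal S}_{J'}$ with $|S|\leq p$ and a witness $v\in S\setminus R$, I will exhibit the explicit $s$--$v$ separator $(\Gamma(\hat{S})\setminus\{s\})\cup\delta_{J'}(\hat{S})\cup\{sr:r\in S\cap R\}$: after deleting its nodes and edges from $J'$, every possible first edge of an $s$--$v$ path is blocked (either it leads into $\Gamma(\hat{S})\setminus\{s\}$, or it uses an $sr$-edge into $S$, or it exits $S^+$ but cannot re-enter since both $\Gamma(\hat{S})\setminus\{s\}$ and $\delta_{J'}(\hat{S})$ are gone). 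By Menger, its total size $(|\Gamma(\hat{S})|-1)+|\delta_{J'}(\hat{S})|+|S\cap R|$ is at least $k$, which together with the deficiency bound $|\Gamma(\hat{S})|+|\delta_{J'}(\hat{S})|\leq k-1$ yields $|S\cap R|\geq 2$. The boundary case $S\subseteq R$ (where no witness $v\in S\setminus R$ exists) has $|S|\leq k$ and contributes only lower-order terms.

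For the main counting, for each $r\in R$ define $U'_r=\bigcup\{S:r\in S,\ \hat{S}\in{\cal S}_{J'},\ |S|\leq p\}$, so that $U'=\bigcup_{r\in R}U'_r$. The core technical step is the per-neighbor bound $|U'_r|\leq p(2k-3)$, proved by an uncrossing argument: the crossing supermodularity of $f_{J'}$ lets me replace any two crossing small deficient bisets through $r$ by their intersection and union while preserving deficiency, and iterating extracts a laminar subfamily still covering $U'_r$; the constraint $|\Gamma|\leq k-1$ (with one slot already consumed by $s$) bounds the cardinality of this laminar family and yields the stated per-$r$ bound. Combining with Step~2, every $v\in U'$ lies in $U'_r$ for at least two distinct $r\in R$, so double counting gives
\[
|U'|\leq \tfrac{1}{2}\sum_{r\in R}|U'_r|\leq \tfrac{k}{2}\cdot p(2k-3)=pk(k-1.5).
\]

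The bound $|U|\leq 2pk(k-0.5)$ for $J=J'\setminus\{s\}$ is proved by the same scheme, but the Menger argument is unavailable since $s$ has been deleted; Lemma~\ref{l:R} only ensures $S\cap R\neq\emptyset$ (rather than $|S\cap R|\geq 2$), so double counting loses a factor of $2$, and the analogous per-$r$ bound becomes $|U_r|\leq p(2k-1)$ since $\Gamma$ no longer contains $s$ and so has one extra ``slot.'' Summing over $r\in R$ gives $|U|\leq pk(2k-1)=2pk(k-0.5)$. The main obstacle is the uncrossing step establishing the per-$r$ bounds with the precise constants $2k-3$ and $2k-1$: one must organize the small deficient bisets through a common $R$-node into a laminar structure using crossing supermodularity while carefully accounting for how the boundary-size constraint $|\Gamma|\leq k-1$ interacts with the role of $s$ (which differs between $J'$ and $J$). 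This combinatorial accounting is where the improvement over the Cheriyan--Vegh bound $|R|k^2(k-1)=k^3(k-1)$ is obtained.
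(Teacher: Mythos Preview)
Your Steps 1 and 2 are correct and essentially match the paper's Lemma~\ref{l:R'}, and the double-counting inequality $|U'|\leq \tfrac12\sum_{r\in R}|U'_r|$ is valid. The genuine gap is the per-neighbor bound $|U'_r|\leq p(2k-3)$ (and its analogue $|U_r|\leq p(2k-1)$), where your sketch breaks down in several places. First, crossing supermodularity of $f_{J'}$ gives only $f_{J'}(\hat X\cap\hat Y)+f_{J'}(\hat X\cup\hat Y)\geq f_{J'}(\hat X)+f_{J'}(\hat Y)>0$, so at most one of $\hat X\cap\hat Y,\hat X\cup\hat Y$ is guaranteed deficient, not both; you cannot freely ``replace two by their uncrossed pair while preserving deficiency.'' Second, even when the union is deficient, its inner part may exceed $p$, so you leave the family ${\cal S}_{J'}^p$. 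Third, and most damaging, once you force the bisets through $r$ into a laminar family they form a \emph{chain} (all inner parts contain $r$), and then $\bigcup_i S_i$ equals the inner part of the largest member; the boundary bound $|\Gamma|\leq k-1$ says nothing about the length of such a chain, nor about the size of this largest inner part. So your laminarity route neither bounds the number of bisets nor their union by $p(2k-3)$.

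The paper does \emph{not} work per neighbor. It proves a single global bound (Lemma~\ref{l:posi}): for any weakly posi-uncrossable family ${\cal F}$, $\bigl|\bigcup_{\hat S\in{\cal F}}S\bigr|\leq p(2\gamma+1)\nu_{\cal F}$, where $\nu_{\cal F}$ is the maximum number of pairwise strongly disjoint members. The proof takes a minimum covering subfamily, passes to inclusion-minimal witnesses $\hat C_i$, and builds an auxiliary digraph with an arc $\hat C_i\to\hat C_j$ when $v_i\in\Gamma(\hat C_j)$; in-degrees are at most $\gamma$, so the underlying graph is $2\gamma$-degenerate and $(2\gamma+1)$-colorable, and each color class is pairwise strongly disjoint. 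The $R$-intersection fact $|S\cap R|\geq 2$ enters only to bound $\nu_{{\cal S}_{J'}^p}\leq\lfloor k/2\rfloor$ (strongly disjoint bisets have disjoint inner parts, hence disjoint $\geq 2$-element traces on $R$), and $s\in\Gamma(\hat S)$ lets one take $\gamma\leq k-2$ after deleting $s$ from all boundaries. Plugging in gives $|U'|\leq p(2k-3)\lfloor k/2\rfloor\leq pk(k-1.5)$. For $J$, Lemma~\ref{l:R} gives only $|S\cap R|\geq 1$, so $\nu\leq k$ and $\gamma\leq k-1$, yielding $|U|\leq p(2k-1)k$. The key technical idea you are missing is this posi-uncrossing/degeneracy argument on the \emph{full} family; the subfamily ${\cal F}_r=\{\hat S:r\in S\}$ is itself not weakly posi-uncrossable (since $r\notin X\setminus Y^+$ whenever $r\in Y$), which is another reason the per-$r$ route does not go through directly.
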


Let us say that a biset family ${\cal F}$ is {\em weakly nega-uncrossable}
if for any $\hat{X},\hat{Y} \in {\cal F}$ with
$X \setminus Y^+,Y\setminus X^+ \neq \emptyset$, one of the bisets
$\hat{X} \setminus \hat{Y}, \hat{Y} \setminus \hat{X}$ is in ${\cal F}$.
We have the following lemma on
weakly nega-uncrossable families.

\begin{lemma}\label{l:posi-uncrossable}
If $f$ is crossing supermodular 
and symmetric, then the biset family ${\cal F}=\{\hat{S} \colon f(\hat{S})>0\}$ is
weakly nega-uncrossable.
\end{lemma}
\begin{proof}
Suppose that $\hat{X},\hat{Y} \in {\cal F}$
satisfy
$X \setminus Y^+ \neq \emptyset$ and $Y\setminus X^+ \neq \emptyset$.
Let $\hat{X}'$ be $(V\setminus X^+,V\setminus X)$.
Then, $f(\hat{X}')+f(\hat{Y})=f(\hat{X})+f(\hat{Y}) >0$ holds because $f$ is symmetric.
 Since $\hat{X}'$ and $\hat{Y}$ are crossing, we have
$f(\hat{X}')+f(\hat{Y}) \leq f(\hat{X}' \cap \hat{Y}) + f(\hat{X}'\cup \hat{Y})$
 by the crossing supermodularity of $f$.
 Note that
 $\hat{X}' \cap \hat{Y}=\hat{Y}\setminus \hat{X}$ and
  $\hat{X}' \cup \hat{Y}=\hat{X}\setminus \hat{Y}$.
Thus combining these gives
$f(\hat{X}\setminus \hat{Y})+f(\hat{Y}\setminus \hat{X}) >0$, and hence $f(\hat{X} \setminus \hat{Y}) >0$ or
$f(\hat{Y}\setminus \hat{X}) >0$ holds.
\end{proof}

Two bisets $\hat{X}$ and $\hat{Y}$ are {\em strongly disjoint} if
$\hat{X} \setminus \hat{Y}=\hat{X}$ or $\hat{Y} \setminus \hat{X}=\hat{Y}$
(note that this is equivalent to
$\hat{X} \setminus \hat{Y}=\hat{X}$ and $\hat{Y} \setminus \hat{X}=\hat{Y}$;
in particular, $X \subseteq V \setminus Y^+$ and $Y \subseteq V \setminus X^+$).
Given a biset family ${\cal F}$, let $\nu_{\cal F}$ denote the
maximum number of pairwise strongly disjoint bisets in ${\cal F}$.

\begin{lemma} \label{l:posi}
Let ${\cal F}$ be a weakly nega-uncrossable biset family.
Denote $p=\max_{{\hat S} \in {\cal F}}|S|$
and
$\gamma=\max_{\hat{S} \in {\cal F}} |\Gamma(\hat{S})|$.
Then $\left|\bigcup_{{\hat S} \in {\cal F}} S \right| \leq p (2\gamma+1) \nu_{\cal F}$.
\end{lemma}
\begin{proof}
Let ${\cal F'}=\{\hat{S}_1,\hat{S}_2,\ldots,\hat{S}_\ell\}$ be a minimum size sub-family of ${\cal F}$ such that
$\bigcup_{i=1}^\ell S_i = \bigcup_{{\hat S} \in {\cal F}} S$.
We prove that $|{\cal F}'| \leq (2\gamma+1) \nu_{\cal F}$.
For every $\hat{S}_i \in {\cal F}'$, there is $v_i \in S_i$ such that $v_i \notin S_j$
for every $j \neq i$. Among all bisets in ${\cal F}$ that are contained in $\hat{S}_i$
and that includes $v_i$ in its inner-part, let $\hat{C}_i$ be a minimal one.
Since ${\cal F}$ is weakly nega-uncrossable,
the minimality of $\hat{C}_i$ implies that one of the following must
 hold for any distinct $\hat{C}_i$ and $\hat{C}_j$:
\begin{itemize}
\item
$v_i \in \Gamma(\hat{C}_j)$ or $v_j \in \Gamma(\hat{C}_i)$;
\item
$\hat{C}_i=\hat{C}_i \setminus \hat{C}_j$ or $\hat{C}_j=\hat{C}_j \setminus \hat{C}_i$,
namely, $\hat{C}_i, \hat{C}_j$ are strongly disjoint.
\end{itemize}
Construct an auxiliary directed graph ${\cal J}$ on node set
${\cal C}=\{\hat{C}_1,\hat{C}_2,\ldots,\hat{C}_\ell\}$.
Add an arc $\hat{C}_i\hat{C}_j$ if $v_i \in \Gamma(\hat{C}_j)$.
The in-degree in ${\cal J}$ of a node $\hat{C}_i$ is at most
$|\Gamma(\hat{C}_i)| \leq \gamma$.
This implies that every subgraph of the underlying graph
of ${\cal J}$ has a node of degree $\leq 2\gamma$.
A graph is $d$-degenerate if every subgraph
of it has a node of degree $\leq d$. It is known that any $d$-degenerate graph is $(d+1)$-colorable.
Hence ${\cal J}$ is $(2\gamma+1)$-colorable, so its node set can be partitioned into
$2\gamma+1$ independent sets.
The members of each independent set are pairwise strongly disjoint, hence their number is at most $\nu_{\cal C}$.
Consequently, $\ell \leq (2\gamma+1) \nu_{\cal C} \leq (2\gamma+1) \nu_{\cal F}$, as claimed.
\end{proof}

To prove Lemma~\ref{l:main}, we need the following lemma, which is also
used for proving Lemma~\ref{l:CVegh} in \cite{CVegh2012}.

\begin{lemma}[\cite{Khuller1996}] \label{l:R}
Let $J'$ be an undirected graph such that $J'$ is $k$-outconnected from some node $s$, let $R$
be the set of neighbors of $s$ in $J'$, and let $J=J' \setminus\{s\}$.
Then $S \cap R \neq \emptyset$ for any $\hat{S} \in {\cal S}_J$.
\end{lemma}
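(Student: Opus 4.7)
The plan is to relate the $k$-outconnectivity of $J'$ from $s$ to the biset cuts of $J$ via Menger's theorem, by viewing $\hat{S}$ as a biset in the extended node set $V' = V \cup \{s\}$.

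First, fix any $\hat{S} = (S, S^+) \in {\cal S}_J$. By definition of ${\cal S}_J$ we have $S \neq \emptyset$, $S^+ \neq V$, and $f_J(\hat{S}) = k - |\Gamma(\hat{S})| - |\delta_J(\hat{S})| > 0$. Regard $\hat{S}$ as a biset on $V'$; since $s \notin V \supseteq S^+$, we have $s \in V' \setminus S^+$, while $S \neq \emptyset$ provides some $v \in S$.

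Next, apply the undirected node-connectivity Menger's theorem in $J'$: since $J'$ contains $k$ internally node-disjoint $s$-to-$v$ paths, and each such path must either pass through a node of $\Gamma(\hat{S})$ or use an edge of $\delta_{J'}(\hat{S})$ (these being the only ways to reach $S$ from $V' \setminus S^+$ without revisiting a node), one obtains
\[
|\Gamma(\hat{S})| + |\delta_{J'}(\hat{S})| \;\geq\; k.
\]

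Now decompose $\delta_{J'}(\hat{S})$ according to whether an edge is incident to $s$: an edge of $J'$ incident to $s$ lies in $\delta_{J'}(\hat{S})$ iff its other endpoint (which must be in $R$, by construction of $J'$) lies in $S$, since $s \in V' \setminus S^+$. Edges of $J'$ not incident to $s$ belong to $J = J' \setminus \{s\}$ and contribute exactly $\delta_J(\hat{S})$. Hence
\[
|\delta_{J'}(\hat{S})| \;=\; |\delta_J(\hat{S})| + |S \cap R|.
\]

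Finally, combine the two displays:
\[
|S \cap R| \;\geq\; k - |\Gamma(\hat{S})| - |\delta_J(\hat{S})| \;=\; f_J(\hat{S}) \;>\; 0,
\]
so $S \cap R \neq \emptyset$. There is no substantive obstacle here; the only point that warrants care is the correct bookkeeping in step~4, where one must recognize that the edges of $J'$ incident to $s$ that cross $\hat{S}$ are precisely $\{sr : r \in S \cap R\}$, so the ``extra'' capacity of the cut beyond $|\delta_J(\hat{S})| + |\Gamma(\hat{S})|$ is exactly $|S \cap R|$.
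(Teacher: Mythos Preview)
Your proof is correct. The paper does not actually supply its own proof of this lemma; it is attributed to Khuller and Raghavachari~\cite{Khuller1996} and stated without proof. Your argument is the standard Menger-based one and matches in spirit the proof-sketch the paper gives for the closely related Lemma~\ref{l:R'}: compare the cut $\Gamma(\hat{S}) \cup \delta_{J'}(\hat{S})$ against the $k$ internally disjoint $s$--$v$ paths, then separate the contribution of edges incident to $s$ to isolate $|S \cap R|$.
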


The following version of Lemma~\ref{l:R} is proved in \cite{Auletta1999};
we provide a proof-sketch for completeness of exposition.

\begin{lemma} \label{l:R'}
Let $J'$ be an undirected graph such that $J'$ is $k$-outconnected from some node $s$ and let $R$
be the set of neighbors of $s$ in $J'$.
Then $s \in \Gamma(\hat{S})$ and $|S \cap R| \geq 2$ for any $\hat{S} \in {\cal S}_{J'}$.
Hence $\nu_{{\cal S}_{J'}} \leq \left\lfloor |R|/2 \right\rfloor$.
\end{lemma}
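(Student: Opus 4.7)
The plan is to prove the three assertions sequentially by Menger-style counting, using the equivalent reformulation that $\hat{S}\in{\cal S}_{J'}$ is equivalent to $|\Gamma(\hat{S})|+|\delta_{J'}(\hat{S})|\le k-1$.

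First I would show $s\in\Gamma(\hat{S})$ by contradiction. If $s\in S$, pick any $v\in V\setminus S^+$ (possible since $S^+\ne V$) and take $k$ internally (hence edge-)disjoint paths from $s$ to $v$ in $J'$, guaranteed by $k$-outconnectivity. Each such path must cross from $S^+$ to $V\setminus S^+$ along some edge whose $S^+$-endpoint lies either in $S$ (contributing a distinct edge of $\delta_{J'}(\hat{S})$) or in $\Gamma(\hat{S})$ (contributing a distinct internal node of the path, distinct since $s\notin\Gamma(\hat{S})$ in this case and $v\notin S^+$). This yields $|\Gamma(\hat{S})|+|\delta_{J'}(\hat{S})|\ge k$, contradicting $\hat{S}\in{\cal S}_{J'}$. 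The case $s\in V\setminus S^+$ is handled symmetrically using paths from some $v\in S$ to $s$, so $s\in\Gamma(\hat{S})$.

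Next I would show $|S\cap R|\ge 2$. Fix any $v\in S$ (so $v\ne s$ by the first part) and take $k$ internally disjoint paths $P_1,\dots,P_k$ from $s$ to $v$ in $J'$, whose first neighbors $u_1,\dots,u_k\in R$ are all distinct. For each $i$, let $x_i$ be the first node of $P_i$ lying in $S$ and $y_i$ its predecessor on $P_i$, so $y_i\notin S$. Classify the paths by the location of $y_i$: (A) $y_i=s$, in which case $x_i=u_i\in S\cap R$; (B) $y_i\in\Gamma(\hat{S})\setminus\{s\}$, contributing an internal node of $P_i$ in $\Gamma(\hat{S})\setminus\{s\}$; (C) $y_i\in V\setminus S^+$, contributing the edge $y_ix_i\in\delta_{J'}(\hat{S})$. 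Internal and edge disjointness of the paths make the witnesses within each case distinct, and the three universes $S\cap R$, $\Gamma(\hat{S})\setminus\{s\}$, and $\delta_{J'}(\hat{S})$ are pairwise disjoint. Summing the three case sizes gives
\[
k \;\le\; |S\cap R| + \bigl(|\Gamma(\hat{S})|-1\bigr) + |\delta_{J'}(\hat{S})| \;\le\; |S\cap R| + k - 2,
\]
so $|S\cap R|\ge 2$. For the third assertion, if $\hat{S}_1,\dots,\hat{S}_m\in{\cal S}_{J'}$ are pairwise strongly disjoint then $S_i\subseteq V\setminus S_j^+\subseteq V\setminus S_j$ for $i\ne j$, so the inner parts are pairwise disjoint; combined with $|S_i\cap R|\ge 2$ for each $i$ this gives $2m\le|R|$ and thus $m\le\lfloor|R|/2\rfloor$.

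The main obstacle I anticipate is the bookkeeping in the second step: one must verify that the three classes of witnesses are genuinely distinct across paths and that they live in disjoint combinatorial universes. The role of the first step is crucial here; it is $s\in\Gamma(\hat{S})$ that allows us to exclude $s$ from the Case~B count, sharpening the naive bound $|S\cap R|\ge 1$ (one would otherwise obtain) to $|S\cap R|\ge 2$.
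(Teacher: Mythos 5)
Your proof is correct and follows essentially the same approach as the paper: a Menger-style counting argument over $k$ internally disjoint $s$–$v$ paths, classifying each path by whether its first crossing of $\hat S$ witnesses an edge of $\delta_{J'}(\hat S)$, a boundary node other than $s$, or a direct hop from $s$ to a node of $S\cap R$. The paper phrases step two more tersely ("at most $|S\cap R|$ paths avoid $C\setminus\{s\}$") but the underlying counting is identical, and your more explicit bookkeeping fills in the disjointness details the paper leaves implicit.
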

\begin{proof}
Let $\hat{S} \in {\cal S}_{J'}$.
If $s \notin \Gamma(\hat{S})$ then $s \in S$ or $s \in V \setminus S^+$.
Since $S \neq \emptyset$ and $S^+ \neq V$, and since $J'$ is $k$-outconnected from $s$,
we easily obtain a contradiction to Menger's Theorem.
 We prove that $|S \cap R| \geq 2$. Let $v \in S$,
 $C=\delta_{J'}(\hat{S}) \cup \Gamma(\hat{S})$,
 and 
$\ell=|\delta_{J'}(\hat{S})| +|\Gamma(\hat{S})|$ ($\leq k-1$).
Consider a set of $k$ internally disjoint paths from $s$ to $v$ in $J'$.
At most $|S \cap R|$ of these paths may not contain a member in $C$.
This implies that each of the other at least $k-|S \cap R|$ paths
must contain an element from $C \setminus \{s\}$.
Hence $\ell-1 \geq k-|S \cap R|$. This implies $|S \cap R| \geq k-(\ell-1) \geq 2$.
\end{proof}

Now let us prove Lemma~\ref{l:main}.
Let ${\cal S}_J^p=\{\hat{S}\colon \hat{S} \in {\cal S}_J, |S| \leq p\}$.
Since ${\cal S}_{J'}$ and ${\cal S}_J$ are weakly nega-uncrossable by
Lemma~\ref{l:posi-uncrossable}, so are
${\cal S}_{J'}^p$ and ${\cal S}_J^p$.
We have $|R|\leq k$ because $\delta_{J'}(s)=k$.
Thus Lemmas~\ref{l:R} and \ref{l:R'} imply
$\nu_{{\cal S}_J^p} \leq k$ and $\nu_{{\cal S}_{J'}^p} \leq \lfloor k/2 \rfloor$, respectively.
Applying Lemma~\ref{l:posi} to ${\cal S}_J^p$, we obtain
\[
 |U|   \leq  p(2(k-1)+1) \nu_{{\cal S}_J^p}     \leq p(2k-1)k = 2pk(k-0.5).
\] 
When we apply Lemma~\ref{l:posi} to ${\cal S}_{J'}^p$,
we may assume $\gamma \leq k-2$.
This is because, by Lemma \ref{l:R'}, $s \in \Gamma(\hat{S})$ holds for any $\hat{S} \in {\cal S}_{J'}$,
and hence we can apply Lemma~\ref{l:posi} after removing $s$ from the boundary of every biset in ${\cal S}^p_{J'}$.
Hence we have
\begin{eqnarray*}
|U'|  \leq  p(2(k-2)+1) \nu_{{\cal S}_{J'}^p} \leq p(2k-3)\lfloor k/2 \rfloor  \leq
              pk(k-1.5)
\end{eqnarray*}
This concludes the proof of Lemma~\ref{l:main}.

The claim on the degree-bounded setting of Theorem~\ref{t:kCS} is
immediate from Section~\ref{s:CV} and Lemma~\ref{l:main}.
As for the setting without degree bounds,
we have to verify that the above modification of the algorithm makes no
effect on the claim. More specifically, we need to show that the algorithm
claimed in Lemma~\ref{l:CVegh} exists even if $U$ is replaced by $U'$
and $J$ is replaced by $J'$.
We note that the proof in \cite{CVegh2012} still holds even after
the replacement. Since it is not our main focus,
we leave it to the readers.

\section{Conclusion}\label{sec.conclusion}
We have presented iterative rounding algorithms and decomposition results for various {\DB} {\SN} problems.
We introduced several novel ideas in the field, which may be applicable
also to {\NC} {\SN} problems without degree bounds.
We believe that this is an important direction for future work.

\section*{Acknowledgments}
A part of this work was done when the third author was a visiting professor at
RIMS, Kyoto University in Spring 2011 and
when the first
author was visiting Carnegie Mellon University in 2011-12, supported by
Kyoto University Foundation.
The first author was also supported by
Japan Society for the Promotion of Science (JSPS), Grants-in-Aid for
Young Scientists (B) 25730008.
The third author was supported in part by NSF grants CCF 1143998 and CCF 1218382.
The authors thank anonymous referees for their careful reading and
insightful comments.
The presentation of this paper was improved by their suggestions.

\end{document}